\theoremstyle{plain}
\newtheorem{theorem}{Theorem}[section]
\newtheorem{lemma}[theorem]{Lemma}
\newtheorem{proposition}[theorem]{Proposition}
\newtheorem{corollary}[theorem]{Corollary}
\theoremstyle{definition}
\newtheorem{definition}[theorem]{Definition}
\newtheorem{remark}[theorem]{Remark}
\newtheorem{assumption}[theorem]{Assumption}
\newcommand{\N}{\mathbb{N}}
\newcommand{\R}{\mathbb{R}}
\newcommand{\C}{\mathbb{C}}
\newcommand{\E}{\mathbb{E}}
\newcommand{\lip}{\textup{Lip}}
\newcommand{\eps}{\varepsilon}
\newcommand{\abs}[1]{\left\lvert #1 \right\rvert}
\newcommand{\tol}{\textup{TOL}}
\DeclareMathOperator*{\arginf}{arginf}
\newcommand{\dd}{\mathrm{d}}
\newcommand{\sdd}{\,\dd}
\newcommand{\vertiii}[1]{{\left\vert\kern-0.25ex\left\vert\kern-0.25ex\left\vert #1 
    \right\vert\kern-0.25ex\right\vert\kern-0.25ex\right\vert}}
\numberwithin{equation}{section}
\title[Markovian Approximations]{Weak Markovian Approximations of Rough Heston}
\author{Christian Bayer}
\address{Weierstrass Institute, Mohrenstraße 39, 10117 Berlin, Germany}
\email{christian.bayer@wias-berlin.de}
\author{Simon Breneis}
\address{Weierstrass Institute, Mohrenstraße 39, 10117 Berlin, Germany}
\email{simon.breneis@wias-berlin.de}
\date{\today}
\subjclass[2020]{91G60, 91G20}
\keywords{Rough Heston model, Markovian approximations, Weak error}
\thanks{C.B. and S.B. gratefully acknowledge the support of the DFG through the IRTG 2544. The authors would also like to thank Eduardo Abi Jaber for encouraging them to consider Hurst parameters $H<0$.}
\begin{document}
\maketitle

\begin{abstract}
    The rough Heston model is a very popular recent model in mathematical finance; however, the lack of Markov and semimartingale properties poses significant challenges in both theory and practice. A way to resolve this problem is to use Markovian approximations of the model. Several previous works have shown that these approximations can be very accurate even when the number of additional factors is very low. Existing error analysis is largely based on the strong error, corresponding to the $L^2$ distance between the kernels. Extending earlier results by [Abi Jaber and El Euch, SIAM Journal on Financial Mathematics 10(2):309--349, 2019], we show that the weak error of the Markovian approximations can be bounded using the $L^1$-error in the kernel approximation for general classes of payoff functions for European style options. Moreover, we give specific Markovian approximations which converge super-polynomially in the number of dimensions, and illustrate their numerical superiority in option pricing compared to previously existing approximations. The new approximations also work for the hyper-rough case $H > -1/2$.
\end{abstract}

\section{Introduction}

Rough volatility \cite{roughvolbook} is a now established paradigm for modeling of equity markets, which provides excellent fits to market data. However, theoretical analysis and numerical approximation typically becomes more challenging as rough volatility models are neither Markov processes nor semimartingales. Apart from the rough Bergomi model \cite{bayer2016pricing}, the workhorse model of rough volatility is the \emph{rough Heston model} \cite{el2019characteristic} given by
\begin{align}
\dd S_t &= \sqrt{V_t} S_t \left(\rho \sdd W_t + \sqrt{1-\rho^2} \sdd B_t\right),\qquad S_0 = S_0,\label{eqn:RHestonStock}\\
V_t &= V_0 + \int_0^t K(t-s) (\theta - \lambda V_s) \sdd s + \int_0^t K(t-s) \nu\sqrt{V_s} \sdd W_s,\label{eqn:RHestonVol}
\end{align}
where $\theta, \lambda, \nu > 0$, $\rho\in[-1, 1]$, where $(B_t,W_t)$ is a two-dimensional standard Brownian motion and where $K$ is the fractional kernel given by
\begin{equation}\label{eqn:DefinitionOfK}
K(t) = \frac{t^{H-1/2}}{\Gamma(H+1/2)},
\end{equation}
with Hurst parameter $H\in(-1/2, 1/2)$. 

The kernel $K$ in \eqref{eqn:RHestonVol} introduces a dependence of the volatility process $V$ on the past, ensuring that the volatility has memory. However, this also means that the process $(S, V)$ is neither a Markov process, nor a semi-martingale. Furthermore, the sample paths of $V$ are only $(H-\eps)$-Hölder continuous, for all $\eps > 0$, where $H\in(0, 1/2)$ is the Hurst parameter in \eqref{eqn:DefinitionOfK}. This leads to significant problems of the rough Heston model both in terms of theoretical analysis, and for simulation of sample paths, and option pricing more generally. Note, however, that the rough Heston model is an affine Volterra process \cite{abi2019affine}, and can be analyzed as such, implying weak existence and uniqueness as well as a semi-explicit formula for the characteristic function -- in terms of a fractional Riccati equation.

One remedy for these challenging properties of the model is to use Markovian approximations of $V$. More precisely, note that the kernel $K$ is completely monotone, and can thus be written as
\begin{equation}\label{eqn:IntegralRepresentationOfK}
    K(t) = \int_0^\infty e^{-xt} \mu(\dd x),\ \mu(\dd x) \coloneqq c_H x^{-H-1/2} \sdd x,\ c_H \coloneqq \frac{1}{\Gamma(H+1/2)\Gamma(1/2-H)},
\end{equation}
for all $t > 0$, where $\Gamma$ is the gamma function.

Taking the special example of fractional Brownian motion (fBm) $W^H$ for a moment, we observe that
\begin{align}
W^H_t &= \int_0^t K(t-s) \sdd W_s = \int_0^t \int_0^\infty e^{-x(t-s)} \mu(\dd x) \sdd W_s\nonumber \\
&= \int_0^\infty \int_0^t e^{-x(t-s)} \sdd W_s \mu(\dd x) =: \int_0^\infty Y_t(x) \mu(\dd x),\label{eqn:fBmExample}
\end{align}
where $W$ is a Brownian motion, and where we used the stochastic Fubini theorem. Note that $Y_t(x)$ is an Ornstein-Uhlenbeck (OU) process driven by $W$ with mean-reversion rate $x$. Indeed, in \cite{carmona1998fractional} it was shown that $(Y_t(x))_{x>0}$ is an infinite-dimensional Markov process. Therefore, fBm is a linear functional of an infinite-dimensional Markov process. 

The representation \eqref{eqn:fBmExample} indicates a natural way of approximating fBm by a finite-dimensional Markov process. We simply discretize the integral over $\mu(\dd x)$ in \eqref{eqn:fBmExample}, to obtain an approximation of $W^H$ as a linear functional of a finite-dimensional OU (and hence Markov) process. Furthermore, it is easy to see that discretizing the integral in \eqref{eqn:fBmExample} is equivalent to discretizing the integral in \eqref{eqn:IntegralRepresentationOfK}, i.e. to approximating $K$ by a discrete sum of exponential functions $K^N$ given by 
\begin{equation}\label{eqn:KNForm}
K^N(t) \coloneqq \sum_{i=1}^N w_i e^{-x_i t},
\end{equation}
for some non-negative nodes $(x_i)_{i=1}^N$ and non-negative weights $(w_i)_{i=1}^N$.

Furthermore, using the approximation $K^N$ of $K$, we can define the approximation $(S^N, V^N)$ of $(S, V)$ by 
\begin{align*}
\dd S^N_t &= \sqrt{V^N_t} S^N_t \left(\rho \sdd W_t + \sqrt{1-\rho^2} \sdd B_t\right),\qquad S_0 = S_0,\\
V^N_t &= V_0 + \int_0^t K^N(t-s) (\theta - \lambda V^N_s) \sdd s + \int_0^t K^N(t-s) \nu\sqrt{V^N_s} \sdd W_s.
\end{align*}

It has been shown, e.g. in \cite{abi2019multifactor} and in  \cite[Proposition 2.1]{alfonsi2021approximation}, that $V^N$ is the solution to an $N$-dimensional stochastic differential equation (SDE). The precise form of this SDE will however not be important for our purposes.

Given the Markovian approximation $(S^N, V^N)$ of $(S, V)$, we are of course interested in proving error bounds and convergence rates as $N\to\infty$. Assuming Lipschitz-continuous coefficients $b$ and $\sigma$, \cite{alfonsi2021approximation} proved for general stochastic Volterra equations with Hurst parameter $H>0$ given by
\begin{equation}\label{eqn:GeneralSVE}
X_t = X_0 + \int_0^t K(t-s) b(X_s) \sdd s + \int_0^t K(t-s) \sigma(X_s) \sdd W_s
\end{equation}
the strong error bound
\begin{equation}\label{eqn:StrongErrorBound}
\E \abs{X_T - X^N_T}^2 \le C\int_0^T \abs{K(t) - K^N(t)}^2 \sdd t,
\end{equation}
for some $C>0$, where $X^N$ is again defined by replacing $K$ with $K^N$ in \eqref{eqn:GeneralSVE}.

However, this bound is not directly applicable to the rough Heston model due to the singularity of the square root $\sqrt{V_t}$ in $0$. Moreover, the $L^2$-error in $K$ can converge very slowly for small Hurst parameters $1/2 \gg H > 0$. Indeed, since $K(t) \approx t^{H-1/2}$, the singularity at $t=0$ of $K(t)^2 \approx t^{2H-1}$ is barely integrable, leading to slow convergence rates. For example, using Gaussian quadrature rules, it was shown in \cite{bayer2023markovian} that one can achieve a convergence rate of
\begin{equation}\label{eqn:L2ErrorIfWeUsedPreviousQuadratureRule}
\left(\int_0^T \abs{K(t) - K^N(t)}^2 \sdd t\right)^{1/2} \le C\exp\left(-1.064 \left(1 + \frac{H}{3/2 - H}\right)^{-1/2}\sqrt{HN}\right).
\end{equation}
While this rate is superpolynomial, it is still very slow for small $H \approx 0$ -- ant not applicable at all in the hyper-rough case $-1/2 < H \le 0$, see \cite{jusselin2020no}. 

At the same time, a strong error bound as in \eqref{eqn:StrongErrorBound} may be too stringent for many practical applications, where we are often satisfied with a weak error bound. Indeed, in \cite[Proposition 4.3]{abi2019multifactor} is was shown for the specific case of the call option in the rough Heston model with some strike price $P$ that we have the much better error bound
\begin{equation}\label{eqn:WeakErrorBound}
\abs{\E(S_T - P)^+ - \E(S^N_T - P)^+} \le C\int_0^T \abs{K(t) - K^N(t)} \sdd t,
\end{equation}
where $x^+ \coloneqq x \lor 0$. Using the $L^1$-error instead of the $L^2$-error in the kernel is a significant improvement, since $K$ has much better integrability properties at $t=0$ than $K^2$. However, the authors of \cite{abi2019multifactor} did not seem to proceed to prove the weak error bound \eqref{eqn:WeakErrorBound} for more general payoff functions. Moreover, they proved \eqref{eqn:WeakErrorBound} only under very strong assumptions on the approximating kernel $K^N$. The first aim of this paper is hence to extend and greatly generalize the results of \cite{abi2019multifactor}. More precisely, in Corollary \ref{cor:FinalCorollaryBoundedPayoff} we prove that for every $H\in(-1/2,1/2)$ and for ``nice'' payoff functions $h:\R_+\to\R$, there exists a constant $C>0$ such that
\begin{equation}\label{eqn:WeakErrorBoundIntroduction}
\abs{\E h(S_T) - \E h(S^N_T)} \le C\int_0^T \abs{K(t) - K^N(t)} \sdd t.
\end{equation}
Additionally, the approximating kernels $K^N$ only have to satisfy the rather mild Assumption \ref{ass:TheAssumption}.

Having proved the error bound \eqref{eqn:WeakErrorBoundIntroduction}, we want to find a good approximation $K^N$ of $K$ minimizing the $L^1$-norm, and prove a convergence rate similar to \eqref{eqn:L2ErrorIfWeUsedPreviousQuadratureRule}. We give two slightly different versions in Theorems \ref{thm:TheL1TheoremGeometric} and \ref{thm:TheL1TheoremNonGeometric}. In particular, in Theorem \ref{thm:TheL1TheoremNonGeometric} we show that we can achieve the convergence rate
\begin{equation}\label{eqn:L1ConvergenceRateIntroduction}
\int_0^T \abs{K(t) - K^N(t)} \sdd t \le C\exp\left(-2.38 \sqrt{(H+1/2)N}\right)
\end{equation}
for any $H > -1/2.$ This is a much faster convergence rate than the one in \eqref{eqn:L2ErrorIfWeUsedPreviousQuadratureRule}, especially for tiny $H>0$, and moreover, it is also valid for negative $H>-1/2$.

The paper is structured as follows. In Section \ref{sec:MainTheorySection}, we prove \eqref{eqn:WeakErrorBoundIntroduction}, i.e. that the weak error of the Markovian approximations of the rough Heston model can be bounded by the $L^1$-error in the kernel. Next, in Section \ref{sec:ErrorRates} we give two separate (but similar) approaches for achieving small $L^1$-errors in the kernel $K^N$. Especially the first approach using geometric Gaussian rules in Section \ref{sec:GeometricGaussianApproximations} resembles the approach in \cite{bayer2023markovian}, except that we refine their results in several ways. The second approach in Section \ref{sec:NonGeometricGaussianApproximations} using non-geometric Gaussian rules is essentially an improvement of the first approach, yielding the rate in \eqref{eqn:L1ConvergenceRateIntroduction}. We decided to give both methods, since the geometric rules are simpler and easier to understand, and since the majority of the proof for the geometric rules can be transferred to the non-geometric rules. Next, in Section \ref{sec:Numerics}, we show that by optimizing the $L^1$-error we can indeed achieve very fast convergence rates and achieve small errors even for small values of the approximating dimension $N$. We also illustrate that minimizing the $L^1$-error leads to much faster convergence rates in the weak errors than minimizing the $L^2$-error. Finally, some technical results and some of the algorithms used in Section \ref{sec:Numerics} are delegated to the appendices.

\section{Weak error of Markovian approximations}\label{sec:MainTheorySection}

Throughout this section, we will work under the following assumption.

\begin{assumption}\label{ass:TheAssumption}
Assume that $K^N$ are chosen such that
\begin{equation*}
e_N \coloneqq \int_0^T \abs{K(t) - K^N(t)} \sdd t \to 0,
\end{equation*}
and that there exists a constant $C>0$ such that for all $t\in[0, T]$ and $N\ge 1$, we have
\begin{equation}\label{eqn:SecondAssumption}
K^N(t) \le C K(t),\qquad \abs{(K^N)'(t)} \le C\abs{K'(t)}.
\end{equation}
\end{assumption}

We remark that sequences of such nodes and weights satisfying Assumption \ref{ass:TheAssumption} have been given in many previous works, among others in \cite{abi2019multifactor, alfonsi2021approximation, bayer2023markovian, harms2019strong} (although \eqref{eqn:SecondAssumption} is usually not verified in these references, it is merely a rather weak and technical assumption). Also, as in \cite[Theorem 3.1]{abi2019multifactor}, we have strong existence and uniqueness of the process $(S^N, V^N)$, and non-negativity of $V^N$.

We now recall some basic definitions and facts from \cite{abi2019multifactor}. First, we define the characteristic function of the log-stock price $\log(S_T)$ as $$\varphi(z) \coloneqq \E\left[\exp\left(z\log(S_T/S_0)\right)\right].$$ Moreover, we note that $\varphi$ can be given in semi-closed form as 
\begin{equation}\label{eqn:CharacteristicFunctionSemiClosedForm}
\varphi(z) = \exp\left(\int_0^T F(z, \psi(T-t,z))g(t)\sdd t\right),
\end{equation}
where $g$ is defined by 
\begin{equation}\label{eqn:DefinitionOfg}
g(t) = V_0 + \theta\int_0^t K(s) \sdd s = V_0 + \theta \frac{t^{H+1/2}}{\Gamma(H + 3/2)},
\end{equation}
and where $\psi$ is the solution to the fractional Riccati equation
\begin{equation}\label{eqn:PsiRiccatiEquation}
\psi(t, z) = \int_0^t K(t-s)F(z, \psi(s, z))\sdd s,
\end{equation}
with
\begin{equation*}
F(z, x) = \frac{1}{2}(z^2 - z) + (\rho\nu z - \lambda)x + \frac{\nu^2}{2}x^2.
\end{equation*}

By replacing the kernel $K$ with $K^N$ and defining $g^N$ and $\psi^N$ as in \eqref{eqn:DefinitionOfg} and \eqref{eqn:PsiRiccatiEquation}, respectively, we obtain a formula similar to \eqref{eqn:CharacteristicFunctionSemiClosedForm} for the characteristic function $\varphi^N$ of the log-stock price $\log(S_T^N)$.

Our ultimate goal will be to prove a bound in terms of $e_N$ for the weak error of approximating $\log(S_T)$ by $\log(S^N_T)$. To this end, we will first prove a local Lipschitz error bound for the characteristic function $\abs{\varphi(z) - \varphi^N(z)}$ in terms of $e_N$ in Section \ref{sec:CharacteristicFunctionErrorBound}. Then, in Section \ref{sec:WeakErrorBoundForTheLogStockPrice}, we will use this error bound on $\varphi$ to prove a weak error bound of $\log(S_T)$.

\subsection{Error bound for the characteristic function}\label{sec:CharacteristicFunctionErrorBound}

The goal of this subsection is to prove the following theorem under Assumption \ref{ass:TheAssumption}.

\begin{theorem}\label{thm:PhiNErrorBound}
For all $T\ge 0$ there exists a constant $C_0$ such that for all $N\ge 0$ and $z=a + bi$ with $a\in[0,1]$ and $b\in\R$ satisfying 
\begin{equation}\label{eqn:StupidCondition}
C_0(1 + b^6)e_N \le 1,
\end{equation}
we have $$\abs{\varphi(z) - \varphi^N(z)} \le 6C_0\varphi(a)(1 + b^6)e_N.$$
\end{theorem}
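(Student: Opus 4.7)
The plan is to write $\varphi(z) = e^{A(z)}$ and $\varphi^N(z) = e^{A^N(z)}$ with
\[
A(z) \coloneqq \int_0^T F(z, \psi(T-t,z)) g(t) \sdd t,\qquad A^N(z) \coloneqq \int_0^T F(z, \psi^N(T-t,z)) g^N(t) \sdd t,
\]
and to reduce the theorem to a bound of the form $\abs{A(z) - A^N(z)} \le C(1 + b^6) e_N$. The transfer to $\abs{\varphi - \varphi^N}$ is done via the elementary identity $e^A - e^{A^N} = (A - A^N) \int_0^1 e^{A^N + s(A - A^N)} \sdd s$, which yields
\[
\abs{\varphi(z) - \varphi^N(z)} \le \abs{A(z) - A^N(z)} \cdot \max(\abs{\varphi(z)}, \abs{\varphi^N(z)}).
\]
Since $\abs{e^{zX}} = e^{aX}$ for real $X$ we have $\abs{\varphi(z)} \le \varphi(a)$, and applying the same identity at the real argument $z = a$ together with the smallness condition \eqref{eqn:StupidCondition} delivers $\varphi^N(a) \le 2\varphi(a)$ by a triangle-inequality bootstrap, so $\max(\abs{\varphi(z)}, \abs{\varphi^N(z)}) \le 2\varphi(a)$; the theorem then follows from the target estimate upon choosing $C_0$ large enough.

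The first substantial step is to establish uniform-in-$N$ a priori bounds on $\psi$ and $\psi^N$ that are polynomial in $\abs{z}$. The key structural property of $F$ is that $\mathrm{Re}\,F(z, x) \le 0$ whenever $\mathrm{Re}(z) \in [0,1]$ and $\mathrm{Re}(x) \le 0$; together with $K, K^N \ge 0$ and the fractional Riccati equation \eqref{eqn:PsiRiccatiEquation}, this yields $\mathrm{Re}\,\psi(t,z) \le 0$ and $\mathrm{Re}\,\psi^N(t,z) \le 0$, as in \cite{abi2019multifactor}. Combining this with the quadratic growth of $F$ in $x$ and the uniform $L^1$-bound $\int_0^T K^N \sdd t \le C\int_0^T K \sdd t$ implied by Assumption \ref{ass:TheAssumption}, one derives polynomial bounds $\abs{\psi(t,z)}, \abs{\psi^N(t,z)} \le C(1 + b^2)$ on $[0,T]$.

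Subtracting the two Riccati equations gives
\[
\psi(t) - \psi^N(t) = \int_0^t [K(t-s) - K^N(t-s)] F(z,\psi(s)) \sdd s + \int_0^t K^N(t-s)[F(z,\psi(s)) - F(z,\psi^N(s))] \sdd s.
\]
Using the a priori bounds, the local Lipschitz estimate $\abs{F(z,x) - F(z,y)} \le C(1 + \abs{z} + \abs{x} + \abs{y})\abs{x-y}$, and $K^N \le CK$ from Assumption \ref{ass:TheAssumption}, this reduces to
\[
\abs{\psi(t) - \psi^N(t)} \le C(1 + b^4)e_N + C(1 + b^2) \int_0^t K(t-s) \abs{\psi(s) - \psi^N(s)} \sdd s,
\]
so a fractional Gronwall inequality with kernel $K$ yields $\abs{\psi - \psi^N} \le C(1 + b^4)e_N$ uniformly on $[0,T]$. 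Combined with $\abs{g(t) - g^N(t)} \le \theta e_N$, $g \le g(T)$, and $\abs{F(z,\psi^N)} \le C(1 + b^4)$, one obtains
\[
\abs{A(z) - A^N(z)} \le \int_0^T \abs{F(z,\psi) - F(z,\psi^N)} g \sdd t + \int_0^T \abs{F(z,\psi^N)} \abs{g - g^N} \sdd t \le C(1 + b^6) e_N,
\]
where the exponent $6$ arises as $2 + 4$ from the Lipschitz factor $1 + b^2$ multiplied by the $\psi$-difference bound $1 + b^4$.

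The main obstacle is the a priori step: producing a uniform-in-$N$ bound on $\abs{\psi}$ and $\abs{\psi^N}$ with the sharp polynomial growth in $b$. The quadratic nonlinearity combined with the fractional convolution makes naive contraction arguments diverge, and the sign information $\mathrm{Re}\,\psi \le 0$ has to be exploited to tame the quadratic self-interaction term. A secondary subtlety is to ensure that the fractional Gronwall factor used for $\psi - \psi^N$ contributes only a constant multiple (or at worst a mild polynomial) in $b$, so that the final exponent remains $6$; for the Riccati structure at hand this can be achieved by working with the linearization around the bounded $\psi$ rather than with the bare Lipschitz constant.
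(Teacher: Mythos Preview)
Your approach is essentially the paper's. The paper also takes Propositions~\ref{prop:PsiNUniformConvergence} and~\ref{prop:PsiNUniformBound} as input (citing \cite{abi2019multifactor}), splits $A^N - A$ into a $(\psi^N - \psi)$-piece and a $(g^N - g)$-piece exactly as you do (Lemmas~\ref{lem:FirstPreliminaryLemma} and~\ref{lem:SecondStupidLemma}), and assembles the factor $(1+b^6)$ from $(1+b^2)\cdot(1+b^4)$ in the same way. The only cosmetic difference is the conversion step: the paper writes $|\varphi(z) - \varphi^N(z)| = |\varphi(z)|\,|1 - e^{A^N - A}|$ and bounds the second factor via $|1 - e^f| \le \tfrac32|f|$ for $|f|\le\tfrac12$, so the factor $\varphi(a)$ appears immediately without your separate bootstrap on $\varphi^N(a)$; your mean-value route works too and gives the same bound up to constants.

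Your sketch of the two propositions goes beyond what the paper itself proves here, and your flagging of the Gronwall-in-$b$ issue is accurate and important: a naive Gronwall with the $(1+b^2)$ Lipschitz constant produces a Mittag--Leffler factor that is exponential in $b^2$, not polynomial. Merely ``linearizing around the bounded $\psi$'' does not by itself fix this, since the linear coefficient $F_x(z,\psi)=\rho\nu z-\lambda+\nu^2\psi$ is still of size $O(1+b^2)$; the polynomial bound really relies on the finer Riccati sign structure exploited in \cite{abi2019multifactor}, which is why the paper defers these propositions rather than re-deriving them.
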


From here on, the values of the constant $C$ may change from line to line, but it is always independent of $N$. Furthermore, for a complex number $z\in \C$, we write $z = a + bi$ for $a,b\in\R$.

First, under Assumption \ref{ass:TheAssumption}, we have the following two propositions, similar to \cite[Theorem 4.1]{abi2019multifactor} and \cite[Proposition 5.4]{abi2019multifactor}. 
We remark that while \cite{abi2019multifactor} uses much more restrictive assumptions than Assumption \ref{ass:TheAssumption}, it is not difficult to verify that their proves still go through with only minor adaptations.

\begin{proposition}\label{prop:PsiNUniformConvergence}
There exists a constant $C$ such that for all $N\ge 0$ and $z=a+bi$ with $a\in[0,1]$ and $b\in\R$, $$\sup_{t\in[0,T]} \abs{\psi^N(t, z) - \psi(t, z)} \le C(1 + b^4) e_N.$$
\end{proposition}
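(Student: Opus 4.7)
My plan is to derive a linear Volterra inequality for the difference $R(t) := \psi^N(t,z) - \psi(t,z)$ and close it with a fractional Grönwall argument. Subtracting the defining fractional Riccati equations \eqref{eqn:PsiRiccatiEquation} for $\psi$ and $\psi^N$ gives the decomposition
\begin{align*}
R(t) &= \underbrace{\int_0^t (K^N - K)(t-s)\, F(z, \psi^N(s,z)) \sdd s}_{=: \text{(I)}} \\
&\quad + \underbrace{\int_0^t K(t-s) \bigl[F(z, \psi^N(s,z)) - F(z, \psi(s,z))\bigr] \sdd s}_{=: \text{(II)}}.
\end{align*}
To estimate these terms I first need uniform a priori bounds of the form $\sup_{N,\, t \in [0,T]} |\psi^N(t,z)|, |\psi(t,z)| \le C(1+b^2)$. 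I would obtain these by following the template of \cite[Proposition 5.4]{abi2019multifactor}, combined with the non-positivity of $\mathrm{Re}(\psi^N)$ (a consequence of $\psi^N$ being the Fourier--Laplace exponent of a non-negative Volterra process, cf.\ \cite[Theorem 3.1]{abi2019multifactor}) and the majorization $K^N \le C\, K$ granted by Assumption \ref{ass:TheAssumption}; the replacement of the stronger hypotheses of \cite{abi2019multifactor} by Assumption \ref{ass:TheAssumption} should only affect the constants.

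With these bounds in place, the elementary estimates $|z^2 - z| \le C(1+b^2)$, $|\rho\nu z - \lambda| \le C(1+|b|)$, and $|\psi^N|^2 \le C(1+b^4)$ for $a \in [0,1]$ yield $|F(z, \psi^N(s,z))| \le C(1+b^4)$, so term (I) is bounded in absolute value by $C(1+b^4)\, e_N$ uniformly in $t \in [0,T]$. For (II), the local Lipschitz estimate
$$|F(z, x) - F(z, y)| \le \bigl(|\rho\nu z - \lambda| + \tfrac{\nu^2}{2}(|x|+|y|)\bigr)\, |x-y| \le C(1+b^2)\, |x-y|$$
on the ball of radius $C(1+b^2)$ where both $\psi$ and $\psi^N$ live gives
$$|R(t)| \le C(1+b^4)\, e_N + C(1+b^2) \int_0^t K(t-s)\, |R(s)| \sdd s.$$

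The proof is then completed by applying a generalized Grönwall lemma for the weakly singular kernel $K$. The principal difficulty, and where I expect the most care to be needed, is preserving the polynomial-in-$b$ dependence: a direct Mittag--Leffler estimate applied to the above inequality would produce a factor of order $\exp\bigl(c\, b^{2/(H+1/2)}\bigr)$, which is far worse than the claimed $(1+b^4)$. To sharpen this, one exploits the dissipativity of the Riccati flow, effectively replacing $|\partial_x F(z, \psi)| = |\rho\nu z - \lambda + \nu^2 \psi|$ by a linearization along the trajectory where $\mathrm{Re}(\psi(s,z)) \le 0$ neutralizes the $b$-contribution, in the spirit of the refinement carried out in the proof of \cite[Proposition 5.4]{abi2019multifactor}. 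This is the only non-routine step; the remaining ingredients (a priori bounds, splitting, Lipschitz estimates) are standard and follow the blueprint of \cite{abi2019multifactor} with only minor adaptations due to the weaker Assumption \ref{ass:TheAssumption}.
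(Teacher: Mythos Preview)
The paper does not give an independent proof but defers entirely to \cite[Theorem 4.1]{abi2019multifactor}, noting only that the argument there carries over under the weaker Assumption~\ref{ass:TheAssumption}. Your decomposition into (I) and (II), the a priori input $|\psi^N|\le C(1+b^2)$ from Proposition~\ref{prop:PsiNUniformBound}, and the Volterra--Gr\"onwall closure exploiting $\mathrm{Re}\,\psi,\mathrm{Re}\,\psi^N\le 0$ to keep the constants polynomial in $b$ is exactly the template of that reference, so your sketch and the paper's approach coincide.
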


\begin{proposition}\label{prop:PsiNUniformBound}
There exists a constant $C$ such that for all $N\ge 0$ and $z=a+bi$ with $a\in[0,1]$ and $b\in\R$, $$\sup_{N\in\N}\sup_{t\in[0,T]} \abs{\psi^N(t, z)} \le C(1 + b^2).$$
\end{proposition}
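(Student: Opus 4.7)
The plan is to adapt the invariant-domain argument of \cite{abi2019multifactor} to the weaker Assumption \ref{ass:TheAssumption}. The bound is proved in two stages: first a qualitative one-sided control on $\textup{Re}\,\psi^N$, and then a convolution-Grönwall estimate on $|\psi^N|$.

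The first stage is to show $\textup{Re}\,\psi^N(t,z)\le 0$ for every $t\in[0,T]$ whenever $a\in[0,1]$. The underlying algebraic identity is that on the imaginary axis $u=iv$,
\[
\textup{Re}\,F(z,iv) = \tfrac{1}{2}(a^2-a) - \tfrac{1}{2}\bigl((1-\rho^2)b^2 + (\rho b+\nu v)^2\bigr) \le 0,
\]
since $a^2-a\le 0$ for $a\in[0,1]$. Representing $K^N(t)=\sum_{i=1}^N w_i e^{-x_i t}$ with $w_i,x_i\ge 0$ as in \eqref{eqn:KNForm}, the Volterra equation \eqref{eqn:PsiRiccatiEquation} for $\psi^N$ decomposes into the $N$-dimensional OU system $\dot Y_i = -x_i Y_i + F(z,\psi^N)$, $Y_i(0)=0$, $\psi^N = \sum_i w_i Y_i$. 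The boundary-drift argument of \cite[Theorem 4.1 and Proposition 5.4]{abi2019multifactor} then applies, using only the signs of $w_i,x_i$ and the identity above; no stronger hypothesis on $K^N$ is needed.

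Once $\textup{Re}\,\psi^N\le 0$ is secured, the quadratic bound on $|\psi^N|$ is a convolution-Grönwall computation. On the half-plane $\{\textup{Re}\,u\le 0\}$ the nonlinearity obeys
\[
|F(z,u)| \le C(1+b^2) + C(1+|b|)\,|u| + \tfrac{\nu^2}{2}|u|^2,
\]
and the real part of the linear term $(\rho\nu z-\lambda)u$ contributes nonpositively to $\textup{Re}\,F$, so the quadratic growth does not amplify. I would insert the ansatz $|\psi^N(t,z)|\le M(1+b^2)$ into
\[
|\psi^N(t,z)| \le \int_0^t K^N(t-s)\,|F(z,\psi^N(s,z))|\,\dd s,
\]
and determine $M$ from the resulting quadratic inequality in which $\int_0^T K^N(s)\,\dd s \le C\int_0^T K(s)\,\dd s$ uniformly in $N$ by Assumption \ref{ass:TheAssumption}; a standard continuity-in-$t$ bootstrap lifts this to a uniform bound on $[0,T]$.

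The principal obstacle is the first stage. For a scalar exponential kernel the invariance of the left half-plane reduces to a one-line ODE boundary calculation, but for the weighted sum $\textup{Re}\,\psi^N=\sum_i w_i\,\textup{Re}\,Y_i$ the individual $\textup{Re}\,Y_i$ need not all be nonpositive when $\textup{Re}\,\psi^N=0$. Adapting \cite{abi2019multifactor} thus requires care: their argument ultimately relies only on the positivity of $(w_i,x_i)$ and the algebraic identity above, but one must verify that the more restrictive hypotheses used there play only a cosmetic role. Given this, the quadratic bound is essentially routine, with the $b^2$ growth rate dictated by the source term $\tfrac{1}{2}(z^2-z)$ of $F$, whose magnitude is $\asymp b^2$.
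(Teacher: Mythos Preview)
The paper itself gives no self-contained argument here: it simply states that the proof of \cite[Proposition~5.4]{abi2019multifactor} goes through under the weaker Assumption~\ref{ass:TheAssumption} ``with only minor adaptations''. Your proposal is more ambitious and tries to reconstruct the mechanism; you correctly flag the invariance step as the delicate one, but the second stage does not close as written.

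Inserting the ansatz $|\psi^N|\le M(1+b^2)$ into the crude bound $|F(z,u)|\le C(1+b^2)+C(1+|b|)|u|+\tfrac{\nu^2}{2}|u|^2$ and integrating against $K^N$ yields the self-consistency requirement
\[
\|K^N\|_{L^1[0,T]}\Bigl(C+C(1+|b|)M+\tfrac{\nu^2}{2}M^2(1+b^2)\Bigr)\le M,
\]
and the $M^2(1+b^2)$ term on the left means this has no solution $M$ independent of $b$. A continuity-in-$t$ bootstrap only secures an interval whose length shrinks to zero as $|b|\to\infty$, and iterating to reach $T$ lets the constant blow up. Your remark that the linear term contributes nonpositively to $\textup{Re}\,F$ is a statement about the invariance step, not about $|F|$; indeed the inequality you wrote for $|F|$ holds for \emph{all} $u$ and does not use $\textup{Re}\,u\le 0$ at all.

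What is missing is a \emph{structural} use of the sign constraint. Writing $\psi^N=p+iq$, the identity
\[
\textup{Re}\,F(z,p+iq)=F(a,p)-\tfrac{1-\rho^2}{2}b^2-\tfrac{\nu^2}{2}\Bigl(q+\tfrac{\rho b}{\nu}\Bigr)^2
\]
ties $p$ to the real Riccati nonlinearity $F(a,\cdot)$ plus nonpositive terms, and makes the equation for $q$ linear in $q$; the argument in \cite{abi2019multifactor} exploits this splitting together with a comparison to the bounded real solution $\psi^N(\cdot,a)$, rather than a bare bound on $|F|$. The uniformity in $N$ then enters only through $\sup_N\int_0^T K^N<\infty$ and the kernel regularity estimates, which is precisely what Assumption~\ref{ass:TheAssumption} supplies.
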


Proposition \ref{prop:PsiNUniformConvergence} already suggests that the Markovian approximations $\varphi^N$ might converge weakly at the same speed as $e_N$, as in Theorem \ref{thm:PhiNErrorBound}. To prove this theorem, we need some intermediary lemmas.

\begin{lemma}\label{lem:FirstPreliminaryLemma}
For all $T\ge 0$ there exists a constant $C_0$ such that for all $N\ge 0$ and $z=a+bi$ with $a\in[0,1]$ and $b\in\R$ satisfying $$C_0(1 + b^6)e_N \le 1,$$ we have
\begin{align*}
\bigg|1 - \exp\bigg(\int_0^T &\left(F(z, \psi^N(T-t, z)) - F(z, \psi(T-t, z))\right)g(t)\sdd t\bigg)\bigg| \le \frac{3}{2}C_0(1 + b^6)e_N.
\end{align*}
\end{lemma}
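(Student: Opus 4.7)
The plan is to reduce the claim to the elementary inequality $|1 - e^w| \le \tfrac{3}{2}|w|$, valid whenever $|w|$ is sufficiently small (say $|w| \le 3/4$), applied with
\[
w \coloneqq \int_0^T \bigl(F(z, \psi^N(T-t, z)) - F(z, \psi(T-t, z))\bigr) g(t) \sdd t.
\]
Once we have a uniform bound of the form $|w| \le C_1(1+b^6) e_N$, choosing $C_0$ large enough relative to $C_1$ will force $|w|$ into the regime where the elementary inequality applies, giving $|1-e^w| \le \tfrac{3}{2}|w| \le \tfrac{3}{2} C_0 (1+b^6) e_N$ as required.

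To bound $|w|$, I use that $F(z, \cdot)$ is quadratic in its second argument, so
\[
F(z,x) - F(z,y) = (x-y)\Bigl[(\rho\nu z - \lambda) + \tfrac{\nu^2}{2}(x+y)\Bigr].
\]
Setting $x = \psi^N(T-t,z)$ and $y = \psi(T-t,z)$, Proposition \ref{prop:PsiNUniformConvergence} bounds $|x-y|$ by $C(1+b^4) e_N$, and Proposition \ref{prop:PsiNUniformBound} bounds $|x|$ and $|y|$ each by $C(1+b^2)$. Combined with $|\rho\nu z - \lambda| \le C(1+|b|)$ (which holds because $a \in [0,1]$), the bracket is $O(1+b^2)$, and hence the integrand is bounded by $C(1+b^4)(1+b^2) e_N \cdot g(t) \le C(1+b^6) e_N \cdot g(t)$. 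Since $g$ from \eqref{eqn:DefinitionOfg} is non-decreasing and bounded on $[0,T]$ by $g(T) < \infty$, integration yields $|w| \le C_1 (1+b^6) e_N$ for a constant $C_1$ depending only on $T$ and the fixed model parameters.

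The final step is then bookkeeping with constants. The bound $|1-e^w| \le e^{|w|} - 1 \le \tfrac{3}{2}|w|$ for $|w| \le 3/4$ follows from a direct power series check ($e^{3/4} \le 1 + \tfrac{3}{2}\cdot\tfrac{3}{4}$). Choosing $C_0 \ge \tfrac{4}{3} C_1$, the hypothesis $C_0(1+b^6)e_N \le 1$ forces $|w| \le C_1 (1+b^6) e_N \le C_1/C_0 \le 3/4$, and the claimed bound follows. I do not expect any real obstacle: the only point to watch is the polynomial-in-$b$ bookkeeping when combining the two propositions, which becomes transparent once one writes out the factorization of $F(z,x) - F(z,y)$ above.
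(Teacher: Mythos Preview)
Your proof is correct and follows essentially the same approach as the paper: bound the integrand via the quadratic structure of $F$ together with Propositions \ref{prop:PsiNUniformConvergence} and \ref{prop:PsiNUniformBound}, then apply the elementary inequality $|1-e^w|\le \tfrac{3}{2}|w|$ for small $|w|$. The only differences are cosmetic (you write out the factorization of $F(z,x)-F(z,y)$ explicitly and use the threshold $|w|\le 3/4$ with $C_0\ge \tfrac{4}{3}C_1$, whereas the paper uses $|w|\le 1/2$ with $C_0=2C$).
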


\begin{proof}
First, by Propositions \ref{prop:PsiNUniformConvergence} and \ref{prop:PsiNUniformBound}, there is a constant C such that for all $z=a + bi$ with $a\in[0, 1]$ and $b\in\R$,
\begin{align}
\abs{\psi^N(t, z)^2 - \psi(t, z)^2} &= \abs{\psi^N(t, z) + \psi(t, z)}\abs{\psi^N(t, z) - \psi(t, z)}\nonumber\\
&\le C(1 + b^2)\abs{\psi^N(t, z) - \psi(t, z)}\nonumber\\
&\le C(1 + b^6) e_N.\label{eqn:WeNeedThisInequality}
\end{align}

Since $F$ is a quadratic polynomial in the second variable, Proposition \ref{prop:PsiNUniformConvergence} and inequality \eqref{eqn:WeNeedThisInequality} imply that also $$\sup_{t\in[0,T]} \abs{F(z, \psi^N(t, z)) - F(z, \psi(t, z))} \le C(1 + b^6) e_N.$$

Denote by $f(z)$ the integral inside the exponential function. Since $g$ can be bounded on finite intervals, 
\begin{align}
\abs{f(z)} &\le C\int_0^T\abs{F(z, \psi(T-t, z)) - F(z, \psi^N(T-t, z))} \sdd t\nonumber\\
&\le C(1 + b^6)\int_0^T e_N \sdd t \le C(1 + b^6)e_N.\label{eqn:ThatConstantC}
\end{align}

We take $C_0 \coloneqq 2 C$, where $C$ is the final constant in \eqref{eqn:ThatConstantC}. Since $\abs{f(z)} \le 1/2$, 
\begin{align*}
\abs{1 - \exp\left(f(z)\right)} &\le \sum_{n=1}^\infty \frac{\abs{f(z)}^n}{n!} \le \frac{3}{2}\abs{f(z)} \le \frac{3}{2}C_0(1+b^6)e_N.\qedhere
\end{align*}
\end{proof}

\begin{lemma}\label{lem:SecondStupidLemma}
For all $T\ge 0$ there exists a constant $C_0$ such that for all $N\ge 0$ and $z=a+bi$ with $a\in[0,1]$ and $b\in\R$ satisfying $$C_0(1 + b^4)e_N \le 1,$$ we have
\begin{align*}
\abs{1 - \exp\left(\int_0^T F(z, \psi(T-t, z))\left(g^N(t) - g(t)\right)\sdd t\right)} \le \frac{3}{2}C_0(1 + b^4)e_N.
\end{align*}
\end{lemma}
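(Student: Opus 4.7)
The proof will mirror the structure of Lemma \ref{lem:FirstPreliminaryLemma}: obtain a pointwise bound of the form $C(1+b^4)e_N$ on the integrand inside the exponential, translate this into a bound on the integral, and then invoke the Taylor expansion of $\exp$ near $0$ to conclude.

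\textbf{Step 1: Bound $g^N - g$ uniformly in $t$.} From \eqref{eqn:DefinitionOfg} and its analog with $K^N$, we have the explicit expression
\begin{equation*}
g^N(t) - g(t) = \theta\int_0^t \bigl(K^N(s) - K(s)\bigr)\sdd s,
\end{equation*}
so that $\sup_{t\in[0,T]}\abs{g^N(t) - g(t)} \le \theta e_N$ by the definition of $e_N$. This is the crucial new input for this lemma and is actually easier than the corresponding $\psi^N - \psi$ estimate used in the previous lemma.

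\textbf{Step 2: Bound $F(z, \psi(T-t,z))$.} Since $F(z,x) = \tfrac{1}{2}(z^2 - z) + (\rho\nu z - \lambda)x + \tfrac{\nu^2}{2}x^2$ is a quadratic polynomial in $x$ whose coefficients grow at most linearly in $\abs{z}$, and since Proposition \ref{prop:PsiNUniformBound} gives $\abs{\psi(T-t,z)} \le C(1+b^2)$, I get
\begin{equation*}
\sup_{t\in[0,T]}\abs{F(z,\psi(T-t,z))} \le C(1 + b^4),
\end{equation*}
for a constant $C$ independent of $N$ and $t$ (using $a \in [0,1]$ to absorb the real part of $z$).

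\textbf{Step 3: Combine and apply exponential bound.} Writing $h(z)$ for the integral inside the exponential, Steps 1 and 2 together yield
\begin{equation*}
\abs{h(z)} \le \int_0^T \abs{F(z,\psi(T-t,z))}\,\abs{g^N(t) - g(t)}\sdd t \le C(1+b^4)\,\theta T\, e_N,
\end{equation*}
so $\abs{h(z)} \le C'(1+b^4)e_N$ for a new constant $C'$. Setting $C_0 \coloneqq 2C'$, the hypothesis $C_0(1+b^4)e_N \le 1$ forces $\abs{h(z)} \le 1/2$. Then, exactly as at the end of the proof of Lemma \ref{lem:FirstPreliminaryLemma},
\begin{equation*}
\abs{1 - \exp(h(z))} \le \sum_{n=1}^\infty \frac{\abs{h(z)}^n}{n!} \le \tfrac{3}{2}\abs{h(z)} \le \tfrac{3}{2}C_0(1+b^4)e_N,
\end{equation*}
which is the claimed bound.

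\textbf{Expected obstacle.} There is no serious obstacle: unlike Lemma \ref{lem:FirstPreliminaryLemma}, we do not need to control $\psi^N - \psi$ (so Proposition \ref{prop:PsiNUniformConvergence} is not invoked here), and the difference $g^N - g$ is itself a primitive of $K^N - K$, giving the $e_N$ bound directly. The polynomial degree in $b$ drops from $6$ to $4$ because $F$ is applied only to $\psi$ (bounded by $1 + b^2$) rather than to both $\psi$ and $\psi^N$ via the factorization $\psi^N + \psi$ that produced the extra $(1+b^2)$ factor in the preceding lemma.
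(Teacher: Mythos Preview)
Your proof is correct and follows essentially the same approach as the paper, which simply notes that the argument mirrors Lemma~\ref{lem:FirstPreliminaryLemma}. Your Steps 1--3 reproduce exactly the details the paper leaves implicit (and which appear in a commented-out proof in the source): bound $|g^N(t)-g(t)|$ by $\theta e_N$ via its integral representation, bound $|F(z,\psi(T-t,z))|$ by $C(1+b^4)$ via Proposition~\ref{prop:PsiNUniformBound}, and conclude with the same Taylor estimate for $|1-\exp(h(z))|$.
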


\begin{proof}
    The proof of this lemma is similar to the proof of Lemma \ref{lem:FirstPreliminaryLemma}.
\end{proof}

We can now proceed with the proof of Theorem \ref{thm:PhiNErrorBound}.

\begin{proof}[Proof of Theorem \ref{thm:PhiNErrorBound}]
We have $$\abs{\varphi(z) - \varphi^N(z)} = \abs{\varphi(z)}\abs{1 - \frac{\varphi^N(z)}{\varphi(z)}} \le \varphi(a)\abs{1 - \frac{\varphi^N(z)}{\varphi(z)}},$$ where the last inequality holds since $\varphi$ is a characteristic function. 

Now, for the specific choice of $$a = \exp\left(\int_0^T \left(F(z, \psi^N(T-t,z)) - F(z, \psi(T-t,z))\right)g(t)\sdd t\right),$$ we use the inequality
\begin{align*}
\abs{1 - \frac{\varphi^N(z)}{\varphi(z)}} &\le \abs{1 - a} + \abs{a - \frac{\varphi^N(z)}{\varphi(z)}} = \abs{1 - a} + \abs{a}\abs{1 - \frac{1}{a}\frac{\varphi^N(z)}{\varphi(z)}}\\
&\le \abs{1 - a} + \left(1 + \abs{1 - a}\right)\abs{1 - \frac{1}{a}\frac{\varphi^N(z)}{\varphi(z)}}.
\end{align*}

We then use Lemma \ref{lem:FirstPreliminaryLemma} to bound $\abs{1 - a}$, and, using \eqref{eqn:CharacteristicFunctionSemiClosedForm}, we see that $\abs{1 - \frac{1}{a}\frac{\varphi^N(z)}{\varphi(z)}}$ can be bounded using Lemma \ref{lem:SecondStupidLemma}. Therefore,
\begin{align*}
    \abs{1 - \frac{\varphi^N(z)}{\varphi(z)}} &\le \frac{3}{2}C_0(1 + b^6)e_N + \left(1 + \frac{3}{2}C_0(1 + b^6)e_N\right)\frac{3}{2}C_0(1 + b^4)e_N\\
    &\le 6C_0(1 + b^6)e_N.\qedhere
\end{align*}

\end{proof}

\subsection{Weak error bound for the log-stock price}\label{sec:WeakErrorBoundForTheLogStockPrice}

We have proved that the error in the characteristic function $\varphi$ can be bounded using $e_N$. The next step will be to extend this result to the weak error in $X_T \coloneqq \log(S_T)$. To this end, let us introduce some notation related to the Fourier transform.

Given a function $f:\R\to\R$, we denote its generalized Fourier transform by $$\widehat{f}(z) \coloneqq \int_{\R} e^{izx} f(x)\sdd x$$ for all $z\in\C$ for which the above integral is well-defined. Furthermore, for $R\in\R$, we define the damped function $$f_R(x) \coloneqq e^{-Rx} f(x).$$ Define the sets
\begin{equation}\label{eqn:DefinitionSetsIJ}
\mathcal{I} \coloneqq \left\{R\in\R\colon f_R\in L^1_{\textup{bc}},\ \widehat{f_R}\in L^1\right\},\qquad \mathcal{J} \coloneqq \left\{R\in\R\colon \varphi(R) < \infty\right\},
\end{equation}
where $L^1_{\textup{bc}}$ denotes the set of $L^1$-functions that are bounded and continuous. We remark that we have $[0, 1]\subseteq \mathcal{J}.$ Then, for each $R\in \mathcal{I}\cap \mathcal{J}$, we have the Fourier pricing formula
\begin{equation}\label{eqn:FourierInversionPricing}
\E f(X_T) = \frac{1}{2\pi}\int_{\R} \varphi(R - i u) \widehat{f}(u + i R)\sdd u,
\end{equation}
see e.g. \cite[Theorem 2.2]{eberlein2010analysis}. Using this formula, we can prove the following theorem.

\begin{theorem}\label{thm:TheWeakErrorTheoremGeneralVersion}
Let $f:\R\to\R$ be a payoff function of the log-stock price, and let $a\in \mathcal{I}\cap [0, 1]\neq\emptyset$. Assume that there are $C_1,\delta > 0$ such that
\begin{equation}\label{eqn:FourierTransformDecayCondition}
\abs{\widehat{f}(b + ia)} \le C_1 (1 + \abs{b})^{-(1+\delta)}.
\end{equation}
Then there exist constants $\widetilde{C}, \eps > 0$ independent of $f,C_1,\delta$, such that if $e_N < \eps$ then, for $\delta \neq 6$, $$\abs{\E f(X_T) - \E f(X^N_T)} \le \widetilde{C}\left(\frac{1}{\abs{6-\delta}} + \frac{1}{\delta}\right)C_1e_N^{\frac{\delta}{6}\land 1}.$$
For $\delta = 6$, we get
$$\abs{\E f(X_T) - \E f(X^N_T)} \le \widetilde{C}C_1\log(e_N^{-1})e_N.$$
\end{theorem}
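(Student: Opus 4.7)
The plan is to apply the Fourier pricing formula \eqref{eqn:FourierInversionPricing} to both $X_T$ and $X_T^N$ with the same damping $a\in\mathcal{I}\cap[0,1]$ and subtract, obtaining
\begin{equation*}
\E f(X_T) - \E f(X_T^N) = \frac{1}{2\pi}\int_\R \bigl(\varphi(a-iu) - \varphi^N(a-iu)\bigr)\,\widehat f(u+ia)\sdd u.
\end{equation*}
Theorem \ref{thm:PhiNErrorBound} controls the integrand only on the region where its side condition $C_0(1+u^6)e_N\le 1$ is satisfied, which naturally suggests splitting the integral at the threshold $M\coloneqq(C_0 e_N)^{-1/6}$, which tends to $\infty$ as $N\to\infty$. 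Note also that, for $e_N$ small, one has $\varphi^N(a)\le 2\varphi(a)$ by Proposition \ref{prop:PsiNUniformConvergence} at $b=0$, so the trivial bounds $|\varphi(a-iu)|\le\varphi(a)$ and $|\varphi^N(a-iu)|\le\varphi^N(a)$ give $|\varphi(a-iu)-\varphi^N(a-iu)|\le 3\varphi(a)$ globally.

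On the interior $\{|u|\le M\}$, Theorem \ref{thm:PhiNErrorBound} combined with \eqref{eqn:FourierTransformDecayCondition} bounds the integrand by $6C_0 C_1\varphi(a)(1+u^6)(1+|u|)^{-(1+\delta)} e_N$. Splitting at $|u|=1$, the small-$|u|$ piece contributes $O(e_N/\delta)$ and the large-$|u|$ piece is $\int_1^M u^{5-\delta}\sdd u$, equal to $(M^{6-\delta}-1)/(6-\delta)$ for $\delta\ne 6$ and to $\log M$ for $\delta=6$. On the exterior $\{|u|>M\}$, using the global bound $3\varphi(a)$ on $|\varphi-\varphi^N|$ together with the decay of $\widehat f$ yields a contribution of order $\varphi(a)C_1 M^{-\delta}/\delta$. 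Substituting $M=(C_0 e_N)^{-1/6}$, the interior $M^{6-\delta}$ term and the exterior $M^{-\delta}$ term both produce a factor $e_N^{\delta/6}$ when $\delta<6$; for $\delta>6$ the interior integral is uniformly bounded in $M$ (of order $1/(\delta-6)$) and the overall bound is $O(e_N)$; at $\delta=6$ the logarithm $\log M\sim\log(e_N^{-1})$ survives.

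Absorbing $\varphi(a)$, $C_0$ and universal constants into $\widetilde C$ then gives the three stated estimates. The main technical subtlety is calibrating the cutoff $M$ so that the interior polynomial growth $e_N M^{6-\delta}$ balances against the exterior polynomial decay $M^{-\delta}$; the choice $M\sim e_N^{-1/6}$ is in fact forced by the side condition of Theorem \ref{thm:PhiNErrorBound}, and it is precisely at the critical exponent $\delta=6$ that the two pieces transition from a genuine rate improvement over $e_N$ to saturation at $e_N$, producing the expected logarithmic factor. The $1/|6-\delta|$ and $1/\delta$ coefficients come directly from the antiderivatives of $u^{5-\delta}$ and of the tail of $\widehat f$, respectively, so merging all contributions into the stated bound requires only routine bookkeeping.
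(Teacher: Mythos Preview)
Your approach is essentially the paper's: apply the Fourier pricing formula, split the integral into a near part $|u|\le M$ and a far part $|u|>M$, invoke Theorem~\ref{thm:PhiNErrorBound} on the near part and the trivial bound $|\varphi-\varphi^N|\le\varphi(a)+\varphi^N(a)$ on the far part. One small slip: with your exact choice $M=(C_0e_N)^{-1/6}$ the side condition of Theorem~\ref{thm:PhiNErrorBound} is \emph{not} met on all of $\{|u|\le M\}$, since at $|u|=M$ one gets $C_0(1+u^6)e_N=1+C_0e_N>1$. Taking instead $M=(2C_0e_N)^{-1/6}$ and $\eps\le 1/(2C_0)$ fixes this and only changes absolute constants. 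Also, the bound $\varphi^N(a)\le 2\varphi(a)$ you use for the far part follows more directly from Theorem~\ref{thm:PhiNErrorBound} at $b=0$ than from Proposition~\ref{prop:PsiNUniformConvergence}.

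Where you differ from the paper is in the execution, and your version is cleaner. The paper does not cut at the critical scale $e_N^{-1/6}$ directly; instead it introduces an auxiliary parameter $\eta>0$, sets $L=e_N^{-1/(6+\eta)}$ so that the side condition holds with a margin, and obtains intermediate bounds of the form $e_N^{\delta/(6+\eta)\wedge 1}$. It then has to run a separate optimization over $\eta$---choosing $\eta=7\log\gamma/\log e_N$ and treating the cases $\delta<6$, $\delta=6$, $\delta\in(6,7)$, $\delta\ge 7$ one by one---to squeeze the exponent back up to $\delta/6\wedge 1$. Your direct cutoff at $M\sim e_N^{-1/6}$ (after the constant fix above) lands on the final exponents immediately and avoids this entire $\eta$-removal step.
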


\begin{remark}
    We remark that the proof of Theorem \ref{thm:TheWeakErrorTheoremGeneralVersion} only needs Theorem \ref{thm:PhiNErrorBound}, and no further properties of rough Heston. In particular, let $X, X^N$ for $N\in\N$ be random variables, with characteristic functions $\varphi,\varphi^N$, such that $\varphi(a)<\infty$ for all $a\in[0,1]$. Then, for any sequence $e_N\to 0$ such that Theorem \ref{thm:PhiNErrorBound} is satisfied, Theorem \ref{thm:TheWeakErrorTheoremGeneralVersion} holds. Moreover, the interval $[0,1]$ here and in Theorems \ref{thm:PhiNErrorBound} and \ref{thm:TheWeakErrorTheoremGeneralVersion} may be replaced by any subinterval of $\mathcal{J}$ (where $\mathcal J$ is defined as in \eqref{eqn:DefinitionSetsIJ}).
\end{remark}

\begin{proof}[Proof of Theorem \ref{thm:TheWeakErrorTheoremGeneralVersion}]
By \eqref{eqn:FourierInversionPricing}, we have for all $L\ge 0$,
\begin{align*}
\abs{\E f(X_T) - \E f(X^N_T)} &= \frac{1}{2\pi}\abs{\int_{\R}\widehat{f}(b+ia)\left(\varphi(a-ib) - \varphi^N(a-ib)\right)\sdd b}\\
&\le \frac{1}{2\pi}\abs{\int_{-L}^L \widehat{f}(b+ia)\left(\varphi(a-ib) - \varphi^N(a-ib)\right)\sdd b}\\
&\qquad + \frac{1}{2\pi}\abs{\int_{\R\backslash[-L,L]}\widehat{f}(b+ia)\left(\varphi(a-ib) - \varphi^N(a-ib)\right)\sdd b}.
\end{align*}

We now bound these two summands separately. Considering the second summand first, we have
\begin{align*}
\abs{\int_{\R\backslash[-L,L]}\varphi(a - ib) \widehat{f}(b + ia)\sdd b} &\le \int_{\R\backslash[-L,L]}\abs{\varphi(a - ib)}\abs{\widehat{f}(b + ia)}\sdd b\\
&\le 2C_1\varphi(a)\int_L^\infty b^{-(1+\delta)} \sdd b = \frac{2C_1}{\delta}\varphi(a)L^{-\delta},
\end{align*}
where a similar statement is true for $\varphi^N$. Moreover, note that Theorem \ref{thm:PhiNErrorBound} implies that
\begin{align*}
\varphi^N(a) &\le \varphi(a) + \abs{\varphi(a) - \varphi^N(a)} \le \varphi(a) + 6C_0\varphi(a)e_N \le (6C_0+1)\varphi(a).
\end{align*}
Hence, the second summand can be bounded by
\begin{align*}
\Bigg|\int_{\R\backslash[-L,L]}\widehat{f}(b+ia) \left(\varphi(a-ib) - \varphi^N(a-ib)\right)\sdd b\Bigg| &\le \frac{2C_1}{\delta}(\varphi(a)+\varphi^N(a))L^{-\delta}\\
&= \frac{4C_1}{\delta}(3C_0+1)\varphi(a)L^{-\delta}.
\end{align*}

Next, we want to bound the first summand using Theorem \ref{thm:PhiNErrorBound}.  Let $C_0$ be as in Theorem \ref{thm:PhiNErrorBound}. and define $$L\coloneqq e_N^{-\frac{1}{6+\eta}}$$ for some $\eta > 0$. Note that we have
\begin{align*}
\lim_{N\to\infty} &\sup_{b\in[-L, L]}C_0(1 + b^6) e_N = \lim_{N\to\infty}C_0\left(1 + e_N^{-\frac{6}{6+\eta}}\right)e_N = 0.
\end{align*}
In particular, for large enough $N$, i.e. for $e_N$ sufficiently small, say
\begin{equation}\label{eqn:ENSmall}
e_N < \gamma^{\frac{6 + \eta}{\eta}}
\end{equation}
for some $\gamma \in (0, 1)$ independent of $\eta$, condition \eqref{eqn:StupidCondition} of Theorem \ref{thm:PhiNErrorBound} is satisfied for all $b\in[-L, L]$. 

Hence, we can apply Theorem \ref{thm:PhiNErrorBound} to bound the first summand, and we have 
\begin{align*}
    \Bigg|\int_{-L}^L \widehat{f}(b+ia) &\left(\varphi(a-ib) - \varphi^N(a-ib)\right)\sdd b\Bigg|\\
    &\le \int_{-L}^L C_1 (1 + \abs{b})^{-(1+\delta)}6 C_0(1 + b^6)\varphi(a)e_N \sdd b.
\end{align*}

Now, for $\delta \neq 6$,
\begin{align*}
    \int_{-L}^L (1 + \abs{b})^{-(1+\delta)}(1 + b^6)\sdd b &\le 2\int_0^L (1 + b)^{5-\delta} \sdd b = \frac{2}{\abs{6 - \delta}}\abs{1 - (1 + L)^{6-\delta}}\\
    &\le \frac{2(1 + L)^{(6-\delta)\lor 0}}{\abs{6 - \delta}} \le \frac{2^7L^{(6-\delta)\lor 0}}{\abs{6 - \delta}} = \frac{2^7}{\abs{6 - \delta}}e_N^{\frac{\delta-6}{6 + \eta}\land 0}.
\end{align*}

Similarly, for $\delta = 6$,
\begin{align*}
    \int_{-L}^L (1 + \abs{b})^{-(1+\delta)}(1 + b^6)\sdd b &\le 2\int_0^L (1 + b)^{-1} \sdd b = 2\log(1 + L)\\
    &= 2\log\left(1 + e^{1-\frac{1}{6 + \eta}}\right) \le \frac{2}{\eta}e_N^{-\frac{\eta}{6 + \eta}}.
\end{align*}

Altogether, we get for $e_N < \gamma^{\frac{6 + \eta}{\eta}}$ (and $\delta \neq 6$)
\begin{align*}
\abs{\E f(X_T) - \E f(X^N_T)} &\le \frac{1}{2\pi}\frac{768}{\abs{6-\delta}}C_0C_1\varphi(a)e_N^{\frac{\delta + \eta}{6+\eta}\land 1} + \frac{1}{2\pi}\frac{4C_1}{\delta}(3C_0 + 1)\varphi(a)e_N^{\frac{\delta}{6+\eta}}\\
&\le \widetilde{C}\left(\frac{1}{\abs{6-\delta}} + \frac{1}{\delta}\right)C_1e_N^{\frac{\delta}{6+\eta}\land 1},
\end{align*}
with $\widetilde{C} = \varphi(a)(123C_0\lor 1),$ and similarly, for $\delta = 6$,
$$\abs{\E f(X_T) - \E f(X^N_T)} \le \widetilde{C}\left(1 + \frac{1}{\eta}\right)C_1e_N^{\frac{6}{6+\eta}}.$$

This is almost the statement of the theorem, except that we need to get rid of the parameter $\eta$. We do this by choosing an appropriate value of $\eta$, where we restrict ourselves to $\eta\in(0, 1)$. We may then sharpen requirement \eqref{eqn:ENSmall} to $$e_N \le \gamma^{7/\eta}, \text{ and thus choose } \eta = \frac{7\log\gamma}{\log e_N}.$$

We remark that $\eta\in(0, 1)$ for $e_N$ sufficiently small (i.e. $e_N < \eps$ for some $\eps > 0$). Next, note that for $\delta < 6 + \eta$, we have 
\begin{align*}
    \frac{\delta}{6 + \eta} &= \frac{\delta}{6} - \frac{\eta\delta}{6(6+\eta)} \ge \frac{\delta}{6} - \frac{\eta}{6} = \frac{\delta}{6} - \frac{7\log \gamma}{6\log e_N},
\end{align*}
and in particular (if additionally $\delta \neq 6$),
\begin{align*}
    \abs{\E f(X_T) - \E f(X^N_T)} &\le \widetilde{C}\left(\frac{1}{\abs{6-\delta}} + \frac{1}{\delta}\right)C_1\exp\left(\left(\frac{\delta}{6} - \frac{7\log \gamma}{6\log e_N}\right)\log e_N\right)\\
    &= \widetilde{C}\left(\frac{1}{\abs{6-\delta}} + \frac{1}{\delta}\right)C_1e^{-\frac{7\log \gamma}{6}}e_N^{\delta/6}.
\end{align*}
Now, the requirements $\delta < 6 + \eta$ and $\delta \neq 6$ are certainly satisfied for $\delta < 6$, proving this error bound in this regime.

Next, consider $\delta = 6$. In this case we still have $\delta < 6 + \eta$, and we get
\begin{align*}
    \abs{\E f(X_T) - \E f(X^N_T)} &\le \widetilde{C}\left(1 + \frac{\log e_N}{7\log \gamma}\right)C_1\exp\left(-\frac{7\log\gamma}{6\log e_N}\log e_N\right)e_N\\
    &\le \widehat{C} C_1 \log (e_N^{-1}) e_N.
\end{align*}

For $\delta \ge 7$, we may choose $\eta = 1$, finishing the proof in this case.

Finally, consider $\delta\in(6, 7)$. Clearly, if the decay assumption on $\widehat{f}$ holds true for some $\delta > 6$, it is also true for $\delta = 6$, implying that for $\delta\in(6, 7)$ we also have the same bound as for $\delta = 6$. On the other hand, we may choose $\eta = \delta - 6$, which is admissible for $N$ such that $e_N^{\delta - 6} \le \gamma^7$. Thus, we get a bound of the form
\begin{align*}
    \abs{\E f(X_T) - \E f(X^N_T)} &\le 
    \begin{cases}
        \widehat{C}C_1 \frac{1}{\delta - 6} e_N,\qquad &e_N^{\delta - 6} \le \gamma^7\\
        \widehat{C}C_1\log(e_N^{-1})e_N,\qquad &e_N^{\delta - 6} > \gamma^7.
    \end{cases}\\
    &\le 
    \begin{cases}
        \widehat{C}C_1 \frac{1}{\delta - 6} e_N,\qquad &e_N^{\delta - 6} \le \gamma^7\\
        \widehat{C}C_1\log\left(\gamma^{-\frac{7}{\delta - 6}}\right)e_N,\qquad &e_N^{\delta - 6} > \gamma^7.
    \end{cases}\\
    &\le \frac{\widehat{C}C_1}{\delta - 6}e_N.
\end{align*}
This proves the corollary.
\end{proof}

Essentially, Theorem \ref{thm:TheWeakErrorTheoremGeneralVersion} is already what we wanted to show, in that it demonstrates that the weak error in the log-stock price $\log(S_T)$ can be bounded by $e_N$. From here on, we merely want to polish this result. Mainly, we want to formulate the decay condition \eqref{eqn:FourierTransformDecayCondition} in terms of the payoff function $f$ itself (Corollary \ref{cor:TheGeneralTheoremInTermsOfDifferentiableFunctions}), weaken or generalize the decay condition (Corollary \ref{cor:FinalCorollaryInTermsOfLogStock}), and to formulate the result in terms of payoffs of the stock price $S_T$, rather than the log-stock price $\log(S_T)$ (Corollary \ref{cor:FinalCorollaryInTermsOfStock}). Finally, in Corollaries \ref{cor:FinalCorollaryBoundedPayoff}, \ref{cor:FinalCorollaryProse} and \ref{cor:FinalCorollaryLipschitz}, we give more transparent special cases.

First, we formulate the decay condition \eqref{eqn:FourierTransformDecayCondition} on the Fourier transform $\widehat{f}$ in terms of the payoff function $f$ itself.

\begin{corollary}\label{cor:TheGeneralTheoremInTermsOfDifferentiableFunctions}
    Let $f:\R\to\R$ be a payoff function of the log-stock price, and let $n\ge 2$ be an integer. Assume that $f$ is $n$ times weakly differentiable, and that there is $a\in[0, 1]$ and $r\in L^1$ such that $$\abs{f(x)} \le e^{ax}r(x),\qquad \abs{f^{(n)}(x)} \le e^{ax}r(x).$$ Then, there exist constants $\widetilde{C},\eps>0$ independent of $f$ and $r$, such that if $e_N < \eps,$ then, for $n\neq 7$, $$\abs{\E f(X_T) - \E f(X^N_T)} \le \widetilde{C} \|r\|_1 e_N^{\frac{n-1}{6}\land 1}.$$ For $n = 7$, we get $$\abs{\E f(X_T) - \E f(X^N_T)} \le \widetilde{C} \|r\|_1 \log(e_N^{-1})e_N.$$
\end{corollary}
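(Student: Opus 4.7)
The plan is to reduce the corollary to Theorem \ref{thm:TheWeakErrorTheoremGeneralVersion} by establishing the pointwise decay estimate
\begin{equation*}
    \abs{\widehat{f}(b+ia)} \le 2^n \|r\|_1 (1+\abs{b})^{-n},
\end{equation*}
which is precisely hypothesis \eqref{eqn:FourierTransformDecayCondition} with $\delta = n-1 \ge 1$ and $C_1 = 2^n \|r\|_1$.

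First, I would rewrite the complex Fourier transform as the ordinary Fourier transform of the damped function $f_a(x) := e^{-ax} f(x)$, so that $\widehat{f}(b+ia) = \widehat{f_a}(b)$. The hypothesis $\abs{f_a} \le r \in L^1$ yields $f_a \in L^1$ and the trivial bound $\abs{\widehat{f_a}(b)} \le \|r\|_1$. Introducing the shifted derivative operator $D_a := \frac{\dd}{\dd x} + a$, a direct Leibniz computation gives $D_a f_a = e^{-ax} f'$ in the weak sense, and iterating $n$ times produces $D_a^n f_a = e^{-ax} f^{(n)}$, which is again dominated by $r \in L^1$. Taking Fourier transforms in the sense of tempered distributions, and using that distributional differentiation commutes with the Fourier transform, we obtain the identity
\begin{equation*}
    \widehat{e^{-ax} f^{(n)}}(b) = (a - ib)^n \widehat{f_a}(b).
\end{equation*}
Since the left-hand side is a continuous bounded function (being the Fourier transform of an $L^1$ function), this equality holds pointwise, so $\abs{\widehat{f_a}(b)} \le \|r\|_1 / \abs{a-ib}^n \le \|r\|_1 / \abs{b}^n$ for $b \neq 0$. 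Combining this with the trivial bound on $\abs{b}\le 1$ and using $1+\abs{b}\le 2\max(1,\abs{b})$ delivers the desired decay estimate.

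Next, I would apply Theorem \ref{thm:TheWeakErrorTheoremGeneralVersion} with $\delta = n-1$ and $C_1 = 2^n\|r\|_1$. The membership $a \in \mathcal{I} \cap [0,1]$ is verified as follows: the inclusion $[0,1]\subseteq \mathcal{J}$ is the remark after \eqref{eqn:DefinitionSetsIJ}; meanwhile the decay estimate just proved (with exponent $n\ge 2$) gives $\widehat{f_a}\in L^1$, and Fourier inversion together with the absolute continuity coming from weak differentiability yields a bounded continuous $L^1$ representative of $f_a$, so $a\in \mathcal{I}$. The two branches of Theorem \ref{thm:TheWeakErrorTheoremGeneralVersion} for $\delta\neq 6$ and $\delta = 6$ correspond exactly to the cases $n\neq 7$ and $n = 7$ of the corollary; the $n$-dependent factor $2^n$ and the prefactors $\abs{6-\delta}^{-1}+\delta^{-1}$ from the theorem are absorbed into $\widetilde{C}$, which is permitted since the statement only requires independence of $f$ and $r$.

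The main obstacle is the rigorous justification of the identity $\widehat{D_a^n f_a}(b) = (a-ib)^n\widehat{f_a}(b)$. Because only $f$ and $f^{(n)}$ (weighted by $e^{-ax}$) are assumed integrable, the intermediate derivatives $f^{(k)}$ for $1 \le k \le n-1$ are not individually controlled, so a naive iterated integration by parts with vanishing boundary terms at $\pm \infty$ is not directly available. The cleanest remedy is to work throughout at the level of tempered distributions, where differentiation commutes with the Fourier transform unconditionally, and then exploit the crucial fact that in our situation the two endpoints $f_a = D_a^0 f_a$ and $D_a^n f_a$ are both represented by $L^1$ functions, so the distributional identity descends to a pointwise identity of continuous bounded functions and legitimises the estimate above.
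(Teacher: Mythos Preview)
Your proof is correct and follows essentially the same route as the paper: establish the decay $\abs{\widehat{f}(b+ia)}\le 2^n\|r\|_1(1+\abs{b})^{-n}$ from the bounds on $f$ and $f^{(n)}$, then invoke Theorem~\ref{thm:TheWeakErrorTheoremGeneralVersion} with $\delta=n-1$. The paper simply asserts the identity $\abs{(b+ai)^n\widehat{f}(b+ai)}=\abs{\widehat{f^{(n)}}(b+ai)}$ without comment; your distributional justification via $D_a^n f_a = e^{-ax}f^{(n)}$ is the same computation made rigorous, and your remark that the intermediate derivatives are uncontrolled (so naive integration by parts is not directly available) is a fair point the paper glosses over.
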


\begin{proof}
    Observe that we have 
    \begin{align*}
        \abs{\widehat{f^{(n)}}(b + ai)} &= \abs{\int_{\R} e^{i(b+ai)x} f^{(n)}(x) \sdd x} \le \int_{\R} e^{-ax} \abs{f^{(n)}(x)} \sdd x \le \int_{\R}r(x) \sdd x = \|r\|_1.
    \end{align*}
    In particular, since $\abs{b^n} \le \abs{(b+ai)^n},$ we have $$\abs{b^n \widehat{f}(b + ai)} \le \abs{(b + ai)^n \widehat{f}(b + ai)} = \abs{\widehat{f^{(n)}}(b + ai)} \le \|r\|_1.$$ Moreover, in the same spirit as above, we have $$\abs{\widehat{f}(b + ai)} = \abs{\int_{\R}e^{i(b + ai)x} f(x) \sdd x} \le \|r\|_1.$$ This shows that $$\abs{\widehat{f}(b + ai)} \le \|r\|_1\left(1\land \abs{b}^{-n}\right) \le 2^n\|r\|_1 (1 + \abs{b})^{-n}.$$ 
    
    The corollary follows by an easy application of Theorem \ref{thm:TheWeakErrorTheoremGeneralVersion} once we show that $a\in\mathcal{I}$, i.e. $f_a\in L^1_{\textup{bc}}$ and $\widehat{f_a}\in L^1$. First, $f_a\in L^1_{\textup{bc}}$ follows immediately from the assumptions. For the second condition, we merely note that $\widehat{f_a}(b) = \widehat{f}(b + ai).$
\end{proof}

In Corollary \ref{cor:TheGeneralTheoremInTermsOfDifferentiableFunctions}, the payoff $f$ needs to satisfy a very specific growth condition of the form $\abs{f(x)} \le e^{ax}r(x)$ for some $r\in L^1$. Since $f$ is a function of the log-stock price $X_T \coloneqq \log(S_T)$, this translates roughly to a polynomial growth condition of the form $\abs{h(s)} \le s^a r(\log s)$, where $h(s) = f(\log s)$ is the corresponding payoff function of the stock price $S_T$. In general, this is a reasonable requirement, since $M_q\coloneqq \E S_T^q$ may not exist for $q\in\R\backslash[0, 1]$. This non-existence of the moments of the stock price is also referred to as moment-explosion, and has been studied e.g. in \cite{gerhold2019moment} for the rough Heston model. However, depending on the parameters of the rough Heston model, the moments $M_q$ may exist for all $T>0$. In Lemma \ref{lem:MomentsExist} in the appendix, we give some sufficient conditions for the existence of moments $M_q$ for some $q < 0$ or $q > 1$. 

In the following corollary, we want to extend Corollary \ref{cor:TheGeneralTheoremInTermsOfDifferentiableFunctions} to allow also other growth conditions on the payoff function, depending on the existence of the moments $M_q$. The statement below with all its parameters may look quite formidable, but all it says is that if the payoff function $f$ grows slightly slower than would be permitted for the existence of $\E f(X_T)$, and if the derivatives of $f$ up to some order $n$ grow at most exponentially fast (for $x\to\pm\infty$), then we can give a rate of convergence. More precisely, the parameters $q_1$ and $-q_2$ should be interpreted as the largest and smallest moments of $S_T$ that exist, $\gamma_1$ and $\gamma_2$ denote how much slower $f$ grows than permitted by $q_1$ and $q_2$, and $p_1$ and $p_2$ determine how fast the derivatives of $f$ grow. Furthermore, we apply Corollary \ref{cor:TheGeneralTheoremInTermsOfDifferentiableFunctions} with the parameter $a \coloneqq (\widetilde{a} \land 1) \lor 0$. Since this Corollary is rather difficult to comprehend, we give some much simpler special cases in Corollaries \ref{cor:FinalCorollaryBoundedPayoff}, \ref{cor:FinalCorollaryProse} and \ref{cor:FinalCorollaryLipschitz} below.

\begin{corollary}\label{cor:FinalCorollaryInTermsOfLogStock}
Let $q_1 \ge 1$, let $q_2,p_1,p_2 \ge 0$, and let $\gamma_1, \gamma_2 > 0$. Let $f:\R\to\R$ be a payoff function of the log-stock price, and let $n\ge 2$ be an integer. Assume that $f$ is $n$ times weakly differentiable, and that there is $C_1 > 0$, $r\in L^1$ taking values in $(0, 1]$ such that $r^{-1}$ is locally bounded, and $\beta\in[0, 1]$ such that for all $k=1,\dots,n$,
\begin{align}
    \abs{f(x)} &\le 
    \begin{cases}
        C_1 e^{-(q_2 - \gamma_2)x}r(x)^\beta,\qquad &x\le 0,\\
        C_1 e^{(q_1 - \gamma_1)x}r(x)^\beta,\qquad &x\ge 0,
    \end{cases}\label{eqn:BoundOnf}\\
    \abs{f^{(k)}(x)} &\le 
    \begin{cases}
        C_1 e^{-(p_2 + q_2 - \gamma_2)x}r(x)^\beta,\qquad &x\le 0,\\
        C_1 e^{(p_1 + q_1 - \gamma_1)x}r(x)^\beta,\qquad &x\ge 0.
    \end{cases}\label{eqn:BoundOnDerivativesOff}
\end{align}
Then, there exist constants $\widetilde{C},\eps>0$ independent of $f, C_1,r,q_1,q_2,p_1,p_2,\gamma_1,\gamma_2,\beta$, such that if $e_N < \eps,$ then, for $n\neq 7$,
\begin{equation}\label{eqn:TheConvergenceRate}
\abs{\E f(X_T) - \E f(X^N_T)} \le \widetilde{C} C_1\left(e^{p_1+p_2+q_1+q_2}\widetilde{r}(e_N)^{\beta-1} + M_{-q_2} + M_{q_1}\right)e_N^\alpha,
\end{equation}
where
\begin{align*}
\alpha &\coloneqq 
\begin{cases}
\left(\frac{\gamma_1 + \gamma_2}{p_1+p_2+q_1+q_2}\land 1\right)\left(\frac{n-1}{6}\land 1\right),\quad &\widetilde{a}\in[0, 1],\\
\left(\frac{\gamma_2}{p_2+q_2}\land 1\right)\left(\frac{n-1}{6}\land 1\right),\quad &\widetilde{a} < 0,\\
\left(\frac{\gamma_1}{p_1 + q_1 - 1}\land 1\right)\left(\frac{n-1}{6}\land 1\right)\quad &\widetilde{a} > 1,
\end{cases},\quad \widetilde{a} \coloneqq \frac{(p_1+q_1)\gamma_2 - (p_2+q_2)\gamma_1}{\gamma_1+\gamma_2},
\end{align*}
and $$\widetilde{r}(e_N) \coloneqq \inf\left\{r(x)\colon x\in\left[-\frac{\alpha}{\gamma_2} \log e_N^{-1}-1,\frac{\alpha}{\gamma_1} \log e_N^{-1}+1\right]\right\}.$$ The same result holds true for $n=7$ if we multiply the upper bound \eqref{eqn:TheConvergenceRate} by $\log(e_N^{-1}).$
\end{corollary}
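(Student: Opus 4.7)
The strategy is to reduce to Corollary \ref{cor:TheGeneralTheoremInTermsOfDifferentiableFunctions} via a compactly supported truncation of $f$: since the raw $f$ and $f^{(n)}$ grow faster than permitted for a direct application, one truncates to a window $[-L_2-1,L_1+1]$ of size $\asymp \log(e_N^{-1})$, absorbs the tail into moments of $S_T$, and then applies Corollary \ref{cor:TheGeneralTheoremInTermsOfDifferentiableFunctions} to the truncated function. Concretely, fix a smooth cutoff $\chi\colon\R\to[0,1]$ with $\chi\equiv 1$ on $[-L_2,L_1]$, $\operatorname{supp}\chi\subseteq[-L_2-1,L_1+1]$, and all derivatives bounded uniformly in $L_1,L_2$; set $\tilde f\coloneqq f\chi$. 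Using $r\le 1$ and the Chernoff-type bounds
$$\E\bigl[e^{(q_1-\gamma_1)X_T}\mathds{1}_{\{X_T\ge L_1\}}\bigr]\le e^{-\gamma_1 L_1}M_{q_1},\qquad \E\bigl[e^{-(q_2-\gamma_2)X_T}\mathds{1}_{\{X_T\le -L_2\}}\bigr]\le e^{-\gamma_2 L_2}M_{-q_2},$$
one obtains $\abs{\E f(X_T)-\E\tilde f(X_T)}\le C_1(e^{-\gamma_1 L_1}M_{q_1}+e^{-\gamma_2 L_2}M_{-q_2})$, and an analogous bound for $X_T^N$ once uniform-in-$N$ control of $\E(S_T^N)^{\pm q_i}$ is established (see the final paragraph).

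Next, apply Corollary \ref{cor:TheGeneralTheoremInTermsOfDifferentiableFunctions} to $\tilde f$ with the shift $a\coloneqq(\widetilde{a}\land 1)\lor 0\in[0,1]$. A natural candidate for the dominating $L^1$-function is $r_{\tilde f}(x)\coloneqq e^{-ax}(\abs{\tilde f(x)}+\abs{\tilde f^{(n)}(x)})$, which is in $L^1$ by compactness of the support. Leibniz's rule $\tilde f^{(n)}=\sum_{k=0}^n\binom{n}{k}f^{(k)}\chi^{(n-k)}$ combined with the growth bounds on $f^{(k)}$ gives, uniformly on $\operatorname{supp}(\tilde f)$,
$$r_{\tilde f}(x)\le C_{n,\chi}C_1\begin{cases}e^{(p_1+q_1-\gamma_1-a)x}r(x)^\beta,& x\ge 0,\\ e^{-(p_2+q_2-\gamma_2+a)x}r(x)^\beta,& x\le 0.\end{cases}$$
Since $\beta-1\le 0$ and $r(x)\ge\widetilde r(e_N)$ on the truncation window by definition of $\widetilde r$, the pointwise inequality $r(x)^\beta\le\widetilde r(e_N)^{\beta-1}r(x)$ holds, so integrating yields
$$\|r_{\tilde f}\|_1\le CC_1\widetilde r(e_N)^{\beta-1}\|r\|_1\Bigl(e^{(p_1+q_1-\gamma_1-a)(L_1+1)_+}+e^{(p_2+q_2-\gamma_2+a)(L_2+1)_+}\Bigr).$$
Corollary \ref{cor:TheGeneralTheoremInTermsOfDifferentiableFunctions} then produces $\abs{\E\tilde f(X_T)-\E\tilde f(X_T^N)}\le\widetilde C\|r_{\tilde f}\|_1 e_N^{\alpha_0}$ with $\alpha_0=(n-1)/6\land 1$ (and a $\log(e_N^{-1})$ correction when $n=7$).

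It remains to optimise $L_1$, $L_2$, and $a$. Balancing the tail bound against the main-term bound on each side of $0$ gives
$$L_1=\frac{\alpha_0\log e_N^{-1}}{p_1+q_1-a},\qquad L_2=\frac{\alpha_0\log e_N^{-1}}{p_2+q_2+a}$$
(with $L_i=\infty$ when the denominator is non-positive, meaning no truncation is needed on that side), and then both the tail and the main contribution on the respective side contribute of the order $e_N^{\gamma_1\alpha_0/(p_1+q_1-a)}$, respectively $e_N^{\gamma_2\alpha_0/(p_2+q_2+a)}$. The first exponent is increasing in $a$ and the second decreasing; their minimum is therefore maximised by equating them, which yields exactly $a=\widetilde{a}$. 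After clipping to $[0,1]$ one obtains the three cases for $\alpha$ stated in the corollary, and the chosen $L_i$ are precisely the endpoints entering the definition of $\widetilde r(e_N)$, closing the loop.

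The main technical hurdle is the uniform-in-$N$ moment control $\E(S_T^N)^{\pm q_i}\lesssim M_{\pm q_i}$ used in the tail step. For $\pm q_i\in[0,1]$ this is immediate from Theorem \ref{thm:PhiNErrorBound}; the general case follows from the Remark after Theorem \ref{thm:TheWeakErrorTheoremGeneralVersion}, extending Propositions \ref{prop:PsiNUniformConvergence}--\ref{prop:PsiNUniformBound} (and hence $\varphi^N(q)\le(6C_0+1)\varphi(q)$) to any subinterval of $\mathcal J$, with finiteness of $M_{\pm q_i}$ itself supplied by Lemma \ref{lem:MomentsExist}. Everything else is elementary bookkeeping of exponents.
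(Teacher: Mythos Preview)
Your proposal is correct and follows essentially the same route as the paper: truncate $f$ by a smooth cutoff supported on $[-L_2-1,L_1+1]$ (the paper constructs this via Corollary~\ref{cor:MakeFunctionCompact}), control the tail via the moment bounds $M_{q_1},M_{-q_2}$, apply Corollary~\ref{cor:TheGeneralTheoremInTermsOfDifferentiableFunctions} to the truncated function with shift $a=(\widetilde a\wedge 1)\vee 0$, and then choose $L_i=\frac{\alpha}{\gamma_i}\log e_N^{-1}$ to balance. Your optimisation of $a$ and your explicit handling of the uniform-in-$N$ moment control (via Lemma~\ref{lem:MomentsExist}) are in fact slightly more detailed than the paper's sketch, which simply states the final parameter choices.
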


\begin{proof}
    The proof is rather technical, so we only give the general idea. Fix some parameters $b_1, b_2 > 0$. Let $g:\R\to\R$ be a function that is $n$ times weakly differentiable such that for $k=1,\dots,n$
    \begin{align}
    g(x) &= 
    \begin{cases}
        f(x),\qquad &x\in [-b_2, b_1],\\
        0,\qquad &x\in(-\infty, -b_2-1]\cup[b_1+1,\infty),
    \end{cases}\nonumber\\
    \abs{g(x) - f(x)} &\le 
    \begin{cases}
        C_1 e^{-(q_2-\gamma_2)x}r(x)^\beta,\qquad &x\in(-\infty, -b_2],\\
        C_1 e^{(q_1-\gamma_1)x}r(x)^\beta,\qquad &x\in[b_1, \infty),
    \end{cases}\label{eqn:ErrorOfThisApproximation}\\
    \abs{g^{(k)}(x)} &\le 
    \begin{cases}
        C_1 c_k e^{-(p_2 + q_2-\gamma_2)x}r(x)^\beta,\qquad &x\in[-b_2-1, -b_2],\\
        C_1 c_k e^{(p_1 + q_1-\gamma_1)x}r(x)^\beta,\qquad &x\in[b_1, b_1+1].
    \end{cases}\nonumber
    \end{align}

    The existence of such a function $g$ is ensured by Corollary \ref{cor:MakeFunctionCompact}, and it satisfies the conditions of Corollary \ref{cor:TheGeneralTheoremInTermsOfDifferentiableFunctions}. Using 
    \begin{align}
        \abs{\E f(X_T) - \E f(X^N_T)} &\le \abs{\E g(X_T) - \E g(X^N_T)}\label{eqn:FirstTerm}\\
        &\qquad + \abs{\E f(X_T) - \E g(X_T)} + \abs{\E f(X^N_T) - \E g(X^N_T)},\label{eqn:SecondTerms}
    \end{align}
    we estimate \eqref{eqn:FirstTerm} by Corollary \ref{cor:TheGeneralTheoremInTermsOfDifferentiableFunctions}, and \eqref{eqn:SecondTerms} by \eqref{eqn:ErrorOfThisApproximation} together with $M_{q_1}$ and $M_{-q_2}$. Finally, the choice $$b_1 = -\frac{\alpha}{\gamma_1}\log e_N,\qquad b_2 = -\frac{\alpha}{\gamma_2}\log e_N$$ yields the desired result.
\end{proof}

Finally, we formulate Corollary \ref{cor:FinalCorollaryInTermsOfLogStock} in terms of payoff functions of the stock price $S_T$.

\begin{corollary}\label{cor:FinalCorollaryInTermsOfStock}
Let $q_1 \ge 1$, let $q_2,p_1,p_2 \ge 0$, and let $\gamma_1, \gamma_2 > 0$. Let $h:\R_+\to\R$ be a payoff function of the stock price, and let $n\ge 2$ be an integer. Assume that $h$ is $n$ times weakly differentiable, and that there is $C_1 > 0$, $r:\R_+\to\R$ taking values in $(0, 1]$ such that $r^{-1}$ is locally bounded and $\int_{-\infty}^\infty r(e^x) dx < \infty$, and $\beta\in[0, 1]$ such that for all $k=1,\dots,n$,
\begin{align*}
    \abs{h(x)} &\le 
    \begin{cases}
        C_1 x^{-(q_2 - \gamma_2)}r(x)^\beta,\qquad &x\le 1,\\
        C_1 x^{q_1 - \gamma_1}r(x)^\beta,\qquad &x\ge 1,
    \end{cases}\\
    \abs{h^{(k)}(x)} &\le 
    \begin{cases}
        C_1 x^{-(p_2 + q_2 - \gamma_2 + k)}r(x)^\beta,\qquad &x\le 1,\\
        C_1 x^{p_1 + q_1 - \gamma_1 - k}r(x)^\beta,\qquad &x\ge 1.
    \end{cases}
\end{align*}
Then, there exist constants $\widetilde{C},\eps>0$ independent of $h, C_1,r,q_1,q_2,p_1,p_2,\gamma_1,\gamma_2,\beta$ ($\eps$ also independent of $n$), such that if $e_N < \eps,$ then, for $n\neq 7$, $$\abs{\E h(S_T) - \E h(S^N_T)} \le \widetilde{C} C_1\left(e^{p_1+p_2+q_1+q_2}\widetilde{r}(e_N)^{\beta-1} + M_{-q_2} + M_{q_1}\right)e_N^\alpha,$$ where $\alpha$ is as in Corollary \ref{cor:FinalCorollaryInTermsOfLogStock}, and $$\widetilde{r}(e_N) \coloneqq \inf\left\{r(x)\colon x\in\left[e^{-1}e_N^{\frac{\alpha}{\gamma_2}},ee_N^{-\frac{\alpha}{\gamma_1}}\right]\right\}.$$ The same results hold true for $n=7$ if we multiply the upper bounds by $\log(e_N^{-1}).$
\end{corollary}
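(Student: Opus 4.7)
The plan is to reduce Corollary \ref{cor:FinalCorollaryInTermsOfStock} to the already-established Corollary \ref{cor:FinalCorollaryInTermsOfLogStock} by pulling $h$ back to a payoff of the log-stock price. Concretely, I set
\[
f(x) \coloneqq h(e^x), \qquad \widetilde r(x) \coloneqq r(e^x),
\]
so that $\E h(S_T) = \E f(X_T)$ and $\E h(S_T^N) = \E f(X_T^N)$. The bulk of the work is to check that $f$ and $\widetilde r$ satisfy the hypotheses of Corollary \ref{cor:FinalCorollaryInTermsOfLogStock} with the same parameters $q_1, q_2, p_1, p_2, \gamma_1, \gamma_2, \beta$, and then to translate the conclusion back to the stock-price variable.

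First I would verify the conditions on $\widetilde r$. Since $r$ takes values in $(0,1]$, so does $\widetilde r$; since $r^{-1}$ is locally bounded on $\R_+$ and $x \mapsto e^x$ maps compact subsets of $\R$ to compact subsets of $\R_+$, $\widetilde r^{-1}$ is locally bounded on $\R$; and $\widetilde r \in L^1(\R)$ is exactly the hypothesis $\int_{-\infty}^\infty r(e^x)\,dx < \infty$. Next, for the bound on $f$ itself, observe that for $x \ge 0$ we have $e^x \ge 1$ and hence
\[
|f(x)| = |h(e^x)| \le C_1 (e^x)^{q_1 - \gamma_1} r(e^x)^\beta = C_1 e^{(q_1 - \gamma_1)x} \widetilde r(x)^\beta,
\]
and symmetrically for $x \le 0$, matching \eqref{eqn:BoundOnf}.

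The main technical step is the derivative bound. By induction on $k$ (Faà di Bruno, or just the chain rule), there exist integers $c_{k,i}$ (the Stirling numbers of the second kind, independent of $h$) such that
\[
f^{(k)}(x) = \sum_{i=1}^k c_{k,i}\, e^{ix}\, h^{(i)}(e^x).
\]
Substituting the assumed bound on $h^{(i)}$ for $x \ge 0$ gives
\[
|f^{(k)}(x)| \le C_1 \sum_{i=1}^k c_{k,i}\, e^{ix}\, (e^x)^{p_1 + q_1 - \gamma_1 - i}\, r(e^x)^\beta = C_1 \Bigl(\sum_{i=1}^k c_{k,i}\Bigr) e^{(p_1 + q_1 - \gamma_1)x} \widetilde r(x)^\beta,
\]
and analogously for $x \le 0$; the exponents $i$ in $e^{ix}$ and $(e^x)^{-i}$ cancel cleanly, which is the one place where it is important that the derivative exponents in the hypothesis on $h^{(k)}$ are shifted down by exactly $k$. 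The combinatorial prefactor $\sum_i c_{k,i}$ only depends on $n$ and is absorbed into $\widetilde C$. Thus $f$ satisfies \eqref{eqn:BoundOnDerivativesOff} with the same parameters, up to an $n$-dependent constant.

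Applying Corollary \ref{cor:FinalCorollaryInTermsOfLogStock} to $f$ then yields, for $n \neq 7$ and $e_N < \eps$,
\[
|\E h(S_T) - \E h(S_T^N)| = |\E f(X_T) - \E f(X_T^N)| \le \widetilde C C_1 \bigl(e^{p_1+p_2+q_1+q_2}\widetilde r_{\log}(e_N)^{\beta - 1} + M_{-q_2} + M_{q_1}\bigr) e_N^\alpha,
\]
with $\alpha$ exactly as defined there, and with
\[
\widetilde r_{\log}(e_N) = \inf\Bigl\{\widetilde r(x) : x \in \bigl[-\tfrac{\alpha}{\gamma_2}\log e_N^{-1} - 1,\ \tfrac{\alpha}{\gamma_1}\log e_N^{-1} + 1\bigr]\Bigr\}.
\]
The final bookkeeping step is the change of variable $y = e^x$: since $e^{\pm 1}$ maps the above $x$-interval to $[e^{-1} e_N^{\alpha/\gamma_2},\, e\, e_N^{-\alpha/\gamma_1}]$, we get $\widetilde r_{\log}(e_N) = \widetilde r(e_N)$ in the sense of the statement, completing the proof. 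The $n = 7$ case is identical with an extra $\log(e_N^{-1})$ factor inherited from Corollary \ref{cor:FinalCorollaryInTermsOfLogStock}. I do not anticipate any serious obstacle beyond the chain-rule computation and verifying that the shifts $p_j + q_j - \gamma_j - k$ in the derivative hypothesis on $h$ cancel the $e^{ix}$ factors, which is the reason the stated growth orders for $h^{(k)}$ are exactly $k$ below those for $h$.
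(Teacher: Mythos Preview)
Your proposal is correct and follows essentially the same approach as the paper: define $f(x) = h(e^x)$, verify \eqref{eqn:BoundOnf} and \eqref{eqn:BoundOnDerivativesOff} via the Fa\`a di Bruno formula (exploiting the cancellation of $e^{ix}$ with $(e^x)^{-i}$), and apply Corollary \ref{cor:FinalCorollaryInTermsOfLogStock}. The paper's proof is much terser, leaving the chain-rule computation and the translation of $\widetilde r$ implicit, but the argument is the same.
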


\begin{proof}
    Define the function $f:\R\to\R$ by $f(x) \coloneqq h(e^x)$. Then, we only have to verify equations \eqref{eqn:BoundOnf} and \eqref{eqn:BoundOnDerivativesOff}. However, the bound \eqref{eqn:BoundOnf} for $f$ itself is trivial, and \eqref{eqn:BoundOnDerivativesOff} follows immediately from the Faà di Bruno formula.
\end{proof}

We do not claim that the rate of convergence $e_N^\alpha$ in Corollary \ref{cor:FinalCorollaryInTermsOfStock} is necessarily best possible in general. However, the assumptions imposed on the payoff function are quite natural. They essentially amount to requiring that $h$ grows slightly slower than the largest moment of $S_T$ that exists. If this is the case we can give a specific convergence rate in $e_N$.

Since Corollary \ref{cor:FinalCorollaryInTermsOfStock} can be quite intimidating, we now give a few special cases.

\begin{corollary}\label{cor:FinalCorollaryBoundedPayoff}
Let $h:\R_+\to\R$ be 8 times weakly differentiable and compactly supported. Then there is $C>0$ such that $$\abs{\E h(S_T) - \E h(S^N_T)} \le Ce_N.$$
\end{corollary}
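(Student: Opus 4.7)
The strategy is to reduce the claim to Corollary \ref{cor:TheGeneralTheoremInTermsOfDifferentiableFunctions} applied to the log-stock payoff $f(x) \coloneqq h(e^x)$, so that $\E h(S_T) = \E f(X_T)$ and $\E h(S^N_T) = \E f(X^N_T)$. Since $h$ is compactly supported in $\R_+$, its support lies in some interval $[s_1, s_2]$ with $0 < s_1 \le s_2 < \infty$, and therefore $f$ is compactly supported in $[\log s_1, \log s_2] \subset \R$.

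Next I would verify that $f$ is $8$ times weakly differentiable by invoking Faà di Bruno's formula (already used in the proof of Corollary \ref{cor:FinalCorollaryInTermsOfStock}): each derivative $f^{(k)}(x)$ is a finite linear combination of terms $e^{jx} h^{(j)}(e^x)$ with $j \le k$, all of which are compactly supported in $[\log s_1, \log s_2]$. Consequently both $f$ and $f^{(8)}$ belong to $L^1_{\text{loc}}(\R)$ and are compactly supported, hence to $L^1(\R)$. Setting $r(x) \coloneqq \abs{f(x)} + \abs{f^{(8)}(x)}$ gives an $L^1$ majorant satisfying $\abs{f(x)} \le e^{0\cdot x} r(x)$ and $\abs{f^{(8)}(x)} \le e^{0\cdot x} r(x)$, so the hypotheses of Corollary \ref{cor:TheGeneralTheoremInTermsOfDifferentiableFunctions} are met with $a = 0 \in [0,1]$ and $n = 8$.

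Invoking that corollary, I observe that $\frac{n-1}{6} = \frac{7}{6} > 1$, so $\frac{n-1}{6} \land 1 = 1$, and since $n = 8 \neq 7$ the borderline logarithmic case is avoided. Hence for $e_N < \eps$ we obtain
\begin{equation*}
\abs{\E h(S_T) - \E h(S^N_T)} = \abs{\E f(X_T) - \E f(X^N_T)} \le \widetilde C \|r\|_1\, e_N.
\end{equation*}
For $e_N \ge \eps$ the inequality is trivial: $h$ is bounded, so the left-hand side is bounded uniformly in $N$ and the resulting constant can be absorbed into $C$ (using $e_N \ge \eps$).

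No step presents a genuine obstacle: compact support renders every growth condition in the preceding corollaries vacuous, and the hypothesis of exactly $8$ derivatives is calibrated precisely so that the rate exponent $\tfrac{n-1}{6} \land 1$ reaches $1$ while staying off the borderline $n=7$ that would produce a spurious $\log(e_N^{-1})$ factor. The only mildly delicate point is the interpretation of ``compactly supported in $\R_+$'', which I take to mean that the support is bounded away from $0$; otherwise $f = h \circ \exp$ would fail to be compactly supported on $\R$, and one would instead apply Corollary \ref{cor:FinalCorollaryInTermsOfStock} with any choice of parameters making the stated growth bounds trivial.
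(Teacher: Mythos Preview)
Your proof is correct and takes a more direct route than the paper's. The paper applies Corollary~\ref{cor:FinalCorollaryInTermsOfStock} with a specific parameter choice ($\beta=1$, $r(x)=(1+|\log x|)^{-2}$, $p_1=p_2=0$, $q_1=\gamma_1=1$, $q_2=\gamma_2>0$ with $M_{-q_2}<\infty$ via Lemma~\ref{lem:MomentsExist}), which in turn feeds through Corollary~\ref{cor:FinalCorollaryInTermsOfLogStock} and its localization argument before finally landing on Corollary~\ref{cor:TheGeneralTheoremInTermsOfDifferentiableFunctions}. You bypass all of this by observing that compact support of $h$ in $(0,\infty)$ makes $f=h\circ\exp$ compactly supported on~$\R$, so Corollary~\ref{cor:TheGeneralTheoremInTermsOfDifferentiableFunctions} applies directly with $a=0$ and no moment condition needed. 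Your route is cleaner for this specific statement; the paper's route illustrates how the compactly supported case fits as a special instance of the general growth framework. Both handle the regime $e_N\ge\eps$ identically via boundedness of~$h$, and your reading of ``compactly supported in $\R_+$'' as support bounded away from zero is the natural one here (and is the one the paper implicitly uses as well, since its bound $|h(x)|\le C_1 r(x)$ would fail otherwise).
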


\begin{proof}
    Choose $\beta = 1$, $r(x) = (1 + \abs{\log x})^{-2}$, $p_1=p_2=0$, $q_2 > 0$ such that $M_{-q_2} < \infty$ (by Lemma \ref{lem:MomentsExist}), $q_1 = 1$, $\gamma_1 = q_1$, and $\gamma_2 = q_2$, so that $\widetilde{a} = 0$. We then get constants $C, \eps > 0$ such that if $e_N < \eps$, then $$\abs{\E h(S_T) - \E h(S^N_T)} \le Ce_N.$$ We can drop the restriction $e_N < \eps$ by noting that $h$ is uniformly bounded, and hence for $e_N \ge \eps$,
    \begin{align*}
    \abs{\E h(S_T) - \E h(S^N_T)} &\le 2\|h\|_\infty \le \frac{2\|h\|_\infty}{\eps}e_N.\qedhere
    \end{align*}
\end{proof}

\begin{corollary}\label{cor:FinalCorollaryProse}
    Let $h:\R_+\to\R$ be a payoff function of the stock price that is twice weakly differentiable. Let $q_1\ge 1$ and $q_2\ge 0$ be such that $\E S_T^{q_1} + \E S_T^{-q_2} < \infty$, and assume that $h(x) = O(x^{q_1-\eps})$ for $x\to\infty$ and $h(x) = O(x^{-q_2+\eps})$ for $x\to 0$. Assume furthermore that the first two derivatives of $h$ grow only polynomially in $x$ for $x\to 0$ and $x\to\infty$. Then there exist some $C,\alpha > 0$ such that $$\abs{\E h(S_T) - \E h(S^N_T)} \le C e_N^\alpha.$$
\end{corollary}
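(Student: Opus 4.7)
The plan is to invoke Corollary \ref{cor:FinalCorollaryInTermsOfStock} with $n = 2$ by a suitable choice of its remaining parameters. I would pick $\gamma_1, \gamma_2 \in (0, \eps]$ small enough that the asymptotic bounds $h(x) = O(x^{q_1-\eps})$ and $h(x) = O(x^{-q_2+\eps})$ sharpen to $\abs{h(x)} \le C_1 x^{q_1-\gamma_1}$ on $[1,\infty)$ and $\abs{h(x)} \le C_1 x^{-(q_2-\gamma_2)}$ on $(0,1]$, using continuity of $h$ to absorb the bounded transition region into $C_1$. Then, using the polynomial growth of $h'$ and $h''$ at both ends, I would take integers $p_1, p_2 \ge 0$ large enough that $\abs{h^{(k)}(x)} \le C_1 x^{p_1+q_1-\gamma_1-k}$ for $x\ge 1$ and $\abs{h^{(k)}(x)} \le C_1 x^{-(p_2+q_2-\gamma_2+k)}$ for $x\le 1$, $k=1,2$. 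Setting $\beta = 0$ eliminates $r$ from the growth bounds on $h$ and its derivatives; for $r$ itself I would take $r(x) \coloneqq (1 + \abs{\log x})^{-2}$, which satisfies the technical requirements ($r \in (0,1]$, $r^{-1}$ locally bounded, $\int_\R r(e^x)\,\dd x < \infty$).

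With these choices, Corollary \ref{cor:FinalCorollaryInTermsOfStock} yields, for $e_N$ below some threshold $\eps_0 > 0$, an inequality
$$\abs{\E h(S_T) - \E h(S^N_T)} \le \widetilde{C}\bigl(\widetilde{r}(e_N)^{-1} + M_{-q_2} + M_{q_1}\bigr) e_N^\alpha,$$
with $\alpha > 0$ given by whichever of the three cases in the definition of $\alpha$ applies to $\widetilde{a}$; in every subcase $\alpha$ is strictly positive. Our choice of $r$ forces $\widetilde{r}(e_N)^{-1} = O((\log e_N^{-1})^2)$, which is $o(e_N^{-\delta})$ for every $\delta > 0$; combined with the finiteness of $M_{q_1}$ and $M_{-q_2}$, this lets me absorb the polylogarithmic prefactor into a slightly smaller exponent, so that for any fixed $\alpha' \in (0, \alpha)$ one obtains $\abs{\E h(S_T) - \E h(S^N_T)} \le C e_N^{\alpha'}$ whenever $e_N < \eps_0$. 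The remaining range $e_N \ge \eps_0$ is trivial after enlarging $C$, since both $\E\abs{h(S_T)}$ and $\E\abs{h(S^N_T)}$ are finite and bounded uniformly in $N$ under the moment hypothesis.

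The main obstacle is not conceptual but bookkeeping: verifying that the chosen $p_i, q_i, \gamma_i$ place $\widetilde{a}$ in a regime where the explicit formula for $\alpha$ is strictly positive (which may require nudging $\gamma_1, \gamma_2$ relative to $\gamma_1 + \gamma_2$), and handling the implicit use of moments of $S^N_T$ in the underlying Corollary \ref{cor:FinalCorollaryInTermsOfStock} via uniform-in-$N$ moment bounds, which should follow from arguments analogous to Lemma \ref{lem:MomentsExist}.
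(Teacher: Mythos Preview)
Your proposal is correct and follows exactly the paper's approach: the paper's entire proof is the single sentence ``This is really just a non-quantitative reformulation of Corollary \ref{cor:FinalCorollaryInTermsOfStock},'' and you have simply spelled out the parameter choices ($n=2$, $\beta=0$, $r(x)=(1+\abs{\log x})^{-2}$, suitable $p_i$, $\gamma_i$) and the absorption of the polylogarithmic factor $\widetilde r(e_N)^{-1}$ into a slightly smaller exponent. Two minor remarks: the nudging of $\gamma_1,\gamma_2$ you worry about is never needed, since in all three cases of the definition of $\alpha$ the numerator involves some $\gamma_i>0$ and the denominator is strictly positive whenever that case can occur; and the uniform-in-$N$ moment bound you flag is indeed supplied by Lemma \ref{lem:MomentsExist}, which the paper tacitly relies on throughout.
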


\begin{proof}
    This is really just a non-quantitative reformulation of Corollary \ref{cor:FinalCorollaryInTermsOfStock}.
\end{proof}

The following corollary has convenient assumptions that are usually satisfied in practice. It merely requires (global) Lipschitz continuity of the payoff, and the existence of a moment $M_q \coloneqq \E S_T^q$ with $q>1$. We remark that the convergence rate $\frac{q-1}{12q}$ is of course very bad, and certainly not optimal. However, the purpose of this corollary is to demonstrate that the weak error can be bounded using the $L^1$-error in the kernel, not just the $L^2$-error.

\begin{corollary}\label{cor:FinalCorollaryLipschitz}
    Let $h:\R_+\to\R$ be a Lipschitz continuous payoff function of the stock price, and let $q\in(1, 2]$. Then, there exist constants $C, \eps>0$ independent of $h$ and $q$, such that if $e_N < \eps$, then 
    \begin{align*}
        \abs{\E h(S_T) - \E h(S^N_T)} &\le C \|h\|_{\lip}\left((q-1)^{-1} + M_q^{1/2}\right)\log(e_N^{-1})e_N^{\frac{q-1}{12q}}.
    \end{align*}
\end{corollary}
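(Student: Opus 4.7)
The plan is to regularize the Lipschitz payoff $h$ by mollification, apply Corollary~\ref{cor:FinalCorollaryInTermsOfStock} to the resulting smooth payoff with $n=2$, and then optimize the mollification scale against the regularization error. Since Lipschitz continuity only gives one weak derivative but the corollary requires $n \ge 2$, smoothing is essential; the price is that $\|h_\delta''\|_\infty$ grows like $\|h\|_{\lip}/\delta$, which feeds a factor $1/\delta$ into the weak-error bound.

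Concretely, fix $\phi \in C_c^\infty(\R)$ with $\int \phi = 1$, set $\phi_\delta(x) \coloneqq \delta^{-1}\phi(x/\delta)$, extend $h$ Lipschitzly to $\R$, and let $h_\delta \coloneqq h * \phi_\delta$. Standard mollifier estimates yield $\|h_\delta - h\|_\infty \le \|h\|_{\lip}\delta$, $\|h_\delta'\|_\infty \le \|h\|_{\lip}$ and $\|h_\delta''\|_\infty \le c\|h\|_{\lip}/\delta$. I would then apply Corollary~\ref{cor:FinalCorollaryInTermsOfStock} to $h_\delta$ with $n=2$, $q_1 = q$, $\gamma_1 = q-1$, $p_1 = 1$ (chosen so that $p_1 + q_1 - \gamma_1 - 2 = 0$ accommodates the constant bound on $h_\delta''$), $\beta = 1$, and $q_2 = \gamma_2 > 0$ fixed small enough that $M_{-q_2} < \infty$ by Lemma~\ref{lem:MomentsExist}, together with $p_2 = 0$. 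One checks that $\widetilde a \in [0, 1]$ and
\[
\alpha \;=\; \frac{\gamma_1 + \gamma_2}{p_1 + p_2 + q_1 + q_2}\cdot\frac{1}{6} \;\ge\; \frac{q-1}{6q},
\]
while the constant $C_1$ fed into the corollary is of order $\|h\|_{\lip}/\delta$. Absorbing the fixed $M_{-q_2}$ and the $q$-bounded factor $e^{p_1+p_2+q_1+q_2}$ into a universal constant yields
\[
\abs{\E h_\delta(S_T) - \E h_\delta(S_T^N)} \;\le\; \widetilde C\, \|h\|_{\lip}\, \delta^{-1}(1 + M_q)\, e_N^{(q-1)/(6q)}.
\]

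Finally, I would combine this with the triangle-inequality bound $|\E h(S_T) - \E h_\delta(S_T)| \le \|h\|_{\lip}\delta$ (and its analogue for $S_T^N$) and optimize over $\delta$. Balancing the two contributions — i.e.\ taking $\delta$ of order $((1+M_q)\, e_N^{(q-1)/(6q)})^{1/2}$ — precisely halves the exponent to $(q-1)/(12q)$ and converts $(1+M_q)^{1/2}$ into the $1 + M_q^{1/2}$ appearing in the statement. The remaining $(q-1)^{-1}$ and $\log(e_N^{-1})$ factors are expected to emerge from careful bookkeeping in Corollary~\ref{cor:FinalCorollaryInTermsOfStock} at the interface $x = 1$ between the two growth regimes, and from an argument handling the range $e_N \ge \eps$ uniformly in $q \in (1, 2]$, analogous to the trivial extension used at the end of the proof of Corollary~\ref{cor:FinalCorollaryBoundedPayoff}.

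The main obstacle is precisely this bookkeeping: simultaneously verifying both growth regimes of Corollary~\ref{cor:FinalCorollaryInTermsOfStock} for $h_\delta$, tracking how each constant depends on $q$, and extracting the exact $(q-1)^{-1}$ blowup stated (rather than a worse singularity as $q\downarrow 1$). Once this is pinned down, the mollifier estimates and the elementary square-root optimization are routine.
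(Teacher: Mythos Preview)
Your strategy—regularize $h$, apply Corollary~\ref{cor:FinalCorollaryInTermsOfStock} with $n=2$, then balance the smoothing error against the weak-error bound—is exactly the paper's approach; the paper merely replaces your mollifier by an explicit piecewise-quadratic interpolant on the grid $\{k/n\}$, satisfying the same estimates $\|h_n-h\|_\infty \lesssim \|h\|_{\lip}/n$ and $\|h_n''\|_\infty \lesssim n\|h\|_{\lip}$.

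However, your parameter choices do not go through. Taking $\beta=1$ forces $|h_\delta(x)| \le C_1\, x\, r(x)$ for $x\ge 1$, where $r$ must satisfy $\int_0^\infty r(y)\,y^{-1}\,dy < \infty$ and hence cannot stay bounded away from zero; since a Lipschitz $h$ only gives $|h_\delta(x)|\lesssim \|h\|_{\lip}\,x$, no fixed $C_1$ works. The paper takes $\beta=0$, which removes $r$ from the hypotheses but inserts the factor $\widetilde r(e_N)^{-1}\approx \log(e_N^{-1})^2$ into the conclusion; this, rather than any ``interface bookkeeping at $x=1$'', is the source of the logarithm, and its square root after optimizing the smoothing parameter produces the $\log(e_N^{-1})$ in the statement. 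Secondly, your choice $q_2=\gamma_2$ small with $\widetilde a\in[0,1]$ gives $\alpha = \tfrac{(q-1)+\gamma_2}{6(q+1+\gamma_2)}\to \tfrac{q-1}{6(q+1)}$ as $\gamma_2\to 0$, strictly below the claimed $\tfrac{q-1}{6q}$. The paper instead takes $q_2=p_2=0$, $\gamma_2=1$, which lands in the regime $\widetilde a>1$ and gives $\alpha = \tfrac{\gamma_1}{p_1+q_1-1}\cdot\tfrac16 = \tfrac{q-1}{6q}$ exactly. (You also need the reduction to $h(0)=0$, which you omit.)
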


\begin{remark}
    The condition $e_N < \eps$ can be removed by a localization argument similar to Corollary \ref{cor:FinalCorollaryBoundedPayoff} using Lipschitz continuous approximations with compact support, at the cost of a worse convergence rate.
\end{remark}

\begin{proof}
    Since $h$ is Lipschitz continuous, the limit $h(0)\coloneqq \lim_{x\downarrow 0} h(x)$ certainly exists. Furthermore, by replacing $h$ with $\widetilde{h} = h - h(0)$, we may assume without loss of generality that $h(0) = 0$. Denote by $L$ the Lipschitz constant of $h$.

    Now, assume for a moment that $h$ is additionally twice weakly differentiable with bounded derivatives. Let $\|h\|_{C_2}$ be the supremum norm on $h'$ and $h''$. In Corollary \ref{cor:FinalCorollaryInTermsOfStock}, set $\beta = 0$, $r=(1 + \abs{\log(x)})^{-2}$, $q_2=p_2=0$, $\gamma_2=1$, $q_1 \in(1, 2]$, $\gamma_1 = q_1 - 1$, $p_1=1$, $C_1=\|h\|_{C_2}$. Then, there exist constants $\widetilde{C},\eps>0$ independent of $h,q_1$, such that if $e_N < \eps,$ then, $$\abs{\E h(S_T) - \E h(S^N_T)} \le \widetilde{C} \|h\|_{C_2}\left((q_1-1)^{-2} + M_{q_1}\right)\log(e_N^{-1})^2e_N^{\frac{q_1-1}{6q_1}}.$$

    Next, we define a sequence of twice weakly differentiable functions $h_n$ approximating $h$. The functions $h_n$ are defined via their second weak derivative $h_n''$, together with the initial conditions $h_n(0) = h_n'(0) = 0$. The function $h_n''$ is given by $$h_n''(x) \coloneqq \begin{cases}
        4\left(h\left(\frac{k+1}{n}\right) - h\left(\frac{k}{n}\right)\right)n^2,\qquad &x\in\left[\frac{k}{n},\frac{k + 1/2}{n}\right],\\
        -4\left(h\left(\frac{k+1}{n}\right) - h\left(\frac{k}{n}\right)\right)n^2,\qquad &x\in\left[\frac{k+1/2}{n},\frac{k + 1}{n}\right].
    \end{cases}
    $$
    It is easily verified that $$h_n'(k/n) = 0,\quad h_n(k/n) = h(k/n),\quad \|h_n\|_{C_2} \le 4\|h\|_{\lip}n,\quad\textup{and}\quad \|h-h_n\|_\infty \le \frac{\|h\|_{\lip}}{n},$$ where $\|h\|_{\lip}$ is the Lipschitz-norm of $h$.

    Therefore, we get
    \begin{align*}
        \abs{\E h(S_T) - \E h(S^N_T)} &\le \abs{\E h(S_T) - \E h_n(S_T)} + \abs{\E h_n(S_T) - \E h_n(S^N_T)}\\
        &\qquad + \abs{\E h_n(S^N_T) - \E h(S^N_T)}\\
        &\le 2\frac{\|h\|_{\lip}}{n} + \widetilde{C} \|h\|_{\lip}n\left((q_1-1)^{-2} + M_{q_1}\right)\log(e_N^{-1})^2e_N^{\frac{q_1-1}{6q_1}}.
    \end{align*}

    Setting $$n = \left(\left((q_1-1)^{-2} + M_{q_1}\right)\log(e_N^{-1})^2e_N^{\frac{q_1-1}{6q_1}}\right)^{-1/2},$$ we get
    \begin{align*}
        \abs{\E h(S_T) - \E h(S^N_T)} &\le \widetilde{C} \|h\|_{\lip}\left((q_1-1)^{-1} + M_{q_1}^{1/2}\right)\log(e_N^{-1})e_N^{\frac{q_1-1}{12q_1}}.\qedhere
    \end{align*}
\end{proof}

\section{$L^1$-approximation of the fractional kernel}\label{sec:ErrorRates}

The aim of this section is to give kernels $K^N$ of the form \eqref{eqn:KNForm}, such that $$e_N\coloneqq \int_0^T \abs{K(t) - K^N(t)} \sdd t$$ converges quickly for $N\to\infty.$ We give two different approximations in Sections \ref{sec:GeometricGaussianApproximations} and \ref{sec:NonGeometricGaussianApproximations}, respectively. Throughout, we will make heavy use of the representation of $K$ in terms of its inverse Laplace transform, i.e., 
\begin{equation}\label{eqn:LaplaceRepresentationOfK}
K(t) = \frac{t^{H-1/2}}{\Gamma(H+1/2)} = c_H \int_0^\infty e^{-xt} x^{-H-1/2} \sdd x,\quad c_H = \frac{1}{\Gamma(H+1/2)\Gamma(1/2-H)}.
\end{equation}

\subsection{Geometric Gaussian approximations}\label{sec:GeometricGaussianApproximations}

Let us now define the approximations $K^N$ we use, which deviate slightly from the approximations in \cite{bayer2023markovian}. We denote by $\mathrm{rd}:\R\to\mathbb{N}_+$ the function rounding to the nearest positive integer.

\begin{definition}[Geometric Gaussian Rules]\label{def:GeometricGaussianRules}
Let $N\in\N$ be the total number of nodes and $\alpha,\beta,a,b\in(0,\infty)$ be parameters of the scheme. Define
\begin{equation}\label{eqn:MNDefinitionGGR}
m\coloneqq \mathrm{rd}\left(\beta\sqrt{(1/2+H)N}\right),\qquad n\coloneqq \mathrm{rd}\left(\frac{1}{\beta}\sqrt{\frac{N}{1/2+H}}\right) (\approx N/m),
\end{equation}
$$\xi_0\coloneqq 0,\quad \xi_n\coloneqq b\exp\left(\frac{\alpha}{\sqrt{1/2+H}}\sqrt{N}\right),\quad \xi_i=a\left(\frac{\xi_n}{a}\right)^{\frac{i}{n}},\qquad i=1,\dots,n.$$ Let $(x_j)_{j=1}^m$ be the nodes and $(\widetilde{w}_j)_{j=1}^m$ be the weights of a Gaussian quadrature formula of level $m$ on the interval $[0, \xi_1]$ with the weight function $w(x) = c_H x^{-H-1/2}$. Furthermore, let $(x_j)_{j=im+1}^{(i+1)m}$ be the nodes and $(\widetilde{w}_j)_{j=im+1}^{(i+1)m}$ be the weights of a Gaussian quadrature formula of level $m$ on to the intervals $[\xi_i, \xi_{i+1}]$ for $i=1,\dots,n-1$ with the weight function $w(x) \equiv 1$. Then we define the geometric Gaussian rule of type $(H,N,\alpha,\beta,a,b)$ to be the set of nodes $(x_j)_{j=1}^{mn}$ with weights $(w_j)_{j=1}^{mn}$ defined by $$w_j \coloneqq \begin{cases}
\widetilde{w}_j,\qquad &\textup{if } j=1,\dots,m,\\
c_H\widetilde{w}_j x_j^{-H-1/2},\qquad &\textup{else}.
\end{cases}$$ The approximation of $K$ is then given by $$K^N(t)\coloneqq \sum_{j=1}^{mn}w_je^{-x_jt}.$$ In what follows we will often drop the function $\mathrm{rd}$ in \eqref{eqn:MNDefinitionGGR} and assume that $m$ and $n$ can be real numbers, and that $N = nm$ exactly, purely for convenience. This does not affect the convergence rates.
\end{definition}

\begin{remark}\label{rem:IntuitionsOfGeometricGaussianRuleParameters}
In geometric Gaussian rules, we have four parameters that we can freely choose: $\alpha,\beta,a,b$. These parameters can be interpreted as follows:
\begin{itemize}
    \item The parameter $\alpha$ determines the cutoff point $\xi_n$ of the integral in \eqref{eqn:LaplaceRepresentationOfK}, i.e. we approximate the integral only on the interval $[0, \xi_n]$.
    \item The parameter $\beta$ determines the level of the Gaussian quadrature rule.
    \item The parameter $a$ determines the size of the first interval $[0, \xi_1]$. This interval is special due to the singularity in the weight function $w$.
    \item The parameter $b$ is used for fine-tuning the results.
\end{itemize}
In particular, parameter $\alpha$ is mainly responsible for controlling the error in our approximation on the interval $[\xi_n, \infty)$, parameter $\beta$ is mainly responsible for controlling the error on $[\xi_1, \xi_n]$, and parameter $a$ is mainly responsible for controlling the error on $[0,\xi_1]$.
\end{remark}

Throughout this section, we will use the following proposition. It states that Gaussian quadrature is lower-biased for completely monotone functions.

\begin{proposition}\label{prop:GaussianQuadratureUnderestimatesCompletelyMonotoneFunctions}
Let $f:[a,b]\to\R$ be completely monotone, and let $(x_i)_{i=1}^n$ be the nodes and $(w_i)_{i=1}^m$ be the weights of Gaussian quadrature with respect to the weight function $w$. Then,
$$\int_a^b f(x) w(x) \sdd x - \sum_{i=1}^n w_i f(x_i) \ge 0.$$
\end{proposition}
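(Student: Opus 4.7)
The plan is to use the classical Gaussian quadrature error representation via Hermite interpolation, and observe that complete monotonicity makes the relevant remainder term sign-definite. Let me sketch the steps.

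First, I would construct the Hermite interpolant $p$ of degree at most $2n-1$ interpolating $f$ at the Gauss nodes $x_1,\dots,x_n$ with multiplicity two (i.e., matching $f$ and $f'$ at each $x_i$); since $f$ is completely monotone on $[a,b]$ it is $C^\infty$ on $(a,b)$, and one checks (either assuming $a$ is in the interior of the domain of complete monotonicity, or by a straightforward limiting argument shrinking $[a,b]$) that we have the standard Hermite remainder
\begin{equation*}
f(x) - p(x) = \frac{f^{(2n)}(\xi(x))}{(2n)!}\prod_{i=1}^n (x-x_i)^2,\qquad \xi(x)\in(a,b).
\end{equation*}

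Next, I would exploit that Gaussian quadrature of level $n$ with respect to $w$ is exact for polynomials of degree $\le 2n-1$, in particular for $p$. Since $p(x_i)=f(x_i)$, this gives $\sum_{i=1}^n w_i f(x_i) = \sum_{i=1}^n w_i p(x_i) = \int_a^b p(x) w(x) \sdd x$, and subtracting from $\int_a^b f w$ yields
\begin{equation*}
\int_a^b f(x)w(x)\sdd x - \sum_{i=1}^n w_i f(x_i) = \int_a^b \frac{f^{(2n)}(\xi(x))}{(2n)!}\prod_{i=1}^n (x-x_i)^2 w(x)\sdd x.
\end{equation*}
The integrand is pointwise non-negative: $\prod_i (x-x_i)^2\ge 0$, $w\ge 0$, and since $f$ is completely monotone $(-1)^k f^{(k)}\ge 0$, so in particular $f^{(2n)}\ge 0$. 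This yields the claimed inequality.

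The only subtle point is regularity at the endpoints; if one only assumes complete monotonicity on the closed interval and $f$ or its derivatives blow up at an endpoint, the Hermite remainder formula is not directly available. I would handle this by approximation: apply the argument on $[a+\eps, b-\eps]$ with the restricted weight, then pass to the limit using monotone/dominated convergence, noting that the Gauss nodes and weights on the full interval can be approximated by those on shrinking subintervals. Alternatively, and perhaps more elegantly, one can invoke the Bernstein–Hausdorff–Widder theorem to write $f(x)=\int_0^\infty e^{-sx}\nu(\dd s)$ for a non-negative measure $\nu$ (possibly after a translation that removes any endpoint singularity), apply Fubini, and reduce to the case $f(x)=e^{-sx}$, for which the Hermite argument applies without any regularity worry since exponentials are entire. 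Either route turns the potential obstacle into a routine technicality.
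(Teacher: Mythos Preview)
Your proof is correct and takes essentially the same approach as the paper: the paper simply invokes the classical Gaussian quadrature error representation (Theorem~\ref{thm:GaussGeneralErrorRepresentationFormula}, i.e.\ the mean-value form of the very Hermite-interpolation remainder you derive) and then observes that $f^{(2n)}\ge 0$ for completely monotone $f$. Your version is just the ``unpacked'' form of that citation, with some extra care about endpoint regularity that the paper does not spell out.
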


\begin{proof}
This follows immediately from Theorem \ref{thm:GaussGeneralErrorRepresentationFormula}, using that derivatives of even order of completely monotone functions are non-negative.
\end{proof}

We now start with the proof of the convergence rate, by stating a lemma on the error of Gaussian quadrature. This is a sharpened version of \cite[Theorem 19.3]{trefethen2019approximation} for the specific function $f(x) = (x+c)^{-\gamma}$. The proof is very technical, and delegated to Appendix \ref{sec:ProofOfImprovedGaussianErrorForSpecificPowerFunction}.

\begin{lemma}\label{lem:ImprovedGaussianErrorForSpecificPowerFunction}
Let $f:[-1, 1]\to\R,$ $f(x) \coloneqq (x+c)^{-\gamma},$ where $c>1$, and $\gamma > 1$. Let $(x_i)_{i=1}^m$ be the nodes and $(w_i)_{i=1}^m$ be the weights of Gaussian quadrature of level $m$ on $[-1, 1]$ with weight $w(x)\equiv 1$. Define $\mu\coloneqq \frac{2m}{\gamma-1}$, assume that $\mu \ge \frac{1}{c-1}\lor \frac{3c}{2\sqrt{c^2-1}},$ set $\eps \coloneqq \frac{\sqrt{\mu^2(c^2-1)+1} - c}{\mu^2-1},$ and $r \coloneqq c-\eps + \sqrt{(c-\eps)^2-1}.$ Then,
\begin{align*}
\abs{\int_{-1}^1 f(x) \sdd x - \sum_{i=1}^m w_i f(x_i)} &\le \nu \eps^{1-\gamma}\left(c + \sqrt{c^2-1}\right)^{-2m},
\end{align*}
where $$\nu \coloneqq \nu_{m, r, \gamma, c} \coloneqq \frac{8}{\pi}\frac{4m^2}{4m^2-1}\frac{r}{(r - r^{-1})^2}\Bigg(1 + \frac{(\pi/2)^\gamma}{\gamma - 1}\Bigg)e^{\gamma-1}.$$
\end{lemma}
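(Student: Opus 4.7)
The natural approach is via the contour integral representation of the Gauss quadrature error. Since $f(z) = (z+c)^{-\gamma}$ is analytic on $\mathbb{C}$ cut along $(-\infty, -c]$, and the singularity $-c$ lies outside the Bernstein ellipse $E_r$ of parameter $r < c + \sqrt{c^2-1}$, one has
\begin{equation*}
    \int_{-1}^1 f(x)\sdd x - \sum_{i=1}^m w_i f(x_i) = \frac{1}{2\pi i}\oint_{E_r} \frac{2 Q_m(z)}{P_m(z)} f(z) \sdd z,
\end{equation*}
where $P_m$ is the Legendre polynomial and $Q_m$ the associated Legendre function of the second kind. I would parameterize the ellipse as $z(\theta) = \tfrac{1}{2}(re^{i\theta} + r^{-1}e^{-i\theta})$ and take absolute values.

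The first step is to bound the remainder kernel $|2Q_m/P_m|$ on $E_r$. Classical estimates (see e.g.\ Szegő) yield an inequality of the form $|2Q_m(z)/P_m(z)| \le \frac{8}{4m^2-1}\cdot \frac{1}{|re^{i\theta}-r^{-1}e^{-i\theta}|^2}\cdot r^{-2m}$, which explains the appearance of the factor $\frac{4m^2}{4m^2-1}\frac{r}{(r-r^{-1})^2}$ in $\nu$. The second step, and the crucial one, is to \emph{not} bound $|f|$ by its supremum $\eps^{-\gamma}$ over $E_r$, but to integrate $|f|$ along the contour: since $|z(\theta)+c|^2 \ge \eps^2 + K\,(\theta-\pi)^2$ near $\theta = \pi$ for an explicit $K$ depending on $r,c$, the substitution $\theta - \pi = \eps\tau/\sqrt{K}$ gives
\begin{equation*}
    \oint_{E_r} \frac{|\dd z|}{|z+c|^\gamma} \le C \eps^{1-\gamma}\int_0^\infty \frac{\dd \tau}{(1+\tau^2)^{\gamma/2}},
\end{equation*}
with the remaining integral bounded by $(\pi/2)^\gamma/(\gamma-1)$ via a crude monotonicity argument for $\gamma > 1$. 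Combining these two ingredients yields a bound of the shape $\tilde\nu(r)\,\eps^{1-\gamma}\, r^{-2m}$.

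The final step is to convert this bound, written in terms of $r$, into the stated form in terms of $r_{\max} = c + \sqrt{c^2-1}$. Minimizing $\eps^{1-\gamma}r^{-2m}$ in $r$ (with the constraint $\tfrac{1}{2}(r+r^{-1}) = c - \eps$) gives the optimality relation $\gamma\sqrt{(c-\eps)^2-1} = 2m\eps$, which rearranges to $(\mu^2-1)\eps^2 + 2c\eps - (c^2-1) = 0$ with $\mu = 2m/(\gamma-1)$ (the $\gamma-1$ rather than $\gamma$ here comes from the $\eps^{1-\gamma}$ prefactor, not $\eps^{-\gamma}$); solving yields precisely $\eps = (\sqrt{\mu^2(c^2-1)+1}-c)/(\mu^2-1)$. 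A direct computation shows that at this optimum, $r^{-2m} \le e^{\gamma-1}r_{\max}^{-2m}$, which accounts for the factor $e^{\gamma-1}$ in $\nu$ and replaces $r^{-2m}$ by $(c+\sqrt{c^2-1})^{-2m}$ in the final bound.

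The main technical obstacle will be tracking sharp constants throughout, in particular (i) obtaining the precise kernel bound with factor $\tfrac{8}{4m^2-1}$ from the Christoffel–Darboux identity applied to Legendre polynomials on the Bernstein ellipse, and (ii) justifying the passage from $r^{-2m}$ to $r_{\max}^{-2m}$ with a controllable overshoot. The hypotheses $\mu \ge \tfrac{1}{c-1} \lor \tfrac{3c}{2\sqrt{c^2-1}}$ are exactly the quantitative conditions needed so that the optimal $r$ lies safely within $(1, r_{\max})$ and the Taylor-type expansions used near $\theta=\pi$ remain valid with the stated constants; they will arise naturally from the calculations.
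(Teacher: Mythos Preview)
Your approach is correct in spirit and shares the two essential ideas with the paper: (a) integrate $|f|$ along the Bernstein ellipse rather than bounding it by $\sup|f|=\eps^{-\gamma}$, thereby gaining a factor $\eps$, and (b) optimize over the ellipse parameter $r$ via a quadratic in $\eps$, which produces exactly the stated $\eps$ and the overshoot factor $e^{\gamma-1}$ when passing from $r^{-2m}$ to $(c+\sqrt{c^2-1})^{-2m}$. Where you diverge is in the representation of the quadrature error. You work with the classical contour formula $\frac{1}{2\pi i}\oint_{E_r}\frac{2Q_m}{P_m}\,f\,\dd z$ and invoke Szeg\H{o}-type bounds on the Legendre kernel. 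The paper instead follows Trefethen's Theorem~19.3: it expands $f=\sum a_kT_k$ in its Chebyshev series, uses the known quadrature error for each $T_k$ (Trefethen, Theorem~19.2), and bounds the coefficients $a_k$ individually via the contour integral $a_k=\frac{1}{\pi i}\oint_{|z|=r}z^{-1-k}f\bigl(\tfrac{z+z^{-1}}{2}\bigr)\dd z$ in a separate lemma. The factor $\frac{4m^2}{4m^2-1}$ then arises transparently from the estimate $2+\frac{2}{(2m+2k)^2-1}\le \frac{8m^2}{4m^2-1}$, and the factor $\frac{r}{(r-r^{-1})^2}$ from combining the $\frac{1}{r-r^{-1}}$ in the coefficient bound with the geometric sum $\sum r^{-2k}=\frac{r^2}{r^2-1}$. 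This route makes the precise constant $\nu$ fall out of two quotable ingredients rather than requiring a sharp explicit bound on $|Q_m/P_m|$ over $E_r$; your route is more direct conceptually but, as you note, will require harder work to recover exactly the same constants in step~(i). Both the localization argument near $\theta=\pi$ giving $(1+(\pi/2)^\gamma/(\gamma-1))\eps^{1-\gamma}$ and the optimization lemma giving $e^{\gamma-1}$ are carried out in the paper essentially as you describe.
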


Recall the representation \eqref{eqn:LaplaceRepresentationOfK} of $K$. In the following Lemmas, we split up the error of Gaussian quadrature on the interval $[0, \infty)$ into several smaller intervals that we treat separately. More precisely, in Lemma \ref{lem:SingleIntervalL1Bound}, we consider intervals $[\xi_i,\xi_{i+1}]$ with $i=1,\dots,n-1$, in Lemma \ref{lem:IntermediateMeanReversion} the interval $[\xi_1,\xi_n]$, in Lemma \ref{lem:LowMeanReversion} the interval $[0, \xi_1]$, and in Lemma \ref{lem:HighMeanReversion} the interval $[\xi_n, \infty)$. Finally, we will combine all these error bounds in Theorem \ref{thm:TheL1TheoremGeometric}.

\begin{lemma}\label{lem:SingleIntervalL1Bound}
Let $b>a>0$, and $H > -1/2$. Let $(x_i)_{i=1}^m$ be the nodes and $(w_i)_{i=1}^m$ be the weights of the Gaussian quadrature of level $m$ on the interval $[a, b]$ with weight function $w(x) \equiv 1$. Then,
\begin{align*}
\int_0^\infty \Bigg|\int_a^b e^{-xt} &x^{-H-1/2}\sdd x - \sum_{i=1}^m w_i x_i^{-H-1/2} e^{-x_it}\Bigg| \sdd t\\
&\le \nu_{m,r,H+3/2,c} \left(\frac{b-a}{2}\right)^{-H-1/2}\eps^{-H-1/2}\left(c + \sqrt{c^2-1}\right)^{-2m},
\end{align*}
where $c\coloneqq \frac{b+a}{b-a}$, $\gamma\coloneqq H + 3/2$, and  $\mu$, $\eps$ and $r$ are as in Lemma \ref{lem:ImprovedGaussianErrorForSpecificPowerFunction}.
\end{lemma}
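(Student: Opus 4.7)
The plan is to exchange the order of integration in $t$ and $x$ (and swap with the finite sum) via Fubini/Tonelli, thereby reducing the bivariate $L^1$ problem to the error of a single univariate Gaussian quadrature applied to a power function. Assuming for the moment that the absolute value can be dropped, a direct computation using $\int_0^\infty e^{-xt}\,dt = 1/x$ gives
$$\int_0^\infty\!\!\int_a^b e^{-xt} x^{-H-1/2}\,dx\,dt = \int_a^b x^{-H-3/2}\,dx, \qquad \int_0^\infty \sum_{i=1}^m w_i x_i^{-H-1/2} e^{-x_i t}\,dt = \sum_{i=1}^m w_i x_i^{-H-3/2},$$
so the quantity of interest collapses to the Gauss error of level $m$ on $[a,b]$ with weight $w\equiv 1$ applied to the function $x\mapsto x^{-H-3/2}$.

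To justify discarding the absolute value, fix $t\ge 0$ and consider $g_t(x)\coloneqq x^{-H-1/2} e^{-xt}$. On $(0,\infty)$ this is a product of two completely monotone functions (here we use $H+1/2 > 0$), hence itself completely monotone. Proposition \ref{prop:GaussianQuadratureUnderestimatesCompletelyMonotoneFunctions} then gives, pointwise in $t$,
$$\int_a^b g_t(x)\,dx - \sum_{i=1}^m w_i g_t(x_i) \ge 0,$$
so the inner bracketed quantity is already nonnegative and the outer absolute value is indeed redundant.

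To conclude, the affine change of variables $x = \tfrac{b-a}{2}u + \tfrac{b+a}{2}$ maps $[a,b]$ onto $[-1,1]$, sends $(x_i,w_i)$ to the Gauss nodes and weights on $[-1,1]$ with $w\equiv 1$, and produces the explicit prefactor $(\tfrac{b-a}{2})^{-H-1/2}$, reexpressing the remaining quantity as the Gauss error on $[-1,1]$ of the function $u\mapsto (u+c)^{-(H+3/2)}$ with $c=\tfrac{b+a}{b-a}>1$. Since $\gamma\coloneqq H+3/2 > 1$, Lemma \ref{lem:ImprovedGaussianErrorForSpecificPowerFunction} applies verbatim and, noting $1-\gamma = -H-1/2$, yields exactly the stated bound. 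The whole argument is short and contains no real obstacle; the key idea is the complete monotonicity trick that lets us remove the absolute value before applying any quadrature estimate, after which everything reduces to the already-proved univariate lemma.
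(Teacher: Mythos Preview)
Your proof is correct and follows essentially the same approach as the paper: use complete monotonicity of $x\mapsto x^{-H-1/2}e^{-xt}$ together with Proposition~\ref{prop:GaussianQuadratureUnderestimatesCompletelyMonotoneFunctions} to drop the absolute value, integrate out $t$ to reduce to the Gauss error for $x\mapsto x^{-H-3/2}$ on $[a,b]$, then transform to $[-1,1]$ and apply Lemma~\ref{lem:ImprovedGaussianErrorForSpecificPowerFunction}. Your write-up is in fact slightly more detailed than the paper's in spelling out the prefactor from the change of variables and checking the hypotheses $c>1$, $\gamma>1$.
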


\begin{proof}
Since the function $x\mapsto e^{-xt}x^{-H-1/2}$ is completely monotone, Proposition \ref{prop:GaussianQuadratureUnderestimatesCompletelyMonotoneFunctions} implies that
\begin{align*}
\int_0^\infty \Bigg|\int_a^b e^{-xt} &x^{-H-1/2}\sdd x - \sum_{i=1}^m w_i x_i^{-H-1/2} e^{-x_it}\Bigg| \sdd t\\
&= \int_0^\infty \Bigg(\int_a^b e^{-xt} x^{-H-1/2}\sdd x - \sum_{i=1}^m w_i x_i^{-H-1/2} e^{-x_it}\Bigg) \sdd t\\
&= \int_a^b x^{-H-3/2} \sdd x - \sum_{i=1}^m w_i x_i^{-H-3/2}.
\end{align*}

This is obviously the error of Gaussian quadrature for the function $x\mapsto x^{-H-3/2}$ on the interval $[a, b]$. By a simple linear transformation and Lemma \ref{lem:ImprovedGaussianErrorForSpecificPowerFunction}, we get the result.
\end{proof}

To simplify, we only prove asymptotic bounds from now on. More precisely, if we write $f(n)\approx g(n)$, then we mean $f(n) = g(n)(1 + o(n))$, so that leading constants remain valid.

\begin{lemma}\label{lem:IntermediateMeanReversion}
Let $(x_i)_{i=1}^N$ be the nodes and $(w_i)_{i=1}^N$ be the weights of a geometric Gaussian rule with parameters $\alpha > 0$, $\beta = \frac{\log(3 + 2\sqrt{2})}{\alpha}$, and $a\ge b > 0$, cf. Definition \ref{def:GeometricGaussianRules}. Assuming $c_n\coloneqq \left(\frac{b}{a}\right)^{1/n}\approx 1$, we have 
\begin{align*}
\int_0^\infty &\Bigg|c_H\int_{\xi_1}^{\xi_n} e^{-xt} x^{-H-1/2} \sdd x - \sum_{i=m+1}^N w_i e^{-x_it}\Bigg| \sdd t\\
&\lesssim C_1 a^{-H-1/2} N^{H/2+1/4}\left(1 + \frac{1}{2}(1-c_n)\right)^{-2m}\left(\sqrt{2}+1\right)^{-2m},
\end{align*}
where $$C_1 \coloneqq \frac{2}{\pi}c_H \left(\frac{2e\beta}{(3+2\sqrt{2})\sqrt{H+1/2}}\right)^{H+1/2}\frac{\left(\sqrt{2}+1\right)^{1/2-H}}{1-(3+2\sqrt{2})^{-H-1/2}}\Bigg(1 + \frac{(\pi/2)^{H+3/2}}{H+1/2}\Bigg).$$
\end{lemma}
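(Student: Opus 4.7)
The plan is to decompose $[\xi_1, \xi_n]$ into the geometric subintervals $[\xi_i, \xi_{i+1}]$ for $i = 1, \ldots, n-1$ and apply Lemma \ref{lem:SingleIntervalL1Bound} on each piece. The completely monotone function $x \mapsto e^{-xt} x^{-H-1/2}$ combined with Proposition \ref{prop:GaussianQuadratureUnderestimatesCompletelyMonotoneFunctions} guarantees that each summand (integrand without the absolute value) is pointwise nonnegative, so the absolute value can be moved inside the sum. This yields a bound
\begin{equation*}
c_H \sum_{i=1}^{n-1} \nu_i \left(\tfrac{\xi_{i+1}-\xi_i}{2}\right)^{-H-1/2} \eps_i^{-H-1/2}\bigl(c_i + \sqrt{c_i^2-1}\bigr)^{-2m},
\end{equation*}
where $c_i, \eps_i, r_i, \nu_i$ are the parameters supplied by Lemma \ref{lem:SingleIntervalL1Bound} on $[\xi_i, \xi_{i+1}]$.

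The key simplification is that the common ratio $R := \xi_{i+1}/\xi_i = (\xi_n/a)^{1/n} = c_n \exp(\alpha\beta) = c_n(3+2\sqrt{2})$ (using $\alpha\beta = \log(3+2\sqrt{2})$ from the choice of $\beta$) is \emph{independent} of $i$. Consequently, $c_i, \eps_i, r_i, \nu_i$ are constant in $i$, and only $\xi_i^{-H-1/2}$ depends on $i$ in the prefactor $(\xi_{i+1} - \xi_i)/2 = \xi_i(R-1)/2$. Using the algebraic identity $c + \sqrt{c^2-1} = (\sqrt{R}+1)/(\sqrt{R}-1)$ and $\sqrt{R} = (\sqrt{2}+1)\sqrt{c_n}$, a first-order Taylor expansion in $1-c_n$ around $c_n = 1$ produces $c + \sqrt{c^2-1} \ge (\sqrt{2}+1)\bigl(1 + \tfrac{1}{2}(1-c_n)\bigr)$, which gives the advertised factor $(\sqrt{2}+1)^{-2m}\bigl(1+\tfrac12(1-c_n)\bigr)^{-2m}$ upon raising to $-2m$.

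Next, the asymptotics $c \to \sqrt{2}$ and $c^2 - 1 \to 1$ as $c_n \to 1$ feed into $\eps \approx \sqrt{c^2-1}/\mu \sim (H+1/2)/(2m)$ for large $\mu = 2m/(H+1/2)$. Together with the $e^{H+1/2} = e^{\gamma-1}$ factor in $\nu$, this yields $e^{H+1/2}\eps^{-H-1/2} \sim (2e\beta/\sqrt{H+1/2})^{H+1/2}\, N^{H/2+1/4}$ after substituting $m = \beta\sqrt{(H+1/2)N}$; this accounts for the entire $N$-dependence of $C_1$. The remaining ingredients from $\nu$ in the limit are $\tfrac{8}{\pi} \cdot \tfrac{\sqrt{2}+1}{4} \cdot \bigl(1 + \tfrac{(\pi/2)^{H+3/2}}{H+1/2}\bigr)$, while $(R-1)/2 \to 1 + \sqrt{2}$ contributes $(1+\sqrt{2})^{-(H+1/2)}$.

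Finally, the sum $\sum_{i=1}^{n-1}\xi_i^{-H-1/2} = a^{-H-1/2}\sum_{i=1}^{n-1} R^{-i(H+1/2)}$ is geometric with ratio $R^{-(H+1/2)} \to (3+2\sqrt{2})^{-(H+1/2)} = (\sqrt{2}+1)^{-(2H+1)}$, producing $a^{-H-1/2}(3+2\sqrt{2})^{-(H+1/2)}/(1-(3+2\sqrt{2})^{-(H+1/2)})$, i.e. the $a^{-H-1/2}$ factor in the claim, the denominator $1-(3+2\sqrt{2})^{-H-1/2}$ in $C_1$, and an extra $(3+2\sqrt{2})^{-(H+1/2)}$. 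Collecting everything, the power-of-$(\sqrt{2}+1)$ bookkeeping is $(\sqrt{2}+1)^{-(H+1/2)} \cdot (\sqrt{2}+1)^{-(2H+1)} \cdot (\sqrt{2}+1)^{+1} = (\sqrt{2}+1)^{-3H-1/2}$, which the lemma writes equivalently as $(3+2\sqrt{2})^{-(H+1/2)} \cdot (\sqrt{2}+1)^{1/2-H}$ by absorbing one factor of $(3+2\sqrt{2})^{-(H+1/2)}$ into the parenthesis $\bigl(\tfrac{2e\beta}{(3+2\sqrt{2})\sqrt{H+1/2}}\bigr)^{H+1/2}$. The main obstacle is precisely this constant-tracking step: keeping the limiting behaviour of $\nu$, $\eps$, $(R-1)/2$, the geometric sum, and the $c_H$ normalisation aligned so that the leading coefficient matches $C_1$ exactly.
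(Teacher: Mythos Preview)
Your proposal is correct and follows essentially the same route as the paper: decompose $[\xi_1,\xi_n]$ into the geometric subintervals, apply Lemma~\ref{lem:SingleIntervalL1Bound} on each, exploit that the ratio $R=\xi_{i+1}/\xi_i=c_n e^{\alpha\beta}=c_n(3+2\sqrt2)$ is constant in $i$, sum the resulting geometric series in $\xi_i^{-H-1/2}$, and pass to the asymptotic regime $m\to\infty$ (so $\eps\sim\mu^{-1}$, $r\to\sqrt2+1$) to identify the constant $C_1$. Your identity $c+\sqrt{c^2-1}=(\sqrt R+1)/(\sqrt R-1)$ is a clean way to reach the factor $(\sqrt2+1)(1+\tfrac12(1-c_n))$; the paper derives the same lower bound by a direct computation valid for all $c_n\le1$ rather than a first-order Taylor expansion, but since the conclusion is asymptotic ($\lesssim$) your argument suffices. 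The only minor omission is the check that the hypothesis $\mu\ge\frac{1}{M-1}\vee\frac{3M}{2\sqrt{M^2-1}}$ of Lemma~\ref{lem:ImprovedGaussianErrorForSpecificPowerFunction} holds, which the paper verifies at the end (it reduces to $\mu\ge\sqrt2+1$, true for $m\ge2$).
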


\begin{proof}
Denote $c_n\coloneqq \left(\frac{b}{a}\right)^{1/n}$, and $L \coloneqq e^{\alpha\beta}$. Applying the triangle inequality for the intervals $[\xi_i, \xi_{i+1}]$ with $i=1,\dots,n-1$, and using Lemma \ref{lem:SingleIntervalL1Bound}, we get
\begin{align}
\int_0^\infty \Bigg| c_H\int_{\xi_1}^{\xi_n} e^{-xt} x^{-H-1/2} \sdd x &- \sum_{i=m+1}^N w_i e^{-x_it}\Bigg| \sdd t\nonumber\\
&\le C_2\sum_{i=1}^{n-1} \left(\frac{\xi_{i+1}-\xi_i}{2}\right)^{-H-1/2} \left(M + \sqrt{M^2-1}\right)^{-2m},\label{eqn:UpperBound801}
\end{align}
where $$C_2 \coloneqq \frac{8}{\pi}c_H\frac{4m^2}{4m^2-1}\frac{r}{(r - r^{-1})^2}\Bigg(1 + \frac{(\pi/2)^{H+3/2}}{H+1/2}\Bigg)e^{H+1/2}\eps^{-H-1/2},$$ and $M=\frac{c_nL+1}{c_nL-1}$, $\mu\coloneqq \frac{2m}{H + 1/2}$, $\eps = \frac{\sqrt{\mu^2(M^2-1)+1} - M}{\mu^2-1},$ and $r = M-\eps + \sqrt{(M-\eps)^2-1},$ and where we assume that $\mu \ge \frac{1}{M-1}\lor \frac{3M}{2\sqrt{M^2-1}}.$

Consider first the sum in \eqref{eqn:UpperBound801}. Since $\xi_i = ac^i_n L^i$, we have
\begin{align}
\sum_{i=1}^{n-1} \left(\frac{\xi_{i+1}-\xi_i}{2}\right)^{-H-1/2} &\le (ac_nL)^{-H-1/2}\left(\frac{c_nL-1}{2}\right)^{-H-1/2} \frac{1}{1 - (c_nL)^{-H-1/2}}.\label{eqn:GeometricSumBound}
\end{align}

Next, we want to determine the rate at which \eqref{eqn:UpperBound801} decays. Heuristically, it seems that the rate is mainly determined by the term $\left(M + \sqrt{M^2-1}\right)^{-2m}$. Note that $$M = \frac{L+1}{L-1} + 2L\frac{1-c_n}{(c_nL-1)(L-1)} =: M_0 + \delta\quad\textup{with}\quad M_0\coloneqq \frac{L+1}{L-1},$$ where $\delta \ll 1$, for $c_n\approx 1$. Let us hence consider the expression $\left(M_0 + \sqrt{M_0^2-1}\right)^{-2m}.$ Here, $M_0$ depends on $\alpha$ and $\beta$, while $m$ depends on $\beta$, so with this simplification we have removed the dependence of the rate on the parameters $a$ and $b$. Recalling Remark \ref{rem:IntuitionsOfGeometricGaussianRuleParameters}, we want to choose a good value of $\beta$ (i.e. a good degree of the Gaussian quadrature rule) to make the error as small as possible. Therefore, we consider the optimization problem 

$$\inf_{\beta > 0} \left(M_0 + \sqrt{M_0^2 - 1}\right)^{-2m}.$$ After some manipulations, we see that minimizing this is equivalent to minimizing $$\left(\frac{e^{\alpha\beta}-1}{e^{\alpha\beta} + 2e^{\alpha\beta/2} + 1}\right)^{\alpha\beta}.$$ Perhaps surprisingly, this can be optimized in closed form, and the solution is $$L = e^{\alpha\beta} = 3 + 2\sqrt{2},\quad\text{i.e.}\quad \beta = \frac{\log(3 + 2\sqrt{2})}{\alpha}.$$ This implies $$M_0 = \frac{e^{\alpha\beta}+1}{e^{\alpha\beta}-1} = \sqrt{2},\quad\textup{and}\quad M_0 + \sqrt{M_0^2-1} = \sqrt{2} + 1.$$

Moreover, for all $c_n \le 1$,
\begin{align*}
M + \sqrt{M^2-1} &= \sqrt{2}+\delta + \sqrt{(\sqrt{2}+\delta)^2 - 1}\\
&= \sqrt{2}+(1 + \sqrt{2})\frac{1-c_n}{c_n(3 + 2\sqrt{2})-1}\\
&\qquad + \sqrt{\left(\sqrt{2}+(1 + \sqrt{2})\frac{1-c_n}{c_n(3 + 2\sqrt{2})-1}\right)^2 - 1}\\
&\ge \left(\sqrt{2}+1\right)\left(1 + \frac{1}{2}(1-c_n)\right).
\end{align*}

Using this bound, and \eqref{eqn:GeometricSumBound}, we get
\begin{align*}
\int_0^\infty &\Bigg|c_H\int_{\xi_1}^{\xi_n} e^{-xt} x^{-H-1/2} \sdd x - \sum_{i=m+1}^N w_i e^{-x_it}\Bigg| \sdd t\\
&\le C_2 \left(\frac{2}{ac_nL}\right)^{H+1/2}\frac{\left(c_nL-1\right)^{-H-1/2}}{1-(c_nL)^{-H-1/2}} \left(1 + \frac{1}{2}(1-c_n)\right)^{-2m}\left(\sqrt{2}+1\right)^{-2m}.
\end{align*}

It is now that we start using asymptotic bounds. As $N\to\infty$, we have $m, n\to\infty$. Since $c_n\approx 1$, we have $M\approx\sqrt{2}$, $\mu = \frac{2m}{H+1/2}$, $\eps\approx \mu^{-1}$, and $r\approx \sqrt{2}+1$. Plugging in these values, we get the bound in the statement of the theorem. After noting that $\mu = \frac{2m}{H+1/2} \ge \sqrt{2} + 1 = \frac{1}{M-1}\lor\frac{3M}{2\sqrt{M^2-1}}$ is satisfied for all $m\ge 2$, this proves the lemma.
\end{proof}

We have now estimated the approximation error on the interval $[\xi_1, \xi_n]$. Next, we consider the interval $[\xi_0, \xi_1]$.

\begin{lemma}\label{lem:LowMeanReversion}
Let $(x_i)_{i=1}^m$ be the nodes and $(w_i)_{i=1}^m$ be the weights of Gaussian quadrature of level $m$ on the interval $[0, a]$ with respect to the weight function $w(x) = c_H x^{-H-1/2}$ with $H > -1/2$. Assume $a \le 2(m+1)T^{-1}$. Then,
\begin{align*}
\int_0^T\abs{c_H\int_0^a e^{-tx}x^{-H-1/2} \sdd x - \sum_{i=1}^m w_i e^{-x_it}}\sdd t &\le c_HT^{1/2+H}\frac{(Ta)^{2m+1/2-H}}{(2m+1)!(2m+1/2-H)}.
\end{align*}
\end{lemma}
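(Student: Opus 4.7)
The plan is to exploit the Taylor expansion of $x \mapsto e^{-tx}$ at $x = 0$ together with the exactness of Gaussian quadrature on polynomials of low degree. First, since $x\mapsto e^{-tx}$ is completely monotone for every fixed $t\ge 0$, Proposition \ref{prop:GaussianQuadratureUnderestimatesCompletelyMonotoneFunctions} guarantees that the error $c_H\int_0^a e^{-tx}x^{-H-1/2}\sdd x - \sum_{i=1}^m w_i e^{-x_i t}$ is non-negative for every $t$, so I can drop the absolute value inside the $t$-integral.

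Next, fix $t\ge 0$ and let $q_t(x)\coloneqq \sum_{k=0}^{2m-1}(-tx)^k/k!$ be the Taylor polynomial of $e^{-tx}$ of degree $2m-1$ around $x = 0$. Taylor's theorem with Lagrange remainder (applied to $y\mapsto e^{-y}$ at $y = tx$, using that the $(2m)$-th derivative equals $e^{-\eta}\in(0,1]$) gives the one-sided bound
\begin{equation*}
0 \le e^{-tx} - q_t(x) \le \frac{(tx)^{2m}}{(2m)!}, \qquad x\ge 0.
\end{equation*}
Since Gaussian quadrature of level $m$ with weight $c_H x^{-H-1/2}$ is exact on polynomials of degree $\le 2m-1$, we can subtract $q_t$ freely:
\begin{equation*}
c_H\int_0^a e^{-tx}x^{-H-1/2}\sdd x - \sum_{i=1}^m w_i e^{-x_i t}
= c_H\int_0^a (e^{-tx} - q_t(x))x^{-H-1/2}\sdd x - \sum_{i=1}^m w_i \bigl(e^{-x_i t} - q_t(x_i)\bigr).
\end{equation*}
The quadrature sum on the right is non-negative term by term (each $e^{-x_i t} - q_t(x_i)\ge 0$ and $w_i \ge 0$), so removing it only enlarges the upper bound.

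It then remains to bound the Taylor remainder integral:
\begin{equation*}
c_H\int_0^a (e^{-tx} - q_t(x))\,x^{-H-1/2}\sdd x \le \frac{c_H t^{2m}}{(2m)!}\int_0^a x^{2m-H-1/2}\sdd x = \frac{c_H t^{2m}\,a^{2m+1/2-H}}{(2m)!\,(2m+1/2-H)}.
\end{equation*}
Integrating over $t\in[0,T]$ yields $\int_0^T t^{2m}\sdd t = T^{2m+1}/(2m+1)$, and splitting $T^{2m+1} = T^{1/2+H}\cdot T^{2m+1/2-H}$ gives precisely the stated bound. The condition $a\le 2(m+1)/T$ plays no role in the derivation of the upper bound itself; it is there only to ensure the bound is meaningful (so the factor $(Ta)^{2m+1/2-H}/(2m+1)!$ actually decays in $m$).

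There is no real obstacle in this argument. The only conceptual points are (i) recognising that Taylor expansion at $x=0$ is the natural tool because it automatically hits the degree at which Gaussian quadrature stops being exact, and (ii) using complete monotonicity both to sign the overall error and to discard the Taylor-remainder term evaluated at the quadrature nodes.
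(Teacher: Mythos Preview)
Your proof is correct and rests on the same three ingredients as the paper's: complete monotonicity to remove the absolute value, exactness of level-$m$ Gaussian quadrature on polynomials of degree $\le 2m-1$, and a Taylor remainder estimate.

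The order of operations, however, differs in an interesting way. The paper first applies Fubini to carry out the $t$-integral, obtaining the Gaussian quadrature error for the single function $f(x)=(1-e^{-Tx})/x$, and then Taylor-expands $f$; controlling the sign of the resulting remainder $g(x)=f(x)-\text{(degree }2m-1\text{ Taylor poly)}$ is what forces the hypothesis $a\le 2(m+1)T^{-1}$. You instead Taylor-expand $x\mapsto e^{-tx}$ for each fixed $t$, use the Lagrange form of the remainder (which is automatically non-negative because the $(2m)$-th derivative of $e^{-y}$ is $e^{-y}\ge 0$), and only then integrate in $t$. As you observe, this route never uses the assumption $a\le 2(m+1)T^{-1}$; the bound holds for all $a>0$. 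So your argument is slightly more elementary and in fact shows that the hypothesis in the lemma is superfluous for the inequality itself --- it is only relevant, as you say, for the bound to decay in $m$.
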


\begin{proof}
Since $x\mapsto e^{-tx}$ is a completely monotone function, Proposition \ref{prop:GaussianQuadratureUnderestimatesCompletelyMonotoneFunctions} implies that
\begin{align*}
\int_0^T\Bigg|c_H\int_0^a e^{-tx}x^{-H-1/2} \sdd x &- \sum_{i=1}^m w_i e^{-x_it}\Bigg|\sdd t\\
&= \int_0^T\left(c_H\int_0^a e^{-tx}x^{-H-1/2} \sdd x - \sum_{i=1}^m w_i e^{-x_it}\right)\sdd t\\
&= c_H\int_0^a \left(1 - e^{-Tx}\right)x^{-H-3/2} \sdd x - \sum_{i=1}^m \frac{w_i}{x_i}\left(1 - e^{-Tx_i}\right).
\end{align*}

This is obviously the Gaussian quadrature error of the function
\begin{align*}
f(x) &= (1-e^{-Tx})x^{-1} = T\sum_{n=0}^\infty \frac{(-xT)^n}{(n+1)!}
\end{align*}
on the interval $[0, a]$ with respect to the weight function $w(x) = c_H x^{-H-1/2}$. In particular, because Gaussian quadrature integrates exactly polynomials up to degree $2m-1$, instead of $f$ we may consider $$g(x)\coloneqq f(x) - T\sum_{n=0}^{2m-1} \frac{(-xT)^n}{(n+1)!} = T^{2m+1} x^{2m} \sum_{n=0}^\infty \frac{(-xT)^n}{(n+2m+1)!}.$$ 

If $$x \le (2m+2)T^{-1},\quad \textup{then}\quad 0 \le g(x) \le \frac{T^{2m+1}x^{2m}}{(2m+1)!}.$$ Hence, for $a \le (2m+2)T^{-1}$, we have (due to the positivity of the weights)
\begin{align*}
\int_0^T\abs{c_H\int_0^a e^{-tx}x^{-H-1/2} \sdd x - \sum_{i=1}^m w_i e^{-x_it}}\sdd t &= c_H\int_0^a g(x)x^{-H-1/2} \sdd x - \sum_{i=1}^m w_ig(x_i)\\
&\le c_H \int_0^a \frac{T^{2m+1} x^{2m-H-1/2}}{(2m+1)!} \sdd x\\
&= c_H \frac{T^{2m+1} a^{2m+1/2-H}}{(2m+1)!(2m+1/2-H)}.\qedhere
\end{align*}
\end{proof}

Finally, the following formula for the approximation error on $[\xi_n, \infty)$ is trivial.

\begin{lemma}\label{lem:HighMeanReversion}
We have
\begin{align*}
\int_0^\infty c_H \int_a^\infty e^{-tx} x^{-H-1/2} \sdd x \sdd t &= \frac{c_H}{1/2 + H}a^{-H-1/2}.
\end{align*}
\end{lemma}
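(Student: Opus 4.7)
The plan is to apply Tonelli's theorem to swap the order of integration (everything is non-negative), compute the inner integral as an elementary exponential integral, and then integrate the resulting power function. Since the convergence of the outer integral requires $H > -1/2$, this assumption — already in force throughout the paper — is exactly what makes the expression finite.

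More concretely, I would first observe that the integrand $(t,x) \mapsto e^{-tx} x^{-H-1/2}$ is non-negative on $[0,\infty) \times [a,\infty)$, so Tonelli justifies writing
\begin{equation*}
\int_0^\infty c_H \int_a^\infty e^{-tx} x^{-H-1/2} \sdd x \sdd t = c_H \int_a^\infty x^{-H-1/2} \left(\int_0^\infty e^{-tx} \sdd t\right) \sdd x.
\end{equation*}
The inner integral equals $1/x$ for each $x \ge a > 0$, which yields
\begin{equation*}
c_H \int_a^\infty x^{-H-3/2} \sdd x = c_H \cdot \frac{a^{-H-1/2}}{H+1/2},
\end{equation*}
where we used that $-H-3/2 < -1$ (since $H > -1/2$) to make the antiderivative vanish at $+\infty$.

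There is no real obstacle here: the only nontrivial ingredient is the sign condition $H+1/2 > 0$ needed for both the convergence of the $x$-integral at infinity and for the denominator to be positive. The short argument could of course be presented even more directly by noting that the left-hand side is just $\int_a^\infty K_\infty(x) \mu(\dd x)$ where $K_\infty(x) := \int_0^\infty e^{-tx} \sdd t = 1/x$ and $\mu(\dd x) = c_H x^{-H-1/2} \sdd x$ is the measure from \eqref{eqn:IntegralRepresentationOfK}, but the Fubini-plus-elementary-integration presentation above is the cleanest.
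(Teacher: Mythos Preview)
Your proof is correct and matches the paper's own argument essentially line for line: swap the order of integration (the paper says ``Fubini'', you more precisely invoke Tonelli), evaluate the inner $t$-integral as $1/x$, and compute the resulting power integral using $H+1/2>0$. There is nothing to add.
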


\begin{proof}
This follows immediately using Fubini's theorem with
\begin{align*}
\int_0^\infty c_H \int_a^\infty e^{-tx} x^{-H-1/2} \sdd x \sdd t &= c_H \int_a^\infty x^{-H-3/2} \sdd x = \frac{c_H}{1/2 + H}a^{-H-1/2}.\qedhere
\end{align*}
\end{proof}

We are finally able to state the following result on the convergence rate of the Gaussian approximations.

\begin{theorem}\label{thm:TheL1TheoremGeometric}
Let $(x_i)_{i=1}^N$ be the nodes and $(w_i)_{i=1}^N$ be the weights of the geometric Gaussian rule with $\alpha = \log(3 + 2\sqrt{2})$, $\beta = 1$, $a = \frac{10\sqrt{2}-14}{e}\sqrt{(H+1/2)N}T^{-1}$, and $b=\frac{10\sqrt{2}-14}{e}T^{-1}$. Then,
\begin{align}
\int_0^T \abs{K(t) - K^N(t)} \sdd t &\lesssim \frac{c_H}{H+1/2}T^{H+1/2}\left(\frac{e}{10\sqrt{2}-14}\right)^{H+1/2}\left(\sqrt{2}+1\right)^{-2\sqrt{(H+1/2)N}}.\label{eqn:FirstGeometricBound}
\end{align}

If instead we choose $b = \frac{10\sqrt{2}-14}{e}((H+1/2)N)^{1/4}T^{-1},$ then,
\begin{align}
\int_0^T \abs{K(t) - K^N(t)} \sdd t &\lesssim \frac{200}{3}c_HT^{H+1/2}N^{-H/4-1/8}\left(\sqrt{2}+1\right)^{-2\sqrt{(H+1/2)N}}.\label{eqn:SecondGeometricBound}
\end{align}
\end{theorem}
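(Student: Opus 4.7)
The plan is to decompose the quadrature error according to the block structure of Definition \ref{def:GeometricGaussianRules} and then apply Lemmas \ref{lem:LowMeanReversion}, \ref{lem:IntermediateMeanReversion} and \ref{lem:HighMeanReversion} in turn. Writing $K(t) = c_H \int_0^\infty e^{-xt} x^{-H-1/2} \sdd x$ and grouping the nodes into those on $[0,\xi_1]$ and those on $[\xi_i,\xi_{i+1}]$ for $i=1,\dots,n-1$, one triangle inequality in $L^1([0,T])$ yields
\begin{equation*}
\int_0^T |K(t) - K^N(t)|\sdd t \le I_{\textup{low}} + I_{\textup{mid}} + I_{\textup{high}},
\end{equation*}
where $I_{\textup{low}}$, $I_{\textup{mid}}$, $I_{\textup{high}}$ are precisely the quantities bounded in the three lemmas (the $t$-interval $[0,T]$ may be extended to $[0,\infty)$ for $I_{\textup{mid}}$ and $I_{\textup{high}}$ since, by Proposition \ref{prop:GaussianQuadratureUnderestimatesCompletelyMonotoneFunctions}, the corresponding integrands are non-negative).

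For both cases, the choices $\alpha = \log(3+2\sqrt{2})$, $\beta = 1$ satisfy the hypothesis of Lemma \ref{lem:IntermediateMeanReversion}, and Definition \ref{def:GeometricGaussianRules} gives $m = \sqrt{(H+1/2)N}$, $n = \sqrt{N/(H+1/2)}$, and $\xi_n = b(\sqrt{2}+1)^{2\sqrt{N/(H+1/2)}}$. For the first choice of $a,b$, one checks $c_n = (b/a)^{1/n} = ((H+1/2)N)^{-1/(2n)}\to 1$ and $Ta \le 2(m+1)$ (both hold since $(10\sqrt{2}-14)/e \approx 0.052$), and plugging $b$ into Lemma \ref{lem:HighMeanReversion} gives
\begin{equation*}
I_{\textup{high}} = \frac{c_H}{H+1/2}\left(\frac{e}{10\sqrt{2}-14}\right)^{H+1/2}T^{H+1/2}(\sqrt{2}+1)^{-2\sqrt{(H+1/2)N}},
\end{equation*}
which is exactly the right-hand side of \eqref{eqn:FirstGeometricBound}. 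It remains to show that $I_{\textup{low}}$ and $I_{\textup{mid}}$ are of strictly smaller order. Stirling applied to Lemma \ref{lem:LowMeanReversion} yields $I_{\textup{low}} = O((5\sqrt{2}-7)^{2m})$, which decays faster than $(\sqrt{2}+1)^{-2m}$ since $5\sqrt{2}-7 < \sqrt{2}-1$. For $I_{\textup{mid}}$, the expansion $1-c_n \approx \log((H+1/2)N)/(2n)$ combined with $m/n = H+1/2$ yields $(1+(1-c_n)/2)^{-2m} \approx ((H+1/2)N)^{-(H+1/2)/2}$, and the product $a^{-H-1/2}N^{H/2+1/4}$ collapses to an $H$-dependent constant times $T^{H+1/2}$ by the very choice of $a$, so $I_{\textup{mid}}$ picks up an additional factor $N^{-(H+1/2)/2}$ relative to $I_{\textup{high}}$.

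For the second parameter choice only $b$ is rescaled by $((H+1/2)N)^{1/4}$. This multiplies $I_{\textup{high}}$ by $((H+1/2)N)^{-(H+1/2)/4}$ and changes $c_n$ to $((H+1/2)N)^{-1/(4n)}$, so the same expansion now gives $(1+(1-c_n)/2)^{-2m}\approx((H+1/2)N)^{-(H+1/2)/4}$. Consequently $I_{\textup{mid}}$ and $I_{\textup{high}}$ both carry the rate $N^{-H/4-1/8}(\sqrt{2}+1)^{-2\sqrt{(H+1/2)N}}$ while $I_{\textup{low}}$ remains negligible, yielding \eqref{eqn:SecondGeometricBound}. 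The main obstacle is computational rather than conceptual: one must verify that the combined explicit prefactors from Lemmas \ref{lem:IntermediateMeanReversion} and \ref{lem:HighMeanReversion}, which involve $c_H$, $(e/(10\sqrt{2}-14))^{H+1/2}$, $(H+1/2)^{-(H+1/2)/2}$, and the constant $C_1$ of Lemma \ref{lem:IntermediateMeanReversion}, can be uniformly dominated by $\frac{200}{3}c_H$ over $H\in(-1/2,1/2)$. The apparent singularity of $\frac{1}{H+1/2}$ at $H = -1/2$ is cancelled by the zero of $c_H = \sin(\pi(H+1/2))/\pi$ (from the Gamma reflection formula), which is what makes such a uniform constant feasible.
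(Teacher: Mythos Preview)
Your decomposition and the overall strategy match the paper's proof exactly: split via the triangle inequality into the three pieces governed by Lemmas \ref{lem:LowMeanReversion}, \ref{lem:IntermediateMeanReversion}, \ref{lem:HighMeanReversion}, then check that with the stated parameter choices the high-mean-reversion term dominates. Your treatment of $I_{\textup{mid}}$ and $I_{\textup{high}}$ is correct.

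There is, however, a concrete error in your handling of $I_{\textup{low}}$. In Lemma \ref{lem:LowMeanReversion} the symbol ``$a$'' denotes the right endpoint of the first interval, which in the geometric Gaussian rule is $\xi_1$, \emph{not} the parameter $a$ from Definition \ref{def:GeometricGaussianRules}. From that definition one has $\xi_1 = a\,c_n\,e^{\alpha\beta} = a\,c_n\,(3+2\sqrt{2})$. Hence $T\xi_1 \approx \tfrac{10\sqrt{2}-14}{e}(3+2\sqrt{2})\,m = \tfrac{2\sqrt{2}-2}{e}\,m$, and Stirling gives
\[
\frac{(T\xi_1)^{2m}}{(2m+1)!} \;\sim\; m^{-5/2}\left(\frac{2\sqrt{2}-2}{2}\right)^{2m} \;=\; m^{-5/2}(\sqrt{2}-1)^{2m},
\]
so $I_{\textup{low}}$ decays at the \emph{same} exponential rate $(\sqrt{2}-1)^{2m} = (\sqrt{2}+1)^{-2m}$ as $I_{\textup{high}}$, not at the faster rate $(5\sqrt{2}-7)^{2m}$ you claim. (Your rate $5\sqrt{2}-7 = \tfrac{10\sqrt{2}-14}{2}$ is exactly what one gets by plugging the parameter $a$ itself into the lemma, which confirms the slip.) The term $I_{\textup{low}}$ is still negligible, but only because of the additional polynomial factor $m^{-2-H}c_n^{2m}$, as the paper argues; this is also precisely why the specific choice of $a$ in the theorem statement matters --- it is tuned so that the exponential rates match, leaving the polynomial prefactor to do the work.
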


\begin{remark}
We decided to give two different bounds depending on the choice of the parameter $b$ in the theorem above. While the latter bound obviously yields the ever so slightly better convergence rate, the former bound may be more convenient in theoretical applications. This is because in the latter bound, the largest node $x$ has an additional polynomial factor of growth that is not present in the former bound. Hence, using the latter bound in theoretical applications may lead to additional annoying logarithmic error terms that can be avoided by using the former bound.
\end{remark}

\begin{proof}
We only prove \eqref{eqn:FirstGeometricBound}, the proof of \eqref{eqn:SecondGeometricBound} being similar. By the triangle inequality, and Lemmas \ref{lem:LowMeanReversion}, \ref{lem:IntermediateMeanReversion}, and \ref{lem:HighMeanReversion},

\begin{align*}
\int_0^T \abs{K(t) - K^N(t)} \sdd t
&\lesssim c_HT^{1/2+H}\frac{(Tac_n(3+2\sqrt{2}))^{2m+1/2-H}}{(2m+1)!(2m+1/2-H)}\\
&\qquad + C_1 a^{-H-1/2} N^{H/2+1/4}\left(1 + \frac{1}{2}(1-c_n)\right)^{-2m} \left(\sqrt{2}+1\right)^{-2m}\\
&\qquad + \frac{c_H}{1/2 + H}b^{-H-1/2}\exp\left(-\alpha\sqrt{(H+1/2)N}\right),
\end{align*}
where $C_1$ is the constant from Lemma \ref{lem:IntermediateMeanReversion}, and where we assume that $c_n\coloneqq \left(\frac{b}{a}\right)^{1/n}\approx 1$, $a\le 2(m+1)T^{-1}$, $\beta = \frac{\log(3+2\sqrt{2})}{\alpha}$, and $b\le a$. 

Let us now choose $\alpha$ such that $$\exp\left(-\alpha\sqrt{(H+1/2)N}\right) = \left(\sqrt{2} + 1\right)^{-2\beta\sqrt{(H+1/2)N}},$$ in order to ensure that the latter two terms converge at the same speed. It is clear that this is achieved for $$\alpha = \log\left(3 + 2\sqrt{2}\right),\quad\text{i.e.}\quad\beta=1.$$

Next, we want to ensure that the first summand converges at the same speed as the other two. Using Stirling's formula, we have
\begin{align*}
\frac{(Tac_n(3+2\sqrt{2}))^{2m+1/2-H}}{(2m+1)!(2m+1/2-H)} &\lesssim \frac{(Tac_n(3+2\sqrt{2}))^{2m+1/2-H}}{4m^2\sqrt{2\pi\cdot 2m} \left(\frac{2m}{e}\right)^{2m}}\\
&\approx \frac{(Ta(3+2\sqrt{2}))^{1/2-H}}{8m^{5/2}\sqrt{\pi}}c_n^{2m}\left(\frac{eTa(3+2\sqrt{2})}{2m}\right)^{2m}.
\end{align*}
To ensure a similar speed of convergence, we need $$\frac{eTa(3+2\sqrt{2})}{2m} = \sqrt{2} - 1,\quad\text{i.e.}\quad a = \frac{\left(10\sqrt{2}-14\right)m}{eT}.$$ Obviously, for this choice of $a$ we have $a \le 2(m+1)T^{-1}$. Furthermore, we now have
\begin{align*}
\int_0^T \abs{K(t) - K^N(t)} \sdd t &\lesssim c_HT^{H+1/2}\Bigg(\frac{(\frac{1}{e}(2\sqrt{2}-2))^{1/2-H}}{8\sqrt{\pi}}m^{-2-H}c_n^{2m}\\
&\qquad + \frac{2}{\pi}\left(\frac{e^2}{H+1/2}\right)^{H+1/2}\frac{\sqrt{2}+1}{1-(3+2\sqrt{2})^{-H-1/2}}\\
&\qquad \times \Bigg(1 + \frac{(\pi/2)^{H+3/2}}{H+1/2}\Bigg) \left(1 + \frac{1}{2}(1-c_n)\right)^{-2m}\\
&\qquad + \frac{1}{1/2 + H}(Tb)^{-H-1/2}\Bigg)\left(\sqrt{2}+1\right)^{-2\sqrt{(H+1/2)N}}.
\end{align*}

Assuming that we choose $b$ constant (i.e. independent of $N$), such that $b \le a$, we see that only the last summand in the above bound is relevant, due to the polynomial terms in $N$. Indeed, since $a = C\sqrt{N}$ for some $C,$ it is not hard to see that we have both $$m^{-2-H} c_n^m = o(1)\quad\textup{and}\quad \left(1 + \frac{1}{2}(1-c_n)\right)^{-2m} = o(1).$$

Thus,
\begin{align*}
\int_0^T \abs{K(t) - K^N(t)} \sdd t &\lesssim \frac{c_H}{H+1/2}T^{H+1/2}(Tb)^{-H-1/2}\left(\sqrt{2}+1\right)^{-2\sqrt{(H+1/2)N}}.
\end{align*}
To ensure that $b\le a$ for all $m\ge 1$, we choose $b = \frac{10\sqrt{2}-14}{eT},$ yielding \eqref{eqn:FirstGeometricBound}.
\end{proof}

\subsection{Non-geometric Gaussian approximations}\label{sec:NonGeometricGaussianApproximations}

We will try to further improve the results of Section \ref{sec:GeometricGaussianApproximations} by choosing non-geometrically spaced intervals $[\xi_i,\xi_{i+1}]$.

\begin{definition}[Non-geometric Gaussian Rules]
Let $N\in\N$ be the total number of nodes and $\beta,a,c\in(0,\infty)$ be parameters of the scheme, where $c > 1$. We define $$m\coloneqq \mathrm{rd}\left(\beta\sqrt{(1/2+H)N}\right),\qquad n\coloneqq \mathrm{rd}\left(\frac{1}{\beta}\sqrt{\frac{N}{1/2+H}}\right) (\approx N/m),$$
\begin{equation}\label{eqn:StrangeXiDefinition}
\xi_0\coloneqq 0,\quad \xi_1 \coloneqq a,\quad\xi_{i+1}=\left(\frac{c + \xi_i^{\frac{1/2+H}{2m}}}{c - \xi_i^{\frac{1/2+H}{2m}}}\right)^2 \xi_i,\qquad i=1,\dots,n-1.
\end{equation}
Here, we assume that $\xi_i^{\frac{1/2+H}{2m}} < c$ for all $i=1,\dots,n-1$, which is true if we choose $c$ large enough. Define $(x_i)_{i=1}^m$ to be the nodes and $(\widetilde{w}_i)_{i=1}^m$ to be the weights of Gaussian quadrature of level $m$ on the interval $[0, \xi_1]$ with the weight function $w(x) = c_H x^{-H-1/2}$. Furthermore, let $(x_i)_{i=m+1}^{mn}$ be the nodes and $(w_i)_{i=m+1}^{mn}$ be the weights of Gaussian quadrature of level $m$ on to the intervals $[\xi_i, \xi_{i+1}]$ for $i=1,\dots,n-1$ with the weight function $w(x) \equiv 1$. Then we define the non-geometric Gaussian rule of type $(H,N, c,\beta,a)$ to be the set of nodes $(x_i)_{i=1}^{mn}$ with weights $(w_i)_{i=1}^{mn}$ defined by $$w_i \coloneqq \begin{cases}
\widetilde{w}_i,\qquad &\textup{if } i=1,\dots,m,\\
c_H\widetilde{w}_i x_i^{-H-1/2},\qquad &\textup{else}.
\end{cases}$$
\end{definition}

\begin{remark}
    Non-geometric Gaussian rules have three free parameters: $\beta, a, c$. They can be interpreted as follows.
    \begin{itemize}
        \item The parameter $c$ determines the cutoff point $\xi_n$ of the interval in \eqref{eqn:LaplaceRepresentationOfK}, similar to the parameter $\alpha$ in geometric Gaussian rules. It is hence mainly used to control the error on $[\xi_n,\infty)$.
        \item The parameter $\beta$ determines the degree of the Gaussian quadrature rule, similar to the parameter $\beta$ in geometric Gaussian rules. It is hence mainly used to control the error on $[\xi_1,\xi_n]$.
        \item The parameter $a$ determines the size of the interval $[0, a]$, which is again special due to the singularity in the weight function $w$. Similarly to the parameter $a$ in geometric Gaussian rules, it is mainly used to control the error on $[0,\xi_1]$.
    \end{itemize}
    Finally, we remark that the reason for the specific definition of $\xi_{i+1}$ in \eqref{eqn:StrangeXiDefinition} will become apparent in the proof of Lemma \ref{lem:IntermediateNonGeometricMeanReversions}. Furthermore, the parameters $c,\beta,a$ will not depend on $N$ in our proofs.
\end{remark}

The following lemma is the equivalent of Lemma \ref{lem:IntermediateMeanReversion} for non-geometric Gaussian rules. It bounds the error of the quadrature rule on $[\xi_1,\xi_n]$.

\begin{lemma}\label{lem:IntermediateNonGeometricMeanReversions}
Let $(x_i)_{i=1}^N$ be the nodes and $(w_i)_{i=1}^N$ be the weights of a non-geometric Gaussian rule. Then,
\begin{align*}
\int_0^\infty &\abs{c_H\int_{\xi_1}^{\xi_n} e^{-xt}x^{-H-1/2} \sdd x - \sum_{i=m+1}^N w_i e^{-x_it}}\sdd t\\
&\lesssim c_H\frac{8}{\pi}\Bigg(1 + \frac{(\pi/2)^{H+3/2}}{H+1/2}\Bigg)e^{H+1/2}\sum_{i=1}^{n-1} \frac{r_i}{(r_i - r_i^{-1})^2}\left(\frac{L_i-1}{2}\right)^{-H-1/2}\eps_i^{-H-1/2}c^{-2m},
\end{align*}
where $L_i=\frac{\xi_{i+1}}{\xi_i}$, $M_i=\frac{L_i+1}{L_i-1}$, $\mu\coloneqq \frac{2m}{H + 1/2}$, $\eps_i = \frac{\sqrt{\mu^2(M_i^2-1)+1} - M_i}{\mu^2-1},$ and $r_i = M_i-\eps_i + \sqrt{(M_i-\eps_i)^2-1},$ and where we assume that $\mu \ge \frac{1}{M_i-1}\lor \frac{3M_i}{2\sqrt{M_i^2-1}}$ for $i=1,\dots,n-1$.
\end{lemma}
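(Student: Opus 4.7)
The plan is to mirror the proof of Lemma \ref{lem:IntermediateMeanReversion}, with the key novelty being how the non-geometric choice of $\xi_i$ in \eqref{eqn:StrangeXiDefinition} interacts with the quadrature error bound. First, I apply the triangle inequality to split the integral on $[\xi_1,\xi_n]$ into a sum over the subintervals $[\xi_i,\xi_{i+1}]$ for $i=1,\dots,n-1$. On each subinterval, the definition of the rule gives weights $w_j = c_H \widetilde{w}_j x_j^{-H-1/2}$, so Lemma \ref{lem:SingleIntervalL1Bound} (applied to the Gaussian quadrature of level $m$ on $[\xi_i,\xi_{i+1}]$ with weight $w\equiv 1$, with a factored $c_H$ out front) yields the bound
\begin{align*}
\int_0^\infty \bigg|c_H\!\int_{\xi_i}^{\xi_{i+1}} \!\!\! e^{-xt}x^{-H-1/2}\sdd x - \!\!\!\sum_{j=im+1}^{(i+1)m}\!\!\! w_j e^{-x_jt}\bigg|\sdd t \le c_H\,\nu_{m,r_i,H+3/2,M_i}\left(\tfrac{\xi_{i+1}-\xi_i}{2}\right)^{-H-1/2}\eps_i^{-H-1/2}\left(M_i+\sqrt{M_i^2-1}\right)^{-2m},
\end{align*}
with $M_i = (L_i+1)/(L_i-1)$ as in the statement. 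This is the same step as in the geometric case.

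The crucial step is then the algebraic identity connecting $M_i$ with the recursion \eqref{eqn:StrangeXiDefinition}. Writing $u_i \coloneqq \xi_i^{(1/2+H)/(2m)}$, the definition gives $L_i^{1/2} = (c+u_i)/(c-u_i)$. A direct computation shows
\begin{align*}
M_i + \sqrt{M_i^2-1} &= \frac{L_i+1+2\sqrt{L_i}}{L_i-1} = \frac{\sqrt{L_i}+1}{\sqrt{L_i}-1} = \frac{c}{u_i} = \frac{c}{\xi_i^{(1/2+H)/(2m)}},
\end{align*}
so that raising to the power $-2m$ produces exactly
\begin{equation*}
\left(M_i+\sqrt{M_i^2-1}\right)^{-2m} = \frac{\xi_i^{H+1/2}}{c^{2m}}.
\end{equation*}
This is precisely why the recursion \eqref{eqn:StrangeXiDefinition} was chosen: it makes the exponential factor in the Gaussian error depend on $\xi_i$ in exactly the right way to cancel a power of $\xi_i$ appearing elsewhere, while extracting a universal factor $c^{-2m}$ from every subinterval.

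Combining this identity with $\xi_{i+1}-\xi_i = \xi_i(L_i-1)$, the factor $\xi_i^{H+1/2}$ is absorbed into $(\tfrac{\xi_{i+1}-\xi_i}{2})^{-H-1/2}$, leaving $(\tfrac{L_i-1}{2})^{-H-1/2} c^{-2m}$. Substituting the definition of $\nu$ from Lemma \ref{lem:ImprovedGaussianErrorForSpecificPowerFunction} and pulling out the $i$-independent factors $c_H \cdot \tfrac{8}{\pi}(1+\tfrac{(\pi/2)^{H+3/2}}{H+1/2})e^{H+1/2}$ and $c^{-2m}$, the summation over $i$ produces exactly the expression in the statement, once one uses $\tfrac{4m^2}{4m^2-1}\to 1$ asymptotically (absorbed into $\lesssim$). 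The sole obstacle is checking the algebraic identity above; everything else is a direct transcription of the proof of Lemma \ref{lem:IntermediateMeanReversion}, where the only real change is that the product $\xi_i^{H+1/2}(M_i+\sqrt{M_i^2-1})^{-2m}$ is now $c^{-2m}$ independently of $i$, rather than being a geometric progression in $i$.
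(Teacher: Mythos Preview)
Your proof is correct and follows the same approach as the paper: apply the triangle inequality over the subintervals $[\xi_i,\xi_{i+1}]$, invoke Lemma~\ref{lem:SingleIntervalL1Bound}, and use the recursion \eqref{eqn:StrangeXiDefinition} to obtain the identity $\xi_i^{-H-1/2}\bigl(M_i+\sqrt{M_i^2-1}\bigr)^{-2m}=c^{-2m}$. You supply more algebraic detail on this identity than the paper does, but the argument is the same.
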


\begin{proof}
As in the proof of Lemma \ref{lem:IntermediateMeanReversion}, we apply the triangle inequality for the intervals $[\xi_i, \xi_{i+1}]$ with $i=1,\dots,n-1$, and use Lemma \ref{lem:SingleIntervalL1Bound}. Then, the statement of the lemma follows directly after noting that by the definition of  $\xi_{i+1}$ in \eqref{eqn:StrangeXiDefinition}, we have
\begin{align*}
\xi_i^{-H-1/2} \left(M_i + \sqrt{M_i^2-1}\right)^{-2m} &= c^{-2m}.\qedhere
\end{align*}
\end{proof}

While Lemma \ref{lem:IntermediateNonGeometricMeanReversions} is the equivalent of Lemma \ref{lem:IntermediateMeanReversion} for non-geometric Gaussian rules to bound the error on $[\xi_1,\xi_n]$, Lemma \ref{lem:LowMeanReversion} for the error on $[0,\xi_1]$ and Lemma \ref{lem:HighMeanReversion} for the error on $[\xi_n, \infty)$ can be reused exactly. The only difference to the geometric Gaussian rules is that we do not have an explicit formula for $\xi_n$, but merely the recursion \eqref{eqn:StrangeXiDefinition}. However, the size of $\xi_n$ is needed to determine the error contribution on $[\xi_n,\infty)$ in Lemma \ref{lem:HighMeanReversion}. Hence, in the following lemma we determine the approximate size of $\xi_n$ for non-geometric Gaussian rules.

\begin{lemma}\label{lem:L1LargestNodeEstimate}
    Let $\eta = \eta(c, \beta)$ be the solution of the ODE 
    \begin{equation}\label{eqn:EtaODE}
    \frac{\dd\eta_t}{\dd t} = 2\log\left(1 + \frac{2e^{\frac{\eta_t}{2\beta^2}}}{c-e^{\frac{\eta_t}{2\beta^2}}}\right),\qquad \eta_0 = 0.
    \end{equation}
    Note that $\eta$ has a finite explosion time $T_0$. The following two statements are equivalent.
    \begin{enumerate}
        \item The explosion time $T_0>1$, i.e. $\eta_1 < \infty$.
        \item There exists $\delta > 0$ such that for $N$ sufficiently large we have
        \begin{equation}\label{eqn:XIUniformBoundCondition}
        \max_{i=1,\dots,n} \xi_i^{\frac{H + 1/2}{2m}} < (1 - \delta)c.
        \end{equation}
    \end{enumerate}
    If one of these conditions is satisfied, then $$\xi_n = \Omega(1)\exp\left(\frac{1}{\beta}\sqrt{\frac{N}{H + 1/2}}\eta_1\right).$$
\end{lemma}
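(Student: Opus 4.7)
The strategy is to interpret the recursion \eqref{eqn:StrangeXiDefinition} as a forward Euler discretization of the ODE \eqref{eqn:EtaODE}. Set $y_i \coloneqq \log \xi_i$ (for $i\ge 1$); taking logarithms in \eqref{eqn:StrangeXiDefinition} rewrites the recursion as
\begin{equation*}
y_{i+1} - y_i = 2\log\left(1 + \frac{2\,e^{(H+1/2)y_i/(2m)}}{c - e^{(H+1/2)y_i/(2m)}}\right),
\end{equation*}
which is well-defined precisely when $e^{(H+1/2)y_i/(2m)} < c$. Define the piecewise-affine interpolation of the rescaled sequence $\eta^{(N)}_{i/n} \coloneqq \beta\sqrt{(H+1/2)/N}\, y_i$ on $[0,1]$. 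Using $m \approx \beta\sqrt{(H+1/2)N}$ and $n \approx \frac{1}{\beta}\sqrt{N/(H+1/2)}$, the identity $(H+1/2)y_i/(2m) = \eta^{(N)}_{i/n}/(2\beta^2)$ holds exactly, and the step size is $1/n = \beta\sqrt{(H+1/2)/N}$. Thus the recursion becomes
\begin{equation*}
\eta^{(N)}_{(i+1)/n} - \eta^{(N)}_{i/n} = \frac{1}{n}\cdot 2\log\left(1 + \frac{2\,e^{\eta^{(N)}_{i/n}/(2\beta^2)}}{c - e^{\eta^{(N)}_{i/n}/(2\beta^2)}}\right),
\end{equation*}
which is exactly the forward Euler scheme for \eqref{eqn:EtaODE} with step size $1/n$.

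Next, I would establish the qualitative properties of the ODE. Writing $F(\eta) \coloneqq 2\log(1 + 2e^{\eta/(2\beta^2)}/(c - e^{\eta/(2\beta^2)}))$, the function $F$ is smooth, strictly positive, and increasing on $[0, 2\beta^2 \log c)$, with $F(\eta)\to\infty$ as $\eta\uparrow 2\beta^2 \log c$. In particular, $\eta_t$ is increasing, and the explosion time is $T_0 = \int_0^{2\beta^2 \log c} \dd\eta/F(\eta)$. Near the singularity one has $F(\eta)\sim 2\log(2c/(c-e^{\eta/(2\beta^2)}))$, which grows fast enough that $1/F$ is integrable; on the other end, $F$ is bounded below by $F(0)>0$. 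Hence $T_0<\infty$, justifying the terminology in the statement.

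The equivalence of (1) and (2) together with the asymptotic formula will then follow from standard convergence of the Euler scheme, provided one stays uniformly bounded away from the singularity. For (1) $\Rightarrow$ (2), if $T_0 > 1$ then $\eta_1 < 2\beta^2 \log c$, so pick $\delta' > 0$ with $e^{\eta_1/(2\beta^2)} < (1-2\delta')c$. On the closed set $\{\eta \le \eta_1\}$, $F$ is Lipschitz, so classical Euler analysis on the interval $[0,1]$ yields $\sup_{t\in[0,1]} |\eta^{(N)}_t - \eta_t| = O(1/n) \to 0$; transferring back via $\xi_i^{(H+1/2)/(2m)} = e^{\eta^{(N)}_{i/n}/(2\beta^2)}$ gives \eqref{eqn:XIUniformBoundCondition} with $\delta = \delta'$ for $N$ large. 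For (2) $\Rightarrow$ (1), argue by contrapositive: if $T_0 \le 1$ then for every $\delta > 0$ there is $t^\ast < T_0 \le 1$ with $e^{\eta_{t^\ast}/(2\beta^2)} > (1-\delta/2)c$; Euler convergence on the safe interval $[0,t^\ast]$ then forces some $\xi_i^{(H+1/2)/(2m)}$ (with $i \approx t^\ast n$) to exceed $(1-\delta)c$ for large $N$, violating \eqref{eqn:XIUniformBoundCondition}. Finally, once either equivalent condition holds, the same Euler convergence gives $\eta^{(N)}_1 = \eta_1 + o(1)$, and unrolling the rescaling yields $\log\xi_n = \frac{1}{\beta}\sqrt{N/(H+1/2)}\,\eta^{(N)}_1 = \frac{1}{\beta}\sqrt{N/(H+1/2)}\,\eta_1 + o(\sqrt{N})$, which is the claimed $\Omega(1)$ prefactor.

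The main obstacle is the singular behaviour of $F$: a naive Lipschitz bound on all of $[0,2\beta^2\log c)$ is impossible, so the Euler analysis has to be localized to a region bounded away from the pole. Under condition (2) this region is built in; under condition (1) one localizes to $[0, \eta_1]$ before appealing to Grönwall. This localization is the only delicate point; everything else is a straightforward unrolling of the rescaling.
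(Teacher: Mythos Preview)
Your approach is essentially the same as the paper's: recognize the recursion \eqref{eqn:StrangeXiDefinition} as a forward Euler scheme for the ODE \eqref{eqn:EtaODE} after the rescaling $\eta^{(n)}_{i/n}=\frac{1}{n}\log\xi_i$, and deduce the equivalence and the asymptotic formula from Euler convergence. Two points are worth noting.

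First, your direction (1) $\Rightarrow$ (2) has a circularity that you flag but do not resolve. You write that on $\{\eta\le\eta_1\}$ the field $F$ is Lipschitz, ``so classical Euler analysis on $[0,1]$ yields $\sup_t|\eta^{(N)}_t-\eta_t|=O(1/n)$''. But classical Euler estimates require the \emph{discrete} iterates to stay in the Lipschitz region, which is precisely what you are trying to prove. This can be repaired by a stopping-time/bootstrapping argument (work on a slightly larger safe region $\{\eta\le\eta_1+\varepsilon\}$, argue inductively that the iterates cannot exit), but the paper avoids this entirely with a cleaner observation: since $F$ is non-negative and increasing, the forward Euler scheme is \emph{lower-biased}, i.e.\ $\eta^{(n)}_t\le\eta_t$ for all $t$. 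This immediately gives $\eta^{(n)}_1\le\eta_1<2\beta^2\log c$ and hence \eqref{eqn:XIUniformBoundCondition}, with no localization gymnastics.

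Second, you silently start the discrete scheme at $0$, but $\xi_1=a$ gives $\eta^{(n)}_{1/n}=\frac{1}{n}\log a$, which is nonzero. The paper handles this explicitly, distinguishing $a\le 1$ (where the lower-bias argument applies directly) from $a>1$ (where one shifts the ODE solution by a time $t_n\to 0$). For the final asymptotic, this initial-value perturbation and the $O(N^{-1/2})$ discrepancy between $\kappa=\frac{(H+1/2)n}{2m}$ and $\frac{1}{2\beta^2}$ both enter as Lipschitz perturbations of the flow, producing the $\Omega(1)$ prefactor; your statement that $(H+1/2)y_i/(2m)=\eta^{(N)}_{i/n}/(2\beta^2)$ ``holds exactly'' is only true before rounding $m$ and $n$.
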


\begin{remark}
    We need to have $\max_{i=1,\dots,n} \xi_i^{\frac{1/2+H}{2m}} < c$ to ensure that the non-geometric Gaussian rule is well-defined, cf. \eqref{eqn:StrangeXiDefinition}. This lemma hence gives a semi-explicit asymptotic formula for $\xi_n$ if we satisfy this condition even with an additional $\delta > 0$ of leeway.
\end{remark}

\begin{proof}
Define $\kappa\coloneqq \frac{(H+1/2)n}{2m}\approx \frac{1}{2\beta^2},$ and assume that $N$ is large. Recall that $$\xi_{i+1} = \xi_i \left(\frac{c + \xi_i^{\frac{H+1/2}{2m}}}{c - \xi_i^{\frac{H+1/2}{2m}}}\right)^2 = \xi_i \left(1 + \frac{2\xi_i^{\kappa/n}}{c-\xi_i^{\kappa/n}}\right)^2.$$ Set $\eta^{(n)}_{i/n}\coloneqq \log \xi_i^{1/n}.$ Then, $$\eta_{\frac{i+1}{n}}^{(n)} = \eta_{\frac{i}{n}}^{(n)} + \frac{2}{n}\log\left(1 + \frac{2\exp\left(\kappa\eta_{\frac{i}{n}}^{(n)}\right)}{c-\exp\left(\kappa\eta_{\frac{i}{n}}^{(n)}\right)}\right),\qquad \eta^{(n)}_{1/n} = \frac{1}{n}\log a.$$ This is almost the Euler discretization of the ODE \eqref{eqn:EtaODE}, whose solution we denote by $\eta$.

If we now have the uniform bound $\xi_i^{\frac{H+1/2}{2m}} < (1 - \delta)c$, clearly also $\eta^{(n)}$ remains uniformly bounded in $n$. In particular, we may assume that the vector field in \eqref{eqn:EtaODE} is bounded (for $t\le 1$), and we have the convergence $$\lim_{n\to\infty}\eta^{(n)}_1 = \eta_1.$$ The error $\abs{\eta^{(n)}_1 - \eta_1}$ will of course depend on $\delta$. Furthermore, this also implies that $\eta_1 < \infty$, and hence $T_0 > 1$.

Conversely, assume that $T_0 > 1$, and for now, assume also $a \le 1$. Since the vector field governing $\eta$ is increasing and non-negative, the Euler discretization is lower-biased. This implies in particular that $\eta_1^{(n)} \le \eta_1$. If $a > 1$, then there exists $t_n \in (0,\infty)$ with $\eta_{t_n} = \frac{1}{n}\log a = \eta_{1/n}^{(n)}.$ Clearly, $t_n \to 0$ as $n\to\infty$. Again since the Euler scheme is lower-biased, we have $$\eta_1^{(n)} \le \eta_{1+t_n-1/n} \le \eta_{(1 + T_0) / 2} < \infty,$$ where the second inequality holds for $n$ sufficiently large. Therefore,
\begin{align*}
    \max_{i=1,\dots,n} \xi_i^{\frac{H + 1/2}{2m}} = \xi_n^{\frac{H + 1/2}{2m}} = \exp\left(n\frac{H + 1/2}{2m}\eta_1^{(n)}\right) \le \exp\left(\frac{1}{2\beta^2}\eta_{(1 + T_0)/2}\right) < c,
\end{align*}
where the last (strict!) inequality holds, since the explosion time $T_0$ is exactly at $\exp\left(\frac{1}{2\beta^2}\eta_{(1 + T_0)/2}\right) = c.$ Thus, since the inequality is strict, and since $\exp\left(\frac{1}{2\beta^2}\eta_{(1 + T_0)/2}\right)$ is independent of $n$, we can find a uniform $\delta$ satisfying \eqref{eqn:XIUniformBoundCondition}.

Finally, assume that one of these two conditions (and hence both) are satisfied. Then, we prove the asymptotic formula for $\xi_n$. Due to the boundedness of the vector field (since $T_0 > 1$), we have that the solution of the ODE is Lipschitz in the initial condition and in the driving vector field. Now, $\kappa = \frac{1}{2\beta^2} + O(N^{-1/2})$, $\eta^{(n)}_{1/n} = O(N^{-1/2})$, and $n=\sqrt{N}$, implying that we have $$\eta^{(n)}_1 = \left(1 + f(N^{-1})\right) \eta_1,$$ where $f$ satisfies $\abs{f(N)} = O(N^{-1/2}).$ Therefore,
\begin{align*}
\xi_n &= \exp\left(n \left(1 + f(N)\right)\eta_1\right) = \exp\left(\frac{1}{\beta}\sqrt{\frac{N}{H+1/2}} \eta_1\right)\Omega(1).\qedhere
\end{align*}
\end{proof}

We can now proceed with the proof of the error bound for non-geometric Gaussian rules.

\begin{theorem}\label{thm:TheL1TheoremNonGeometric}
Define the constants $\beta_0 = 0.92993273,$ and $c_0 = 3.60585021.$ Let $K^N$ be a Gaussian approximation coming from a non-geometric Gaussian rule with parameters $c \ge c_0$, $\beta\ge\beta_0$, and $a>0$, where $c>c_0$ or $\beta > \beta_0$. Then, 
\begin{align*}
\int_0^T \abs{K(t) - K^N(t)} \dd t &\lesssim \frac{c_H}{H+1/2}\xi_n^{-1/2-H}\\
&= \Omega(1)\exp\left(-\frac{\eta_1(c, \beta)}{\beta}\sqrt{(H+1/2)N}\right).
\end{align*}
In the first inequality, the higher order terms explode as $(c,\beta)\to (c_0, \beta_0)$, and in the second inequality, the leading constant explodes for the same limit. Furthermore, we have $$\frac{\eta_1(c_0, \beta_0)}{\beta_0} = 2.3853845446404978.$$
\end{theorem}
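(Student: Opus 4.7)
The plan is to split the $L^1$-error via the triangle inequality into three pieces $E_1, E_2, E_3$ corresponding to $[0,\xi_1]$, $[\xi_1,\xi_n]$ and $[\xi_n,\infty)$, apply Lemmas \ref{lem:LowMeanReversion}, \ref{lem:IntermediateNonGeometricMeanReversions}, \ref{lem:HighMeanReversion} respectively, and then invoke Lemma \ref{lem:L1LargestNodeEstimate} to rewrite the resulting bound in terms of $\eta_1(c,\beta)$. Lemma \ref{lem:HighMeanReversion} gives $E_3 = \frac{c_H}{H+1/2}\xi_n^{-H-1/2}$ exactly, so the task reduces to showing $E_1 + E_2 = o(E_3)$ (up to constants that may blow up at the boundary) for admissible $(c,\beta)$.

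For $E_1$ I would keep $a$ independent of $N$; then for $N$ sufficiently large the precondition $a \le 2(m+1)T^{-1}$ in Lemma \ref{lem:LowMeanReversion} is met, and Stirling's formula yields $E_1 \lesssim (ea/(2m))^{2m}$, super-exponentially small in $\sqrt{N}$. For $E_2$ the decisive point is that the non-geometric recursion \eqref{eqn:StrangeXiDefinition} is designed precisely so that every summand in Lemma \ref{lem:IntermediateNonGeometricMeanReversions} carries the common exponential factor $c^{-2m}$. Under the hypothesis of Lemma \ref{lem:L1LargestNodeEstimate} one has $\xi_i^{(H+1/2)/(2m)}\le (1-\delta)c$ uniformly in $i$, which forces the ratios $L_i=\xi_{i+1}/\xi_i$ into a compact subset of $(1,\infty)$ bounded away from the degenerate endpoints, so that $M_i,\eps_i,r_i$ are uniformly controlled and $\eps_i^{-H-1/2}$ contributes at most a polynomial factor in $m$. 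This yields $E_2 \lesssim n\cdot m^{H+1/2}\cdot c^{-2m}$ up to a multiplicative constant that degenerates as $\delta\to 0$.

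The passage from $\xi_n^{-H-1/2}$ to the stated exponential rate is exactly Lemma \ref{lem:L1LargestNodeEstimate}, which gives $\xi_n^{-H-1/2} = \Omega(1)\exp\!\bigl(-\frac{\eta_1(c,\beta)}{\beta}\sqrt{(H+1/2)N}\bigr)$. The sufficient condition $E_2 = o(E_3)$ becomes, after logarithms,
$$\Phi(c,\beta) \coloneqq 2\beta^2\log c - \eta_1(c,\beta) > 0,$$
which delineates the admissible parameter region. The threshold $(c_0,\beta_0)$ is characterised by being the point on $\{\Phi=0\}$ that minimises the exponent $\eta_1(c,\beta)/\beta$ of $E_3$ (equivalently, a Lagrange condition on this curve); monotonicity of $\eta_1$ in $c$ and $\beta$, obtained from a comparison principle for the ODE \eqref{eqn:EtaODE}, then guarantees that $\{c\ge c_0,\beta\ge\beta_0\}\setminus\{(c_0,\beta_0)\}$ sits strictly inside $\{\Phi>0\}$, while the numerical value $\eta_1(c_0,\beta_0)/\beta_0 = 2.3853845\ldots$ is obtained by numerically integrating \eqref{eqn:EtaODE} up to $t=1$. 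The hardest part will be the uniform control of the $i$-dependent quantities $L_i,M_i,\eps_i,r_i$ in $E_2$ as $N\to\infty$; the exploding higher-order constants advertised in the theorem are precisely the signature of the ratio $E_2/E_3$ ceasing to be $o(1)$ as one approaches the boundary $\{\Phi=0\}$.
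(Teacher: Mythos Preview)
Your proposal is correct and follows essentially the same route as the paper: triangle-inequality split into $E_1,E_2,E_3$, the three lemmas you cite, Stirling for $E_1$, the uniform control $\xi_i^{(H+1/2)/(2m)}\le(1-\delta)c$ to bound the prefactors in $E_2$ by $O(n\cdot m^{H+1/2})$, and Lemma~\ref{lem:L1LargestNodeEstimate} to convert $\xi_n^{-H-1/2}$ into the exponential rate. One slip: $(c_0,\beta_0)$ is the point on $\{\Phi=0\}$ that \emph{maximises} $\eta_1(c,\beta)/\beta$ (equivalently $2\beta\log c$), not minimises it---on the constraint the two exponents coincide, and you want the fastest decay; the paper also observes that $\{\Phi=0\}$ is exactly $\{T_0=1\}$, so the monotonicity you invoke for $\eta_1$ is equivalently phrased there as monotonicity of the explosion time $T_0$ in $(c,\beta)$.
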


In Figure \ref{fig:ExponentsOfVariousQuadratureRules} we compare the convergence rates of our results with the previously achieved strong convergence rate in \cite[Theorem 2.1]{bayer2023markovian}.

\begin{figure}
    \centering
    \includegraphics[scale=0.6]{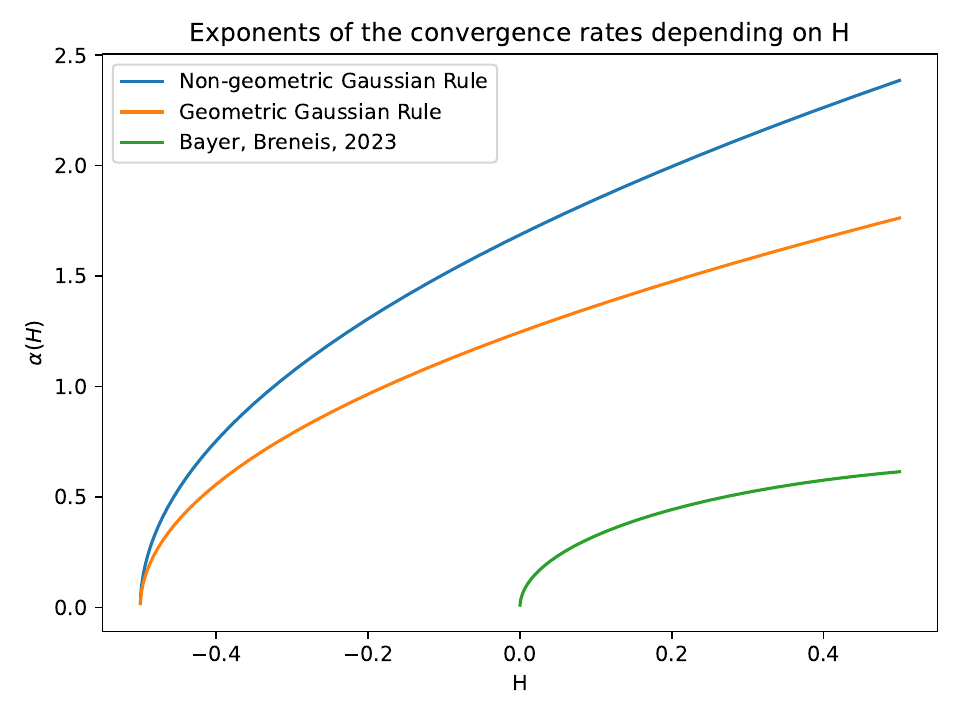}
    \caption{The error bounds for the approximations $K^N$ suggested in \cite[Theorem 2.1]{bayer2023markovian}, and in our Theorems \ref{thm:TheL1TheoremGeometric} and \ref{thm:TheL1TheoremNonGeometric} are all of the form $C N^\gamma e^{-\alpha(H)\sqrt{N}}$. This figure illustrates how the exponent of convergence $\alpha(H)$ varies in $H$ for these three approximations.}
    \label{fig:ExponentsOfVariousQuadratureRules}
\end{figure}

\begin{proof}[Proof of Theorem \ref{thm:TheL1TheoremNonGeometric}]
Assume for now that the conditions of Lemma \ref{lem:L1LargestNodeEstimate} are satisfied, i.e. the explosion time $T_0$ of $\eta$ in \eqref{eqn:EtaODE} satisfies $T_0 > 1$. Then we can apply the triangle inequality on the intervals $[0, \xi_1]$, $[\xi_1,\xi_n]$, and $[\xi_n,\infty)$, and apply Lemmas \ref{lem:LowMeanReversion}, \ref{lem:IntermediateNonGeometricMeanReversions}, \ref{lem:HighMeanReversion} and \ref{lem:L1LargestNodeEstimate}, to get
\begin{align}
\int_0^T \abs{K(t) - K^N(t)} \sdd t &\le O(1)\Bigg(\frac{(Ta)^{2m+1/2-H}}{m^2(2m)!}\nonumber\\
&\qquad + \sum_{i=1}^{n-1} \frac{r_i}{(r_i - r_i^{-1})^2}\left(L_i-1\right)^{-H-1/2}\eps_i^{-H-1/2}c^{-2\beta\sqrt{(H+1/2)N}}\nonumber\\
&\qquad + \exp\left(-\frac{1}{\beta}\sqrt{(H+1/2)N}\eta_1(c, \beta)\right)\Bigg),\label{eqn:ThisBloodyBound}
\end{align}
where $a\le 2(m+1)T^{-1}$, $L_i=\frac{\xi_{i+1}}{\xi_i}$, $M_i=\frac{L_i+1}{L_i-1}$, $\mu\coloneqq \frac{2m}{H + 1/2}$, $\eps_i = \frac{\sqrt{\mu^2(M_i^2-1)+1} - M_i}{\mu^2-1},$ and $r_i = M_i-\eps_i + \sqrt{(M_i-\eps_i)^2-1},$ and where we assume that $\mu \ge \frac{1}{M_i-1}\lor \frac{3M_i}{2\sqrt{M_i^2-1}}.$

The first summand in this bound converges factorially fast (for $a$ independent of $N$), and will hence be of no further concern. On the other hand, we want that the second and the third summand converge at the same speed. Hence, we wish to solve the optimization problem $$\arginf_{c>1, \beta > 0} c^{-2\beta\sqrt{(H+1/2)N}},\ \textup{subject to}\  c^{-2\beta\sqrt{(H+1/2)N}} = \exp\left(-\frac{1}{\beta}\sqrt{(H+1/2)N}\eta_1(c, \beta)\right).$$ This is equivalent to
\begin{equation}\label{eqn:DimensionFreeOptimizationProblem}
\arginf_{c>1, \beta > 0} -\beta\log c,\quad\textup{subject to}\quad 2\beta^2\log c = \eta_1(c, \beta).
\end{equation}
The solution can be numerically computed to be $$\beta_0 = 0.92993273,\qquad c_0 = 3.60585021.$$ For these values, we have $$c_0^{-2\beta_0} = e^{-2.3853845446404978}.$$

However, we still need to ensure that $T_0 = T_0(c,\beta) > 1$. In fact, for the choice $(c, \beta) = (c_0,\beta_0)$, it is not hard to see that this condition will likely be violated. Indeed, recall that the explosion happens exactly when $c = \exp\left(\frac{\eta_{T_0}}{2\beta^2}\right),$ or equivalently, when $2\beta^2\log c = \eta_{T_0}$. But this is precisely the constraint in the optimization problem \eqref{eqn:DimensionFreeOptimizationProblem} with $T_0=1$. Therefore, Lemma \ref{lem:L1LargestNodeEstimate} is in fact not applicable for $(c, \beta) = (c_0,\beta_0)$. This can be remedied by choosing $c$ or $\beta$ slightly larger. Indeed, note that the vector field in \eqref{eqn:EtaODE} is strictly decreasing in both $c$ and $\beta$, and hence, $T_0(c, \beta)$ is strictly increasing. Therefore, if $c\ge c_0$, $\beta\ge \beta_0$, and $c > c_0$ or $\beta > \beta_0$, then $T_0(c, \beta) > 1$, and our application of Lemma \ref{lem:L1LargestNodeEstimate} was valid.

What does this choice of $c$ and $\beta$ mean for the convergence rate of the kernel error? Recall that $(c_0, \beta_0)$ was the solution to the optimization problem where we wanted to optimize over the convergence rate subject to the constraint that the second and third summand in \eqref{eqn:ThisBloodyBound} converge at the same speed. If we now increase $c$ or $\beta$, then the second term will converge faster, while the third term will converge more slowly. We can further quantify this by stating that $$c^{-2\beta\sqrt{(H+1/2)N}} \le O(1)e^{-\gamma\sqrt{N}}\exp\left(-\frac{1}{\beta}\sqrt{(H+1/2)N}\eta_1(c, \beta)\right)$$ for some small $\gamma > 0$ depending on $c, \beta$. 

We have now almost shown that the second term in \eqref{eqn:ThisBloodyBound} converges faster than the third term, and that we can hence (asymptotically) ignore it. However, we still need to treat the constants in front, to ensure that they do not cause additional troubles as $N\to\infty$. Indeed, one can show that $$\sum_{i=1}^{n-1} \frac{r_i}{(r_i + r_i^{-1})^2} (L_i - 1)^{-H-1/2} \eps_i^{-H-1/2} = O\left(\sum_{i=1}^{n-1} m^{H+1/2}\right) = O\left(N^{3/4+H/2}\right).$$

Altogether, this shows the bound
\begin{align*}
\int_0^T \abs{K(t) - K^N(t)} \sdd t &\le O(1)\exp\left(-\frac{1}{\beta}\sqrt{(H+1/2)N}\eta_1(c, \beta)\right).
\end{align*}

Finally, we note that only the error on the interval $[\xi_n, \infty)$ remains in the above bound, the errors on the other intervals having converged with higher order. Therefore, we also have
\begin{align*}
\int_0^T \abs{K(t) - K^N(t)} \sdd t &\lesssim \int_0^T \int_{\xi_n}^\infty c_H e^{-tx} x^{-H-1/2} \sdd x \sdd t \le \frac{c_H}{1/2 + H}\xi_n^{-H-1/2}.\qedhere
\end{align*}
\end{proof}

\section{Numerics}\label{sec:Numerics}

Throughout this section, we compare the approximations $K^N$ of $K$ given in Section \ref{sec:ErrorRates} with previously proposed methods. We demonstrate the efficiency of the geometric and non-geometric Gaussian rules for the pricing of various options under the rough Heston model, but we also indicate possible ways how these theoretically derived rules can be further improved numerically. With that in mind, below is a list of the quadrature rules we compare for approximating $K$ by $K^N$.

\begin{enumerate}
    \item ``GG'' (Geometric Gaussian quadrature): This is the quadrature rule from Theorem \ref{thm:TheL1TheoremGeometric} with $a = 4/T$ and $b = 1/(2T)$.
    \item `'NGG'' (Non-Geometric Gaussian quadrature): This is the quadrature rule from Theorem \ref{thm:TheL1TheoremNonGeometric} with $c=c_0$, $\beta=\beta_0$, and $a = 3/T$.
    \item ``OL1'' (Optimal $L^1$-error): This is the quadrature rule we get by optimizing the $L^1$-error between $K^N$ and $K$. We use the intersection algorithm outlined in Appendix \ref{sec:L1NormComputation} for computing the $L^1$-errors.
    \item ``OL2'' (Optimal $L^2$-error): This is the quadrature rule we get by optimizing the $L^2$-error between $K^N$ and $K$. We use \cite[Proposition 2.11]{bayer2023markovian} for computing $L^2$-errors.
    \item ``BL2'' (Bounded $L^2$-error): This quadrature rule essentially minimizes the $L^2$-error between $K^N$ and $K$, but penalizes large nodes $x$. A more precise description of the algorithm underlying this quadrature rule can be found in Appendix \ref{sec:BL2}.
    \item ``AE'' (Abi Jaber, El Euch): This is the quadrature rule suggested in \cite[Section 4.2]{abi2019multifactor}.
    \item ``AK'' (Alfonsi, Kebaier): This is the quadrature rule suggested in \cite[Table 6, left column]{alfonsi2021approximation}, which seems to be the best (completely monotone) quadrature rule in \cite{alfonsi2021approximation}.
\end{enumerate}

We note that the algorithms OL2, AK, and BL2 focus on minimizing the $L^2$-error, while the algorithms GG, NGG, OL1, AE focus on the $L^1$-error. Since the $L^2$-norm of $K$ is infinite for $H\le 0$, the $L^2$-algorithms can only be applied for positive $H$. In contrast, the $L^1$-algorithms work for all $H > -1/2$. Perhaps surprisingly, BL2 still works for negative $H$ despite minimizing the $L^2$-norm, since we penalize large mean-reversions, which corresponds to a smoothing of the singularity of the kernel, see also Appendix \ref{sec:BL2}.

Throughout this section we compute option prices under rough Heston using Fourier inversion, since the characteristic function $\varphi$ of the log-stock price $\log(S_T)$ in \eqref{eqn:RHestonStock} is known in semi-closed form see e.g. \cite{abi2019affine}. The same is true for the characteristic function $\varphi^N$ of the Markovian approximation $\log(S^N_T)$ obtained by replacing $K$ by $K^N$, see e.g. \cite{abi2019markovian}. For the computation of $\varphi$, we need to solve a fractional Riccati equation. This is done using the Adam's scheme, see e.g. \cite{diethelm2004detailed}. To obtain $\varphi^N$, we need to solve an ordinary ($N$-dimensional) Riccati equation. This is done using a predictor-corrector scheme, as explained in \cite{abi2019multifactor, bayer2023markovian}. We remark that generally speaking, Fourier inversion of the Markovian approximation tends to be faster, as we only have to solve an ordinary (higher-dimensional) Riccati equation, which has a cost of $O(n)$, where $n$ is the number of time steps. In contrast, solving the fractional Riccati equation, which is a Volterra integral equation, has a cost of $O(n^2)$. This difference is especially pronounced if we need to compute prices with high accuracy, small maturity, or tiny (especially negative) Hurst parameter $H$.

This numerical section is structured as follows. First, we compare the computational cost of computing all these quadrature rules, as well as their largest nodes in Section \ref{sec:ListOfQuadratureRules}. Next, in Section \ref{sec:ComparingOL1WithOL2} we verify that the weak error can indeed be bounded by the $L^1$-error in the kernel as shown in Section \ref{sec:MainTheorySection}, rather than the $L^2$-error, by conducting some specific comparisons of the algorithms OL1 and OL2. In Section \ref{sec:L1ErrorInTheKernelNumerics} we numerically verify the convergence rates of the $L^1$-errors between $K^N$ and $K$ that we proved in Section \ref{sec:ErrorRates}, and compare these rates with the other algorithms. Finally, in Section \ref{sec:WeakConvergenceNumerics}, we compute various option prices and verify stylized facts and properties of the rough Heston model and its Markovian approximations. This includes implied volatility smiles in Section \ref{sec:ImpliedVolatilitySmiles}, implied volatility surfaces in Section \ref{sec:ImpliedVolatilitySurfaces}, implied volatility skews in Section \ref{sec:ImpliedVolatilitySkews}, and finally, prices of digital European call options in Section \ref{sec:DigitalOptions}, where we want to illustrate that the Markovian approximations still converge quickly despite the lack of regularity in the payoff function that is required in Section \ref{sec:MainTheorySection}.

\subsection{Computational Times and Largest Nodes}\label{sec:ListOfQuadratureRules}

Before discussing convergence rates below, let us compare two important quantities associated with these quadrature rules: computational time and the largest nodes. 

First, the time it took to compute the quadrature rules for the various algorithms was largely independent of the Hurst parameter $H$. For the quadrature rules GG, NGG, OL2, AE and AK, the computation is basically instant, taking at most a few ms. BL2 is still quite fast, with computational times ranging from a few ms (for $N=1$) to a few seconds (for $N=10$). Finally, OL1 is the most expensive, where computing $K^N$ for $N=10$ already takes about 15 min.

The largest node gives us some insight into how well the singularity of the kernel is captured. If a quadrature rule with small nodes performs very well, this may indicate that the singularity (and hence the roughness of the volatility) is not very important for that particular problem, and vice versa. Also, quadrature rules with small nodes are typically preferable over ones with large nodes (if they perform equally well), as larger nodes might lead to more problems of numerical stability and higher computational times. It will turn out that this is not a particularly severe problem for European options as we consider them, but it is much more relevant if one decides to simulate paths using the Markovian approximation, see also \cite{bayer2023++++simulation}.


The largest nodes are given in Table \ref{tab:LargestNodes}. We see that quadrature rules that aim to minimize the $L^2$-error (i.e. OL2, AK) reach very large nodes for small $H$. Compared to that, the largest nodes seem to remain bounded for $H\to 0$ for quadrature rules that aim to minimize the $L^1$-error (i.e. GG, NGG, OL1, AE). Interestingly, BL2 also has bounded nodes for small $H$, with largest nodes of similar sizes as for OL1.

\begin{table}
    \centering
    \resizebox{\textwidth}{!}{\begin{tabular}{c|c|c|c|c|c|c|c|c|c|c|c|c|c|c|c|c|c|c|c}
         & \multicolumn{5}{c|}{$H=-0.1$} & \multicolumn{7}{c|}{$H=0.001$} & \multicolumn{7}{c}{$H=0.1$}\\ \hline
        N &  GG  &  NGG  & OL1         & BL2  &  AE
          &  GG  &  NGG  & OL1  & OL2  & BL2  &  AE   &  AK 
          &  GG  &  NGG  & OL1  & OL2  & BL2  &  AE   &  AK\\ \hline
        1 & 0.18 &  0.05 & 0.24        & 0.24 & -0.46
          & 0.12 &  0.00 & 0.14 & 0.87 & 0.87 & -0.53 & - 
          & 0.06 & -0.07 & 0.02 & 0.34 & 0.34 & -0.62 & -\\
        2 & 1.17 &  0.95 & 1.40        & 1.35 &  0.08
          & 1.02 &  0.95 & 1.28 & 6.83 & 1.43 &  0.05 & 3.32 
          & 0.92 &  0.95 & 1.18 & 2.56 & 0.94 &  0.03 & 1.48\\
        3 & 1.59 &  1.70 & 2.24        & 2.08 &  0.27
          & 1.39 &  1.70 & 2.07 & 12.2 & 1.83 &  0.25 & 3.32
          & 1.25 &  1.70 & 1.93 & 4.28 & 1.67 &  0.22 & 1.48\\
        4 & 1.94 &  2.49 & 2.98        & 2.80 &  0.39
          & 1.70 &  2.49 & 2.75 & 17.3 & 2.43 &  0.37 & 7.10 
          & 1.58 &  2.49 & 2.57 & 5.77 & 2.24 &  0.34 & 3.13\\
        5 & 2.24 &  3.32 & 3.66        & 3.45 &  0.48
          & 2.02 &  3.32 & 3.35 & 36.7 & 3.02 &  0.46 & 7.10
          & 1.81 &  1.09 & 3.14 & 7.11 & 2.81 &  0.43 & 3.13\\
        6 & 2.57 &  4.16 & 4.24        & 4.10 &  0.55
          & 2.26 &  1.86 & 3.91 & 44.0 & 3.52 &  0.53 & 11.5 
          & 2.04 &  1.86 & 3.65 & 8.34 & 3.32 &  0.50 & 4.40\\
        7 & 2.82 &  2.66 & 4.79        & 4.67 &  0.61
          & 2.48 &  2.66 & 4.42 & 31.1 & 4.15 &  0.59 & 11.5 
          & 2.24 &  2.66 & 4.28 & 9.48 & 3.77 &  0.56 & 4.40\\
        8 & 3.04 &  2.66 & 5.32        & 5.17 &  0.66
          & 2.68 &  2.66 & 4.90 & 35.5 & 4.54 &  0.64 & 15.5 
          & 2.42 &  2.66 & 4.58 & 10.5 & 4.52 &  0.61 & 5.45\\
        9 & 3.24 &  2.66 & 6.03        & 5.62 &  0.70
          & 2.86 &  2.66 & 5.35 & 39.7 & 4.90 &  0.69 & 15.5
          & 2.58 &  2.66 & 5.20 & 11.6 & 4.90 &  0.66 & 5.45\\
        10 & 3.44 &  3.49 & 6.65        & 6.03 &  0.74
          & 3.04 &  3.49 & 6.02 & 43.9 & 5.25 &  0.72 & 20.0
          & 2.75 &  3.49 & 5.74 & 12.5 & 5.27 &  0.70 & 6.35
    \end{tabular}}
    \caption{Largest nodes of the quadrature rules with maturity $T=1$. Since the largest nodes increase considerably with $N$, we give the logarithm with base 10 of the largest node (i.e. a value of 3 in the table corresponds to a largest node of $10^3 = 1000$).}
    \label{tab:LargestNodes}
\end{table}

\subsection{Comparing OL1 with OL2}\label{sec:ComparingOL1WithOL2}

We have shown in Section \ref{sec:MainTheorySection} that the weak error of the Markovian approximations of the rough Heston model can be bounded by the $L^1$-error in the kernel. This suggests, in contrast to e.g. the work in \cite{bayer2023markovian} that if we want to price options using the Markovian approximations, we may want to focus on minimizing the $L^1$-error in the kernel, instead of the $L^2$-error, i.e. should use the algorithm OL1 rather than OL2. The difference becomes especially pronounced for very small $H$, as in that case, $K(t) = t^{H-1/2}$ is barely square-integrable. Furthermore, weak formulations of rough Heston are possible in the regime of $H \in (-1/2, 0]$, see e.g. \cite{abi2021weak, abi2023reconciling, jusselin2020no}.

In the following figures we aim to show that the error of approximating European call option implied volatility smiles indeed decreases rapidly as the number $N$ of nodes in the kernel $K^N$ increases, if those nodes are chosen such that the $L^1$-error in the kernel is minimized. We compare these results with using the kernel $K^N$ resulting from minimization of the $L^2$-error.

Throughout, we consider European call options with the same parameters as in \cite[Section 4.2]{abi2019multifactor}, i.e. $$S_0 = 1,\ V_0=0.02,\ \theta=0.02,\ \lambda=0.3,\ \nu=0.3,\ \rho=-0.7,\ T=1.$$ We choose 201 values for the log-moneyness linearly spaced in the interval $[-1.5, 0.75]$. The implied volatility smiles of the true (i.e. non-approximated) rough Heston model are computed using Fourier inversion with the Adams scheme, see e.g. \cite{diethelm2004detailed}. The implied volatility smiles for the Markovian approximations are computed using a similar predictor-corrector scheme, see e.g. \cite{abi2019multifactor, bayer2023markovian}. All implied volatility smiles were computed with a maximal relative error of $1.5\cdot 10^{-5}$.

The left plot of Figure \ref{fig:ImpliedVolatilitySmiles} compares the implied volatility smiles for the European call option with a tiny $H=0.001$, and using $N=4$ nodes. The Markovian approximation using the $L^1$-optimized kernel (blue) is directly over the true rough Heston smile (black). The approximation using the $L^2$-optimized kernel (red) is still a bit off. Moreover, it will converge only very slowly to the black line, as is illustrated in the right plot of Figure \ref{fig:ImpliedVolatilitySmiles}. Here, we see how the errors in the implied volatility smiles decrease very rapidly for the $L^1$-approximation as $N$ increases, even for tiny (or negative) $H$. In contrast, while the $L^2$-approximation still achieves good results for $H$ large enough, we see that the convergence rate is very slow for tiny $H$. The precise values of the errors are given in Table \ref{tab:ImpliedVolatilityErrors}.

\begin{figure}
\centering
\begin{minipage}{.49\textwidth}
  \centering
  \includegraphics[width=\linewidth]{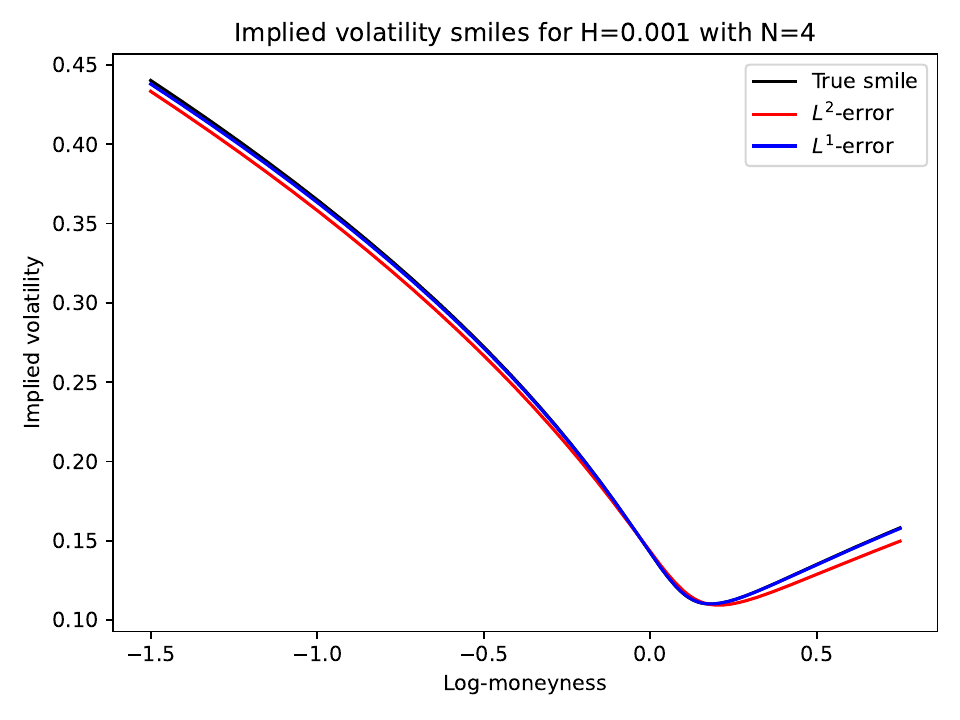}
\end{minipage}%
\begin{minipage}{.49\textwidth}
  \centering
  \includegraphics[width=\linewidth]{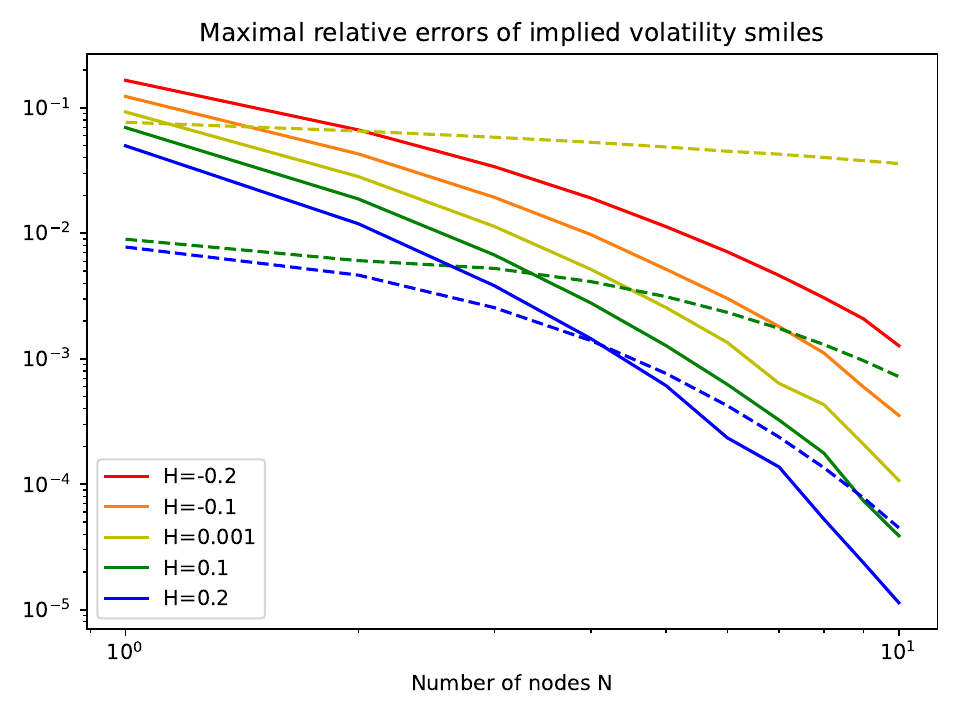}
\end{minipage}
\caption{Left: Implied volatility smiles for European call options with tiny $H=0.001$. The truth (black) is barely visible under the $L^1$-approximation (blue). The $L^2$-approximation (red) is still a bit off. Right: Maximal relative errors of the implied volatility smiles for different $H$ and $N$, using kernel approximations $K^N$ minimizing the $L^1$-error (solid) or the $L^2$-error (dashed).}
\label{fig:ImpliedVolatilitySmiles}
\end{figure}

\begin{table}[!htbp]
\centering
\resizebox{\textwidth}{!}{\begin{tabular}{c|c|c|c|c|c|c|c|c}
 & \multicolumn{5}{c|}{$K^N$ minimizing $\|K - K^N\|_{L^1}$} & \multicolumn{3}{c}{$K^N$ minimizing $\|K - K^N\|_{L^2}$}\\ \hline
$N$ & $H=-0.2$ & $H=-0.1$ & $H=0.001$ & $H=0.1$ & $H=0.2$ & $H=0.001$ & $H=0.1$ & $H=0.2$\\ \hline
1 & 16.58 & 12.33 & 9.281 & 6.961 & 4.986 & 7.678 & 0.897 & 0.778\\
2 & 6.675 & 4.297 & 2.833 & 1.879 & 1.192 & 6.541 & 0.606 & 0.464\\
3 & 3.396 & 1.935 & 1.133 & 0.672 & 0.382 & 5.835 & 0.525 & 0.255\\
4 & 1.908 & 0.975 & 0.515 & 0.278 & 0.144 & 5.317 & 0.412 & 0.139\\
5 & 1.130 & 0.517 & 0.256 & 0.127 & 0.061 & 4.874 & 0.313 & 0.076\\
6 & 0.713 & 0.303 & 0.135 & 0.062 & 0.023 & 4.513 & 0.234 & 0.042\\
7 & 0.462 & 0.181 & 0.064 & 0.032 & 0.014 & 4.276 & 0.175 & 0.024\\
8 & 0.306 & 0.111 & 0.043 & 0.018 & 0.005 & 4.022 & 0.130 & 0.014\\
9 & 0.208 & 0.060 & 0.021 & 0.007 & 0.002 & 3.798 & 0.097 & 0.008\\
10 & 0.127 & 0.035 & 0.011 & 0.004 & 0.001 & 3.596 & 0.072 & 0.004
\end{tabular}}
\caption{Maximal relative errors in $\%$ of the European call smiles for different choices of $N$ and $H$. The values in this table have an error of at most $0.003$.}
\label{tab:ImpliedVolatilityErrors}
\end{table}

Finally, we want to demonstrate that the error of the implied volatility smiles indeed converges at the same rate as the $L^1$-error of the kernels. To this end, we fix the Hurst parameter $H=0.1$, and compute the $L^1$-errors and the $L^2$-errors of the kernels, as well as the maximal errors of the volatility smiles for kernels $K^N$ minimizing the $L^1$-error $\|K - K^N\|_{L^1}$ or the $L^2$-error $\|K - K^N\|_{L^2}$. The results are illustrated in Figure \ref{fig:ConvergenceRates}. Note that the errors of the implied volatility smiles decrease roughly at the same speed as the $L^1$-errors of the kernels, consistent with our theoretical results.

\begin{figure}
\centering
\begin{minipage}{.49\textwidth}
  \centering
  \includegraphics[width=\linewidth]{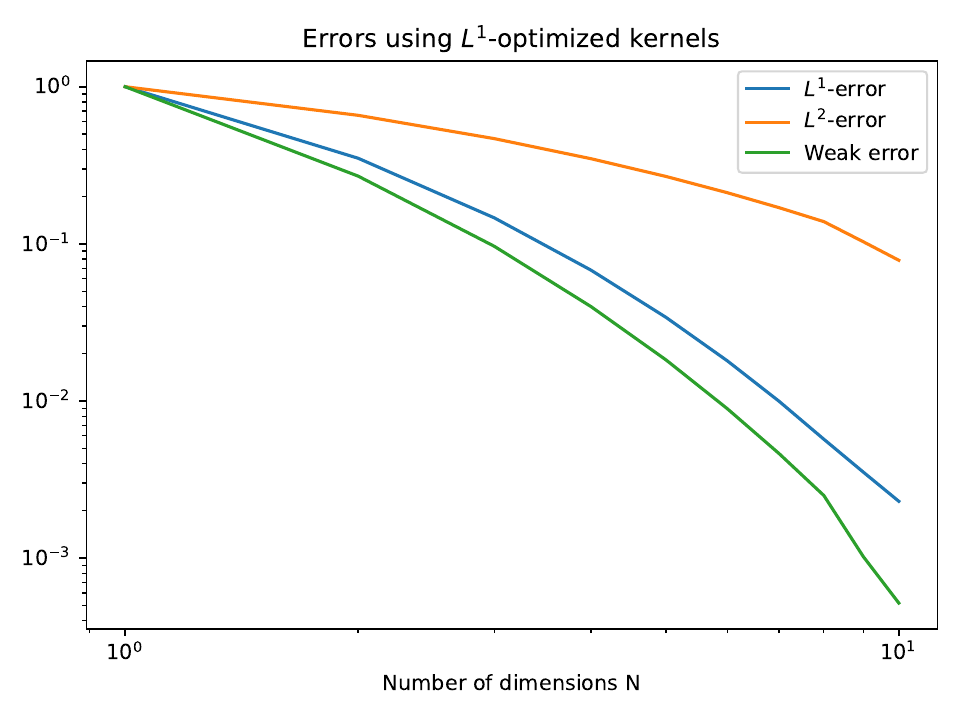}
\end{minipage}%
\begin{minipage}{.49\textwidth}
  \centering
  \includegraphics[width=\linewidth]{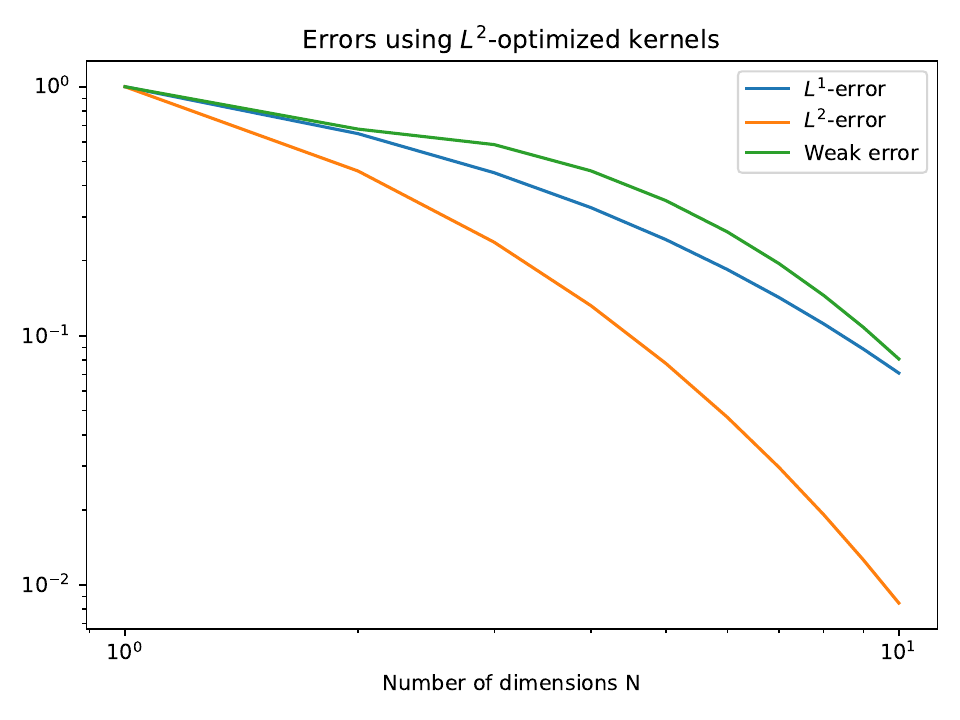}
\end{minipage}
\caption{$L^1$- and $L^2$-errors of the kernels, and errors of implied volatility smiles. Due to the different magnitudes of the errors, they were rescaled to be equal to $1$ for $N=1$. In the left figure we use kernels $K^N$ minimizing the $L^1$-error $\|K - K^N\|_{L^1}$, in the right figure kernels $K^N$ minimizing the $L^2$-error $\|K - K^N\|_{L^2}$.}
\label{fig:ConvergenceRates}
\end{figure}

\subsection{$L^1$-error in the kernel}\label{sec:L1ErrorInTheKernelNumerics}

In this section, we want to numerically verify our that the $L^1$-errors of GG and NGG converge super-polynomially, and that their convergence rate does not become arbitrarily bad for small $H$. 
 
In the left plot of Figure \ref{fig:L1ErrorsGGNGG} we see the relative $L^1$-errors for GG and NGG for $H=0.1$ and $T=1$. We clearly see that both exhibit super-polynomial convergence. Furthermore, this figure includes lines representing the convergence rates in Theorems \ref{thm:TheL1TheoremGeometric} and \ref{thm:TheL1TheoremNonGeometric} (with suitably fitted leading constants), showing that the errors converge at the rate that we proved there. The right plot of Figure \ref{fig:L1ErrorsGGNGG} shows that the convergence rates do not become arbitrarily bad for tiny $H>0$, and only significantly worsens once we approach $H\approx -1/2$. We remark that the error lines are wiggly due to the discrete choice of the level $m$ of the Gaussian quadrature rule.

Finally, for small $N$, we compare GG with OL1. In Figure \ref{fig:L1ErrorsVaryingHWithOL1} we show that there still is significant room for improvement. The figure only goes up to $N=10$ because of the high computational cost of computing $L^1$-errors, cf. Section \ref{sec:ListOfQuadratureRules}.

\begin{figure}
    \centering
    \begin{minipage}{.5\textwidth}
        \centering
        \includegraphics[width=\linewidth]{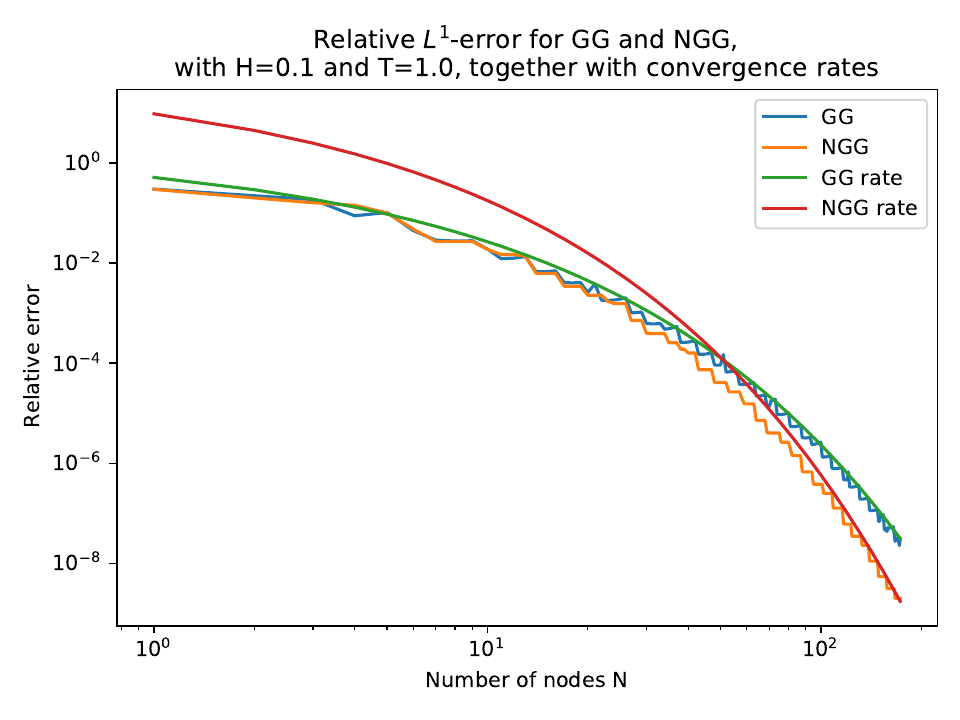}
    \end{minipage}%
    \begin{minipage}{.5\textwidth}
        \centering
        \includegraphics[width=\linewidth]{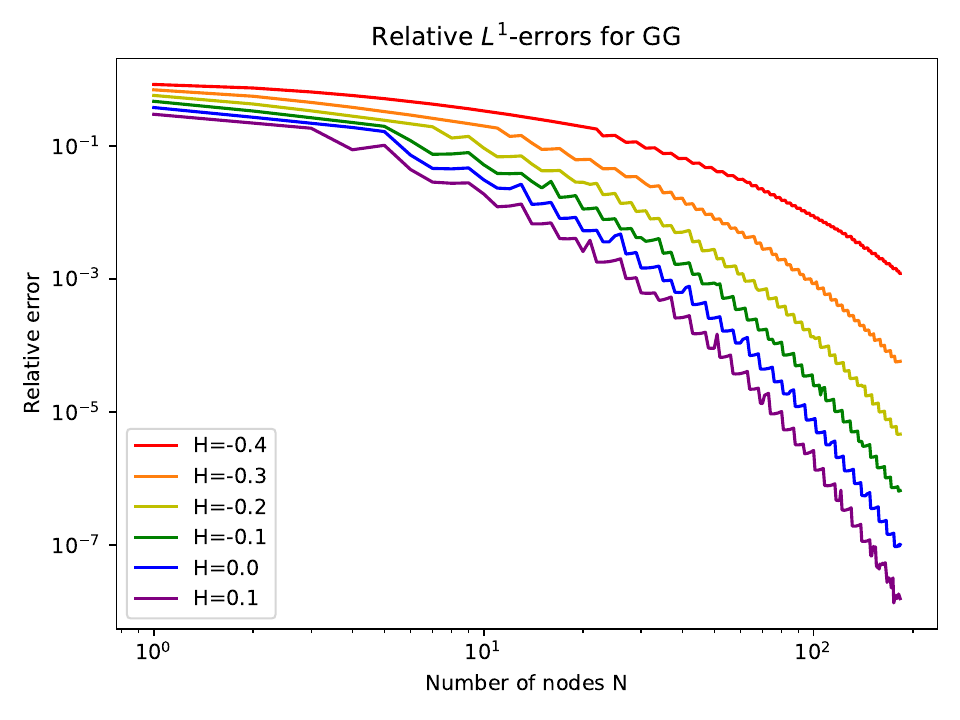}
    \end{minipage}
    \caption{Left: Relative $L^1$-errors of GG and NGG for $H=0.1$ and $T=1$. The green line is the function $2\left(\sqrt{2} + 1\right)^{-2\sqrt{(H+1/2)N}}$ and the red line is the function $60 e^{-2.38\sqrt{(H+1/2)N}}$, corresponding to the error bounds (with fitted leading constants) in Theorems \ref{thm:TheL1TheoremGeometric} and \ref{thm:TheL1TheoremNonGeometric}, respectively. Right: Relative $L^1$-errors for GG for varying $H$ and $T=1$.}
    \label{fig:L1ErrorsGGNGG}
\end{figure}

\begin{figure}
    \centering
    \includegraphics[scale=0.6]{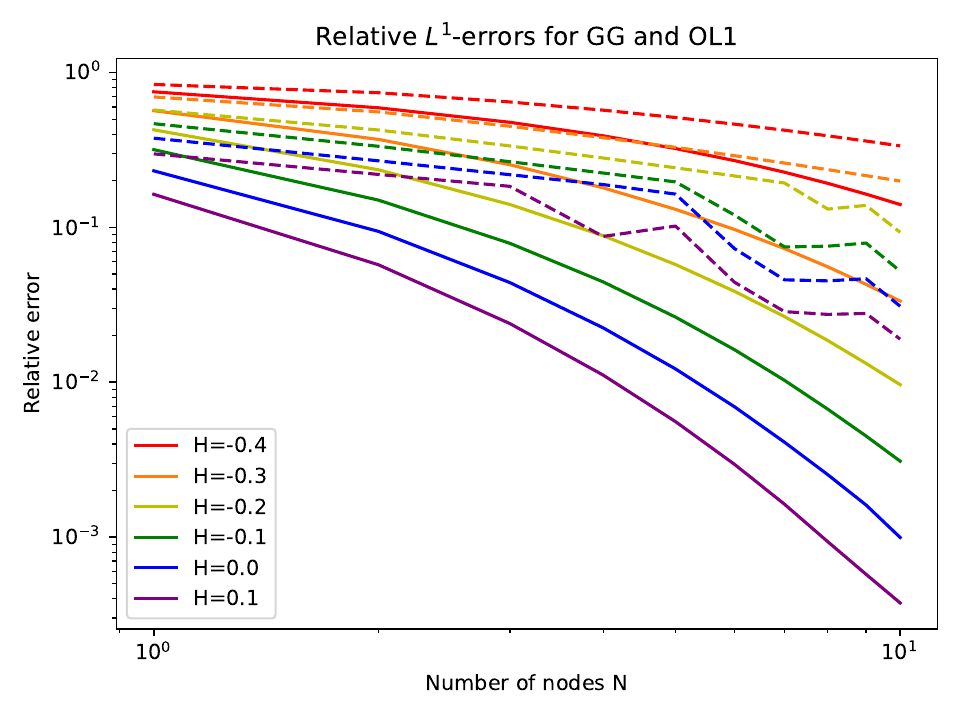}
    \caption{Relative $L^1$-errors for GG (dashed lines) and OL1 (solid lines) for $T=1$ and varying $H$.}
    \label{fig:L1ErrorsVaryingHWithOL1}
\end{figure}

\subsection{Weak convergence of the stock price}\label{sec:WeakConvergenceNumerics}

In this section we numerically investigate the weak convergence of the Markovian approximations of the rough Heston model. Throughout, we use the same parameters as in \cite[Section 4.2]{abi2019multifactor}, i.e. $$\lambda = 0.3,\ \theta = 0.3\cdot 0.02,\ \rho=-0.7,\ \nu=0.3,\ V_0=0.02,\ S_0=1.$$ 

As mentioned before, we carry out all these computations using Fourier inversion. This introduces discretization errors for both the approximation of the Fourier inversion integral, and the numerical solution of the (fractional) Riccati equations needed for the characteristic function. Since we are not interested in the behaviour of these discretization errors, we will always try to keep them very small. Our implementation of the Fourier inversion (both the Adams scheme and the Markovian approximations) takes an error tolerance $\tol$ and returns option prices or implied volatilities that were computed with a combined relative error of the Fourier inversion and the Riccati equations that is less than $\tol$. The specific choice of $\tol$ will vary depending on the option we consider, and will always be stated explicitly. 

\subsubsection{Implied volatility smiles for European call options}\label{sec:ImpliedVolatilitySmiles}

First, let us consider call options with 301 different values of log-moneyness linearly spaced in the interval $[-1, 0.5]\cdot\sqrt{T}$, where we set the maturity $T=0.01$, and $H = -0.1, 0.001, 0.1$. We compute these smiles using Fourier inversion with a relative accuracy of at least $10^{-5} = 0.001\%$ for $H=0.001, 0.1$, and $10^{-4} = 0.010\%$ for $H=-0.1$ (due to the higher computational cost associated with smaller $H$). 

Indeed, especially for high-accuracy computations with tiny $T$ or $H$, the computational times can become very large. However, the Adams scheme is more affected by this than the Markovian approximations. For example, the computation of the smile for $H=-0.1$ with relative accuracy $10^{-4}$ took 6183 seconds using the Adams scheme, while the Markovian approximations took $40$ -- $270$ seconds, depending on $N$ and the quadrature rule used. This is because in both cases, we need to solve a Riccati equation to compute the characteristic function, and this is a fractional Riccati equation for the Adams scheme, while it is an ordinary ($N$-dimensional) Riccati equation for the Markovian approximations. The numerical solution of a fractional Riccati equation has a computational cost of $O(n^2)$ for $n$ time steps, compared to $O(n)$ for ordinary Riccati equations. This difference becomes especially pronounced for large $n$, which is needed for high accuracy, and when $T$ and $H$ are small.

In Figure \ref{fig:IVSmileGGBL2HMinus01} we see the approximations using GG and BL2 quadrature rules for the implied volatility smile with $H=-0.1$. We see that already a very small number of nodes is sufficient to achieve a high accuracy. Indeed, for BL2 already $N=2$ is sufficiently accurate that we cannot see the any difference to the exact, non-approximated smile anymore without zooming in.

In Table \ref{tab:ErrorsIVSmiles} we see the maximal relative errors of the implied volatility smiles in $\%$. As expected, quadrature rules relying on the $L^2$-error (i.e. OL2, AK) perform very badly for tiny $H$ (and are not well-defined for negative $H$), while quadrature rules relying on the $L^1$-error (i.e. GG, NGG, OL1, AE) still work well even for $H=-0.1$. The exception here is BL2, which, despite optimizing the (penalized) $L^2$-error outperforms every other method by multiple magnitudes, even for tiny or negative $H$. We study this phenomenon a bit further in Appendix \ref{sec:BL2}. Aside from that, both GG and NGG seem to clearly outperform AE, with GG yielding better results. We guess that for sufficiently large $N$, NGG will outperform GG, but such large $N$ are likely not necessary for usual applications. Also, there is still a lot of room for improvement, as OL1 (and of course BL2) show.

In Figure \ref{fig:ErrorsIVSmiles} we compare all the methods. In the left figure, we see that algorithms using the $L^2$-error show almost no convergence, while BL2 significantly outperforms all other algorithms. In the right figure, we see that the error does not deteriorate too much for GG, OL1, and BL2 as $H$ becomes small or negative, while it does for OL2.

\begin{figure}
    \centering
    \begin{minipage}{.5\textwidth}
        \centering
        \includegraphics[width=\linewidth]{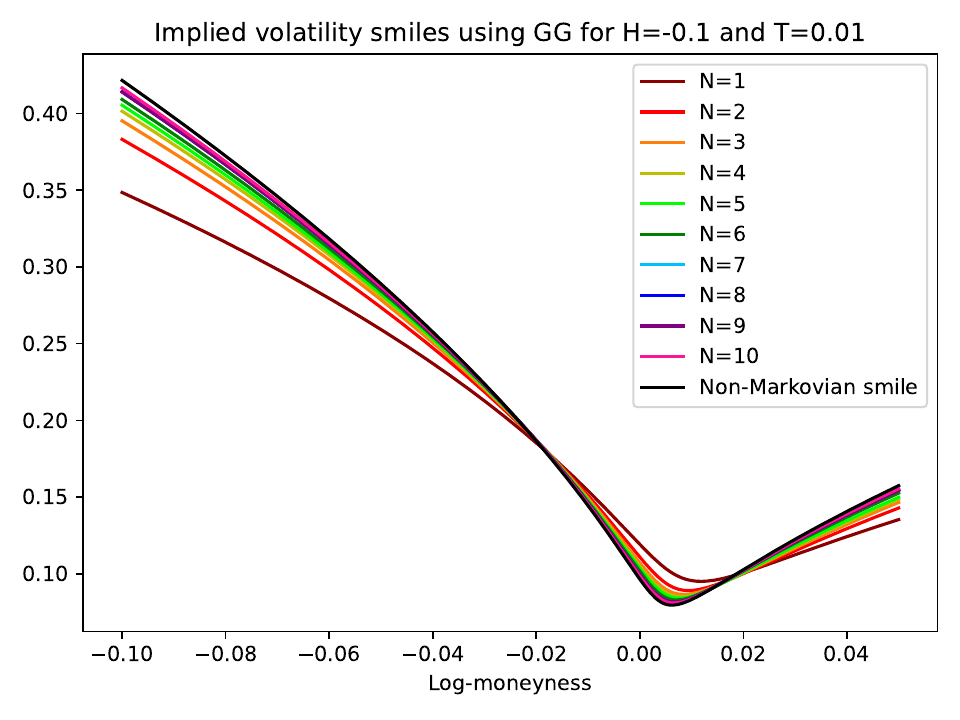}
    \end{minipage}%
    \begin{minipage}{.5\textwidth}
        \centering
        \includegraphics[width=\linewidth]{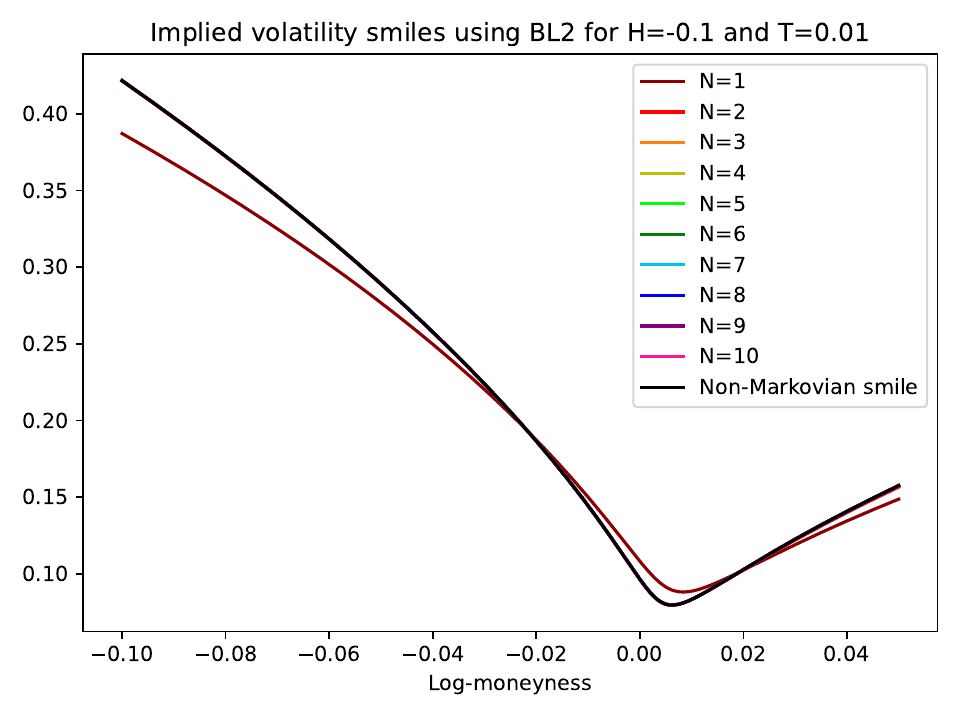}
    \end{minipage}
    \caption{Implied volatility smiles for $T=0.01$ and $H=-0.1$, together with the Markovian approximations using GG (left) and BL2 (right).}
    \label{fig:IVSmileGGBL2HMinus01}
\end{figure}

\begin{table}
\centering
\resizebox{\textwidth}{!}{\begin{tabular}{c|c|c|c|c|c|c|c|c|c|c|c|c|c|c|c|c|c|c|c}
     & \multicolumn{5}{c|}{$H=-0.1$} & \multicolumn{7}{c|}{$H=0.001$} & \multicolumn{7}{c}{$H=0.1$} \\ \hline
    N &  GG   &  NGG  &  OL1         &  BL2  &  AE
      &  GG   &  NGG  &  OL1 &  OL2  &  BL2  &  AE   & AK
      &  GG   &  NGG  &  OL1 &  OL2  &  BL2  &  AE   & AK\\ \hline
    1 & 29.93 & 31.86 & 14.93        & 14.93 & 49.44
      & 18.29 & 19.38 & 8.538 & 8.315 & 8.315 & 31.40 & -
      & 13.43 & 14.77 & 7.348 & 0.894 & 0.894 & 24.36 & -\\
    2 & 18.35 & 21.82 & 5.526         & 0.591 & 39.77
      & 11.55 & 14.16 & 2.818 & 7.133 & 0.223 & 25.07 & 8.138 
      & 8.288 & 10.67 & 2.050 & 0.650 & 0.442 & 20.07 & 1.112\\
    3 & 13.35 & 23.46 & 2.527         & 0.112 & 34.46
      & 8.704 & 15.75 & 1.147 & 6.394 & 0.101 & 21.56 & 8.138 
      & 6.017 & 12.31 & 0.738 & 0.553 & 0.066 & 17.60 & 1.112\\
    4 & 10.58 & 20.14 & 1.281         & 0.012 & 31.01
      & 7.066 & 13.16 & 0.525 & 5.848 & 0.007 & 19.28 & 24.38 
      & 4.405 & 9.812 & 0.306 & 0.427 & 0.005 & 15.92 & 2.004\\
    5 & 8.802 & 15.84 & 0.681         & 0.002 & 28.56
      & 7.599 & 10.74 & 0.262 & 5.395 & 0.001 & 17.65 & 24.38 
      & 5.058 & 6.501 & 0.140 & 0.319 & 0.001 & 14.67 & 2.004\\
    6 & 6.109 & 13.38 & 0.399         & 0.000 & 26.72
      & 3.161 & 10.83 & 0.138 & 5.051 & 0.000 & 16.41 & 29.50
      & 2.121 & 9.107 & 0.069 & 0.236 & 0.000 & 13.68 & 1.659\\
    7 & 3.707 & 11.78 & 0.238         & 0.000 & 25.27
      & 1.965 & 7.282 & 0.077 & 4.740 & 0.000 & 15.43 & 29.50 
      & 1.371 & 5.525 & 0.029 & 0.174 & 0.000 & 12.88 & 1.659\\
    8 & 3.697 & 11.78 & 0.147         & 0.001 & 24.07
      & 1.898 & 7.282 & 0.044 & 4.469 & 0.000 & 14.63 & 31.27 
      & 1.245 & 5.525 & 0.019 & 0.128 & 0.000 & 12.21 & 0.999\\
    9 & 3.844 & 11.78 & 0.079         & 0.002 & 23.07
      & 1.932 & 7.282 & 0.026 & 4.228 & 0.000 & 13.95 & 31.27 
      & 1.206 & 5.525 & 0.008 & 0.094 & 0.000 & 11.64 & 0.999\\
    10 & 2.482 & 7.052 & 0.047         & 0.003 & 22.22 
      & 1.263 & 4.476 & 0.012 & 4.010 & 0.000 & 13.38 & 30.54
      & 0.804 & 3.414 & 0.004 & 0.070 & 0.000 & 11.15 & 0.503
\end{tabular}}
\caption{Maximal relative errors in $\%$ for the Markovian approximations for implied volatility smiles of the European call option. The discretization error of these errors is at most $0.002\%$ (for $H=0.1$ and $H=0.001$) and $0.020\%$ (for $H=-0.1$).}
\label{tab:ErrorsIVSmiles}
\end{table}

\begin{figure}
    \centering
    \begin{minipage}{0.5\textwidth}
        \centering
        \includegraphics[width=\linewidth]{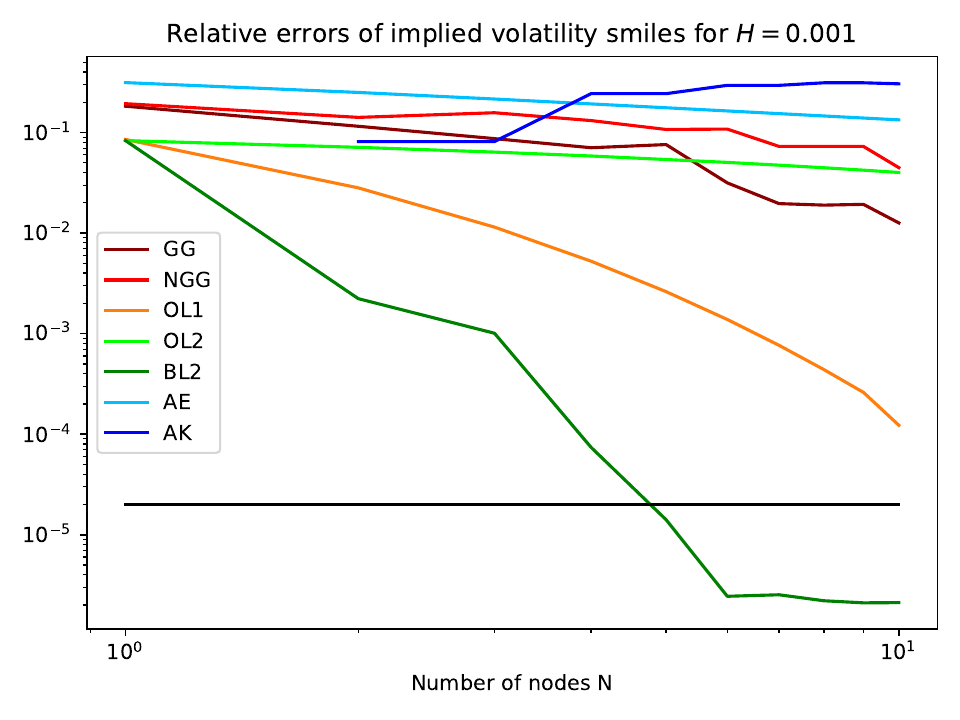}
    \end{minipage}%
    \begin{minipage}{0.5\textwidth}
        \centering
        \includegraphics[width=\linewidth]{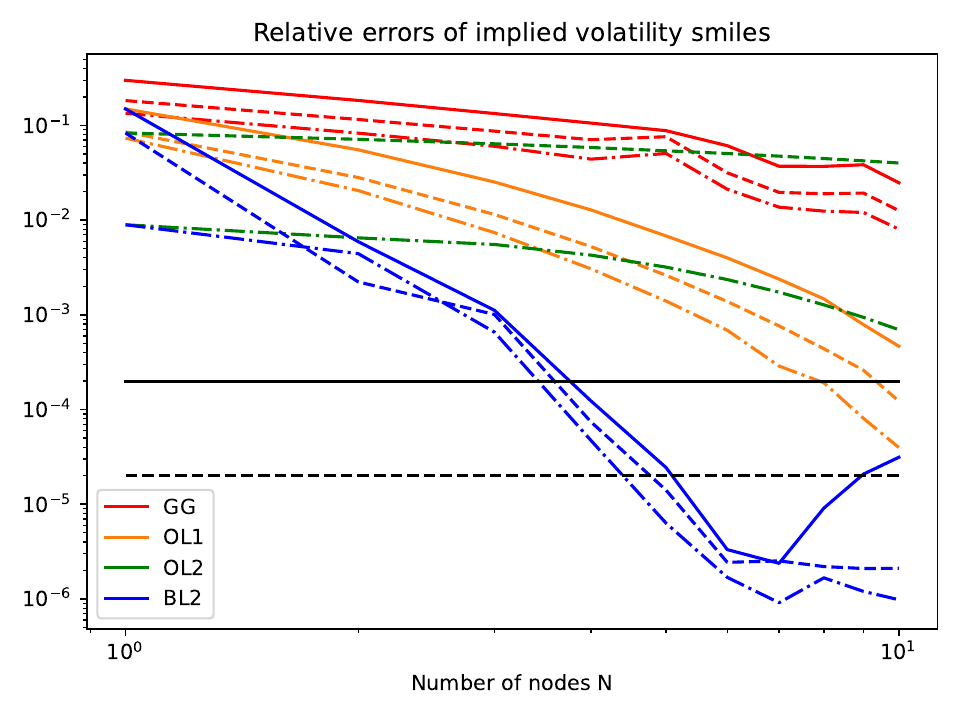}
    \end{minipage}
    \caption{Maximal relative errors of the implied volatility smiles for various quadrature rules. The figure on the left compares all 7 algorithms for $H=0.001$. The figure on the right compares GG, OL1, OL2, BL2 for $H=-0.1$ (solid lines), $H=0.001$ (dashed lines), and $H=0.1$ (dash-dotted lines). The black lines are the accuracy of the computation of the relative errors (due to errors in the Fourier inversion and the solution of the Riccati equations), $2\cdot 10^{-5}$ for $H\ge 0$ and $2\cdot 10^{-4}$ for $H < 0$.}
    \label{fig:ErrorsIVSmiles}
\end{figure}

\subsubsection{Implied volatility surfaces for European call options}\label{sec:ImpliedVolatilitySurfaces}

Next, we consider implied volatility surfaces, using 25 maturities linearly spaced in $[0.04, 1]$, where we define $T_{\textup{min}} \coloneqq 0.04$ and $T_{\textup{max}} \coloneqq 1$. For each maturity $T$ we take 301 linearly spaced values of log-moneyness in the interval $[-1, 0.5] \cdot \sqrt{T}$. 

Of course, despite computing implied volatilities for multiple maturities, we use the same quadrature rule for all of them, i.e. we do not adapt the quadrature rule to each individual maturity. But the quadrature rules above all require us to consider $K$ on some interval $[0, T_0]$. Thus, the question arises how we choose $T_0$. A natural choice would be $T_0 = T_{\max}$, the maximal maturity. This is also a very reasonable choice for large $N$, since if we chose $T_0<T_{\max}$, we could not expect a good approximation on the interval $[T_0, T_{\max}]$. However, when $N$ is very small, choosing $T_0 = T_{\max}$ yields very good results for maturities $T \approx T_{\max}$, while giving very bad approximations for maturities $T \ll T_{\max}$. After some numerical experiments, the choice
\begin{equation}\label{eqn:TVectorChoice}
T_0 = T_{\min}^{\alpha(N)} T_{\max}^{1 - \alpha(N)},
\end{equation}
where $$\alpha(1) = \frac{3}{5},\ \alpha(2) = \frac{1}{2},\ \alpha(3) = \frac{1}{3},\ \alpha(4) = \frac{1}{4},\ \alpha(5) = \frac{1}{6},\ \alpha(6) = \frac{1}{10},\ \alpha(N) = 0\ (N\ge 7),$$ seemed to yield good results. 

We computed all surfaces with a relative error tolerance of $10^{-5}$ for $H=0.001,0.1$ and $10^{-4}$ for $H=-0.1$. The computational times for the Adams scheme was 1951 seconds for $H=-0.1$, 1598 seconds for $H=0.001$, and $529.7$ seconds for $H=0.1$. In contrast, typical computational times for the Markovian approximations were between 30 and 100 seconds.

The maximal relative errors (maximum over both the strikes and maturities) are reported in Table \ref{tab:ErrorsIVSurfaces}. The results are largely similar to Table \ref{tab:ErrorsIVSmiles}, although the errors generally tend to be a bit larger since we are now jointly approximating a volatility surface rather than just a single smile. We note that overall, BL2 again performed best, achieving errors around $1\%$ for $N=3$ and errors below the discretization error of $0.002\%$ were reached for $N=9$ largely independent of $H$.

In Figure \ref{fig:ErrorsIVSurfaces} we see the maximal relative errors for $H=0.001$ for all quadrature rules on the left, and a comparison for varying $H$ on the right.

\begin{table}
\centering
\resizebox{\textwidth}{!}{\begin{tabular}{c|c|c|c|c|c|c|c|c|c|c|c|c|c|c|c|c|c|c|c}
     & \multicolumn{5}{c|}{$H=-0.1$} & \multicolumn{7}{c|}{$H=0.001$} & \multicolumn{7}{c}{$H=0.1$} \\ \hline
    N &  GG   &  NGG  &  OL1          &  BL2  &  AE
      &  GG   &  NGG  &  OL1  &  OL2  &  BL2  &  AE   &  AK 
      &  GG   &  NGG  &  OL1  &  OL2  &  BL2  &  AE   &  AK\\ \hline
    1 & 36.76 & 40.53 & 23.21         & 23.21 & 52.90
      & 26.71 & 29.43 & 19.54 & 23.78 & 23.78 & 40.62 & -
      & 21.42 & 23.68 & 16.53 & 17.75 & 17.75 & 33.85 & -\\
    2 & 21.84 & 23.04 & 9.025         & 7.255 & 51.21  
      & 17.34 & 16.35 & 8.027 & 19.04 & 4.223 & 38.77 & 17.02 
      & 15.58 & 13.63 & 7.787 & 13.30 & 6.111 & 32.23 & 3.945\\
    3 & 18.14 & 18.14 & 4.975         & 0.806 & 52.76  
      & 15.12 & 12.87 & 3.203 & 14.52 & 0.948 & 40.45 & 19.10 
      & 14.22 & 11.09 & 2.632 & 6.808 & 1.012 & 33.71 & 4.089\\
    4 & 14.68 & 13.81 & 2.881         & 0.644 & 53.02 
      & 12.26 & 9.995 & 1.583 & 11.88 & 1.044 & 40.74 & 24.45
      & 12.27 & 8.611 & 1.233 & 3.679 & 0.807 & 33.07 & 2.426\\
    5 & 12.38 & 11.15 & 1.620         & 0.176 & 53.61 
      & 12.43 & 8.213 & 0.900 & 14.55 & 0.274 & 41.42 & 26.41
      & 11.67 & 19.72 & 0.676 & 1.505 & 0.194 & 34.57 & 2.153\\
    6 & 11.12 & 9.455 & 1.043         & 0.012 & 54.04 
      & 8.618 & 11.90 & 0.531 & 12.52 & 0.056 & 41.94 & 26.59
      & 8.732 & 9.581 & 0.372 & 0.623 & 0.036 & 35.03 & 1.218\\
    7 & 8.756 & 10.84 & 0.720         & 0.012 & 55.16 
      & 7.406 & 6.733 & 0.338 & 16.00 & 0.012 & 43.36 & 26.98
      & 7.845 & 5.410 & 0.188 & 0.414 & 0.008 & 36.31 & 1.473\\
    8 & 7.604 & 10.84 & 0.434         & 0.002 & 54.58 
      & 5.997 & 6.733 & 0.196 & 16.02 & 0.002 & 42.60 & 26.56
      & 6.253 & 5.410 & 0.126 & 0.284 & 0.002 & 35.62 & 1.197\\
    9 & 6.933 & 10.84 & 0.229         & 0.001 & 54.02 
      & 5.077 & 6.733 & 0.117 & 16.00 & 0.001 & 41.92 & 26.56
      & 5.148 & 5.410 & 0.053 & 0.289 & 0.000 & 35.01 & 1.197\\
    10 & 5.121 & 5.967 & 0.136         & 0.001 & 53.50 
      & 3.829 & 3.557 & 0.056 & 15.96 & 0.001 & 41.29 & 26.56
      & 3.940 & 2.831 & 0.026 & 0.228 & 0.000 & 34.45 & 0.994
\end{tabular}}
\caption{Maximal relative errors in $\%$ for the Markovian approximations for implied volatility surfaces of the European call option. The discretization error of these errors is at most $0.002\%$ for $H=0.001, 0.1$ and $0.020\%$ for $H=-0.1$.}
\label{tab:ErrorsIVSurfaces}
\end{table}

\begin{figure}
    \centering
    \begin{minipage}{0.5\textwidth}
        \centering
        \includegraphics[width=\linewidth]{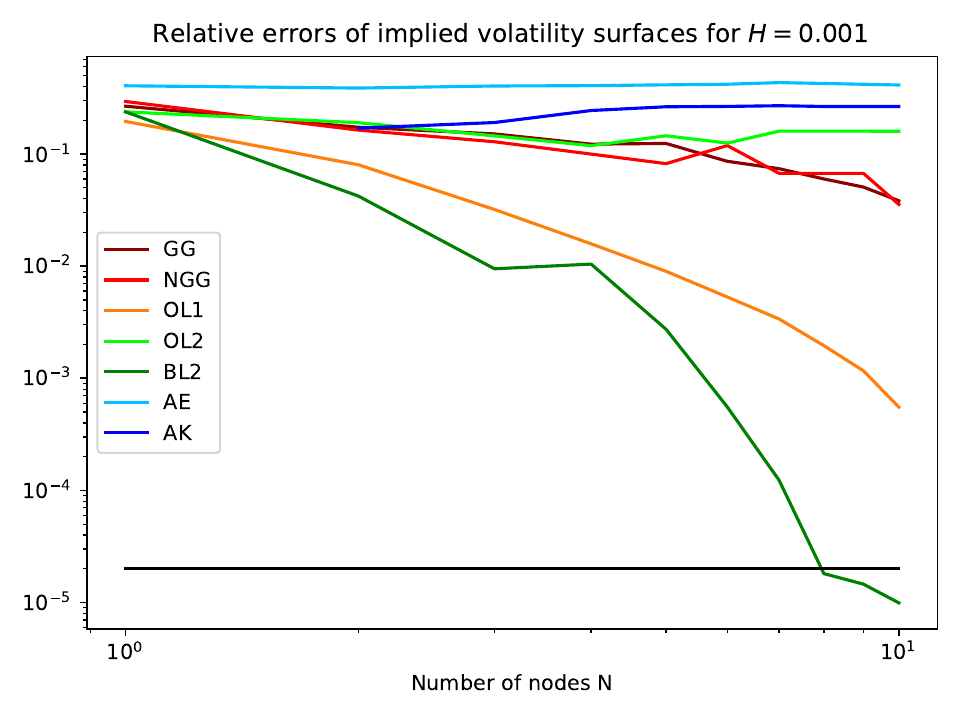}
    \end{minipage}%
    \begin{minipage}{0.5\textwidth}
        \centering
        \includegraphics[width=\linewidth]{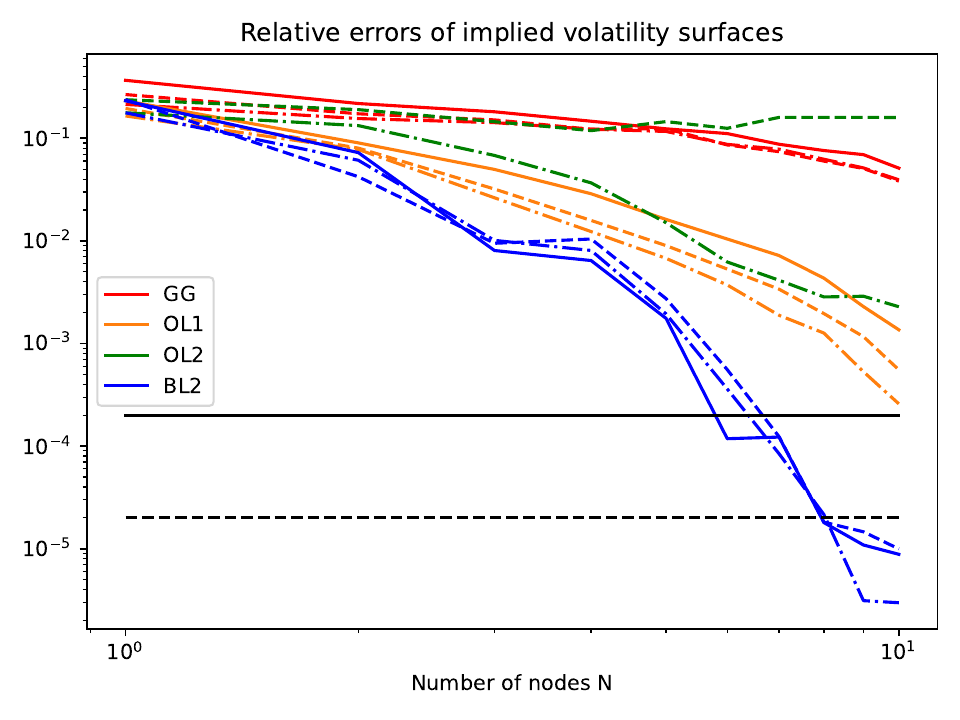}
    \end{minipage}
    \caption{Maximal relative errors of the implied volatility surfaces for various quadrature rules. The figure on the left compares all 7 algorithms for $H=0.001$. The figure on the right compares GG, OL1, OL2, BL2 for $H=-0.1$ (solid lines), $H=0.001$ (dashed lines), and $H=0.1$ (dash-dotted lines). The black lines are the accuracy of the computation of the relative errors, $2\cdot 10^{-5}$ for $H\ge 0$ and $2\cdot 10^{-4}$ for $H<0$.}
    \label{fig:ErrorsIVSurfaces}
\end{figure}

\subsubsection{Skew of European call options}\label{sec:ImpliedVolatilitySkews}

Next, we compute the skew of the implied volatility surface for European call options. We recall that in non-rough stochastic volatility models, the skew remains bounded for $T\to 0$, while in rough Heston with Hurst parameter $H$, the skew explodes like $T^{1/2-H}$. Our Markovian approximations are of course just ordinary stochastic differential equations. Hence, the skew will remain bounded as $T\to 0$ for fixed quadrature rules. However, we hope to be able to show that with the right choice of quadrature rule we can exhibit a similar explosion of the skew on reasonable time scales $T$. To this end, we use 25 geometrically spaced maturities on the interval $[0.004, 1]$, i.e. we have maturities between 1 day and 1 year that we want to jointly approximate.

Similarly to the computation of implied volatility surfaces in Section \ref{sec:ImpliedVolatilitySurfaces}, we are faced with a vector of maturities $T$, and need to choose a suitable representative $T_0$ for approximating $K$ by $K^N$. We use the same choice of $T_0$ as for the implied volatility surfaces given in \eqref{eqn:TVectorChoice}.

We computed the skews with relative errors of at most $0.01\%$ for $H=0.001$ and $H=0.1$, and with relative errors of at most $0.5\%$ for $H=-0.1$. The computational times for the skews for the Adams scheme were 6095 seconds for $H=-0.1$, 4677 seconds for $H=0.001$, and $303.5$ seconds for $H=0.1$. In contrast, typical computational times for the Markovian approximations were between 60 and 240 seconds.

In Figure \ref{fig:SkewsH0001} we see the skews for GG on the left, and for BL2 on the right. While we can still clearly see that the skews for GG with $N=10$ and the non-Markovian skew do not align perfectly, it is evident that the Markovian skews yield an explosion similar to $T^{1/2-H}$ on the time interval $[0.001, 1]$. On the other hand, we cannot make out any difference between the skew for BL2 with $N=3$ and the non-Markovian skew with the naked eye, illustrating that the Markovian approximations using BL2 capture the explosion of the skew well on reasonable time intervals. In particular, BL2 achieved errors below $1\%$ for $N=6$ and errors below the discretization error of $0.020\%$ for $N=10$ for positive $H$.

\begin{figure}
    \centering
    \begin{minipage}{0.5\textwidth}
        \centering
        \includegraphics[width=\linewidth]{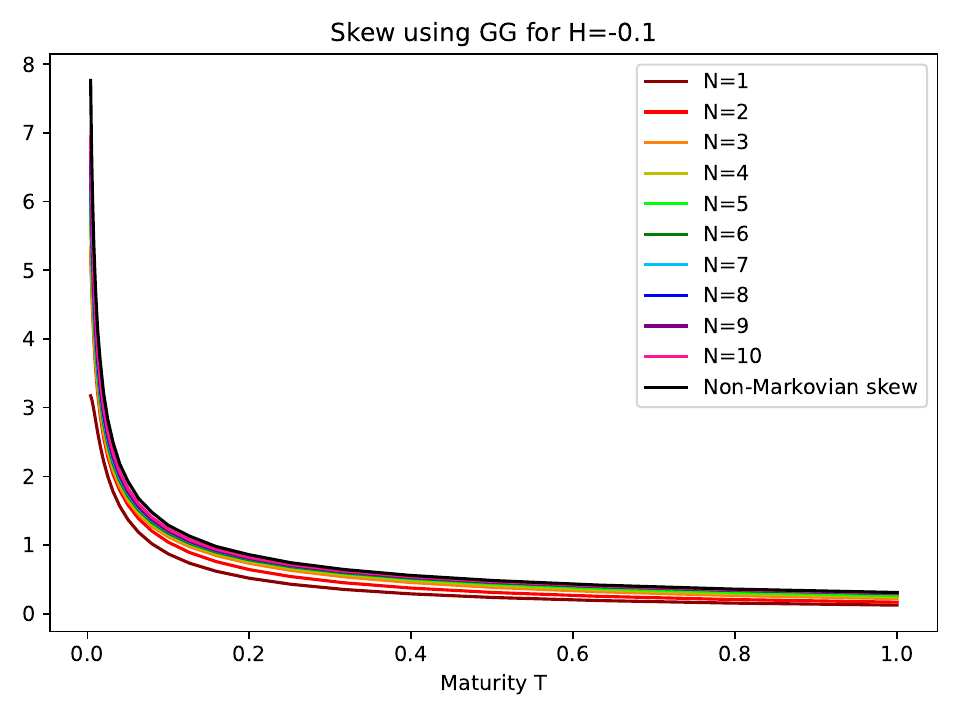}
    \end{minipage}%
    \begin{minipage}{0.5\textwidth}
        \centering
        \includegraphics[width=\linewidth]{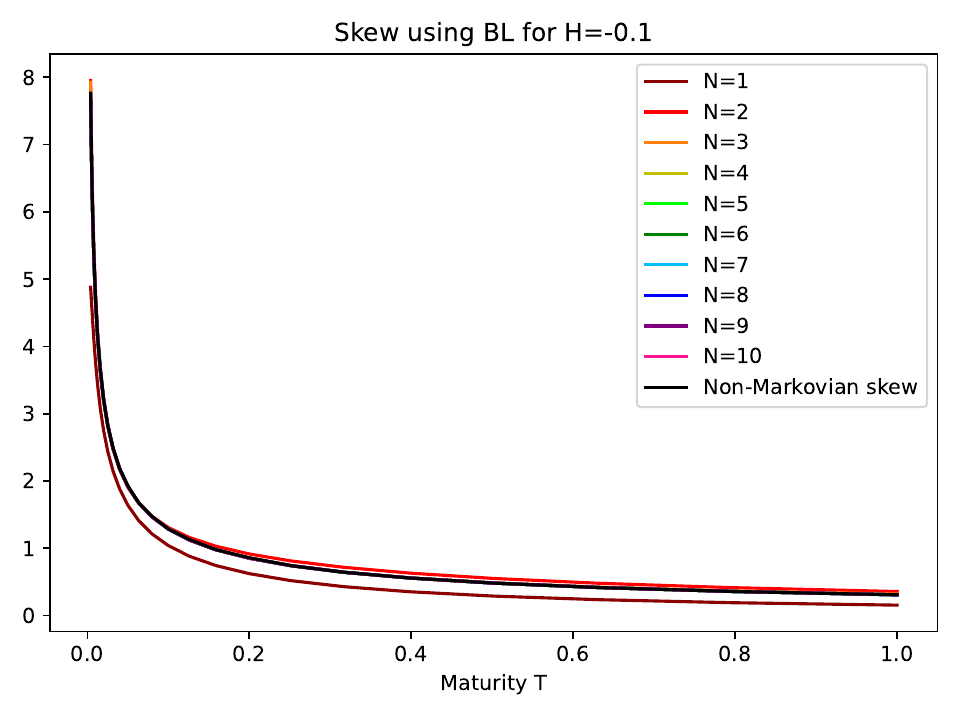}
    \end{minipage}
    \caption{Implied volatility skews for $H=0.001$ together with the Markovian approximations using GG (left) and BL2 (right).}
    \label{fig:SkewsH0001}
\end{figure}

\subsubsection{Pricing digital European call options}\label{sec:DigitalOptions}

Finally, we consider digital European call options. This illustrates that it is not strictly necessary for the payoff function to be smooth. We consider the option prices themselves and not implied volatilities, as these are not necessarily well-defined for digital options. We take $T=1$ and 301 linearly spaced values of log-moneyness in the interval $[-1, 0.5]$.

All option prices were computed with a relative accuracy of at least $10^{-5}$. The computational times for the option prices for the Adams scheme were 950.8 seconds for $H=-0.1$, 988.9 seconds for $H=0.001$, and $110.8$ seconds for $H=0.1$. In contrast, typical computational times for the Markovian approximations were between 1 and 5 seconds.

In Figure \ref{fig:ErrorsPricesDigital} we see the maximal relative errors for $H=0.001$ for all quadrature rules on the left, and a comparison for varying $H$ on the right. Again, BL2 performed best overall, achieving errors around $1\%$ for $N=3$ and errors below the discretization error of $0.002\%$ for $N=7$, largely independent of $H$.

\begin{figure}
    \centering
    \begin{minipage}{0.5\textwidth}
        \centering
        \includegraphics[width=\linewidth]{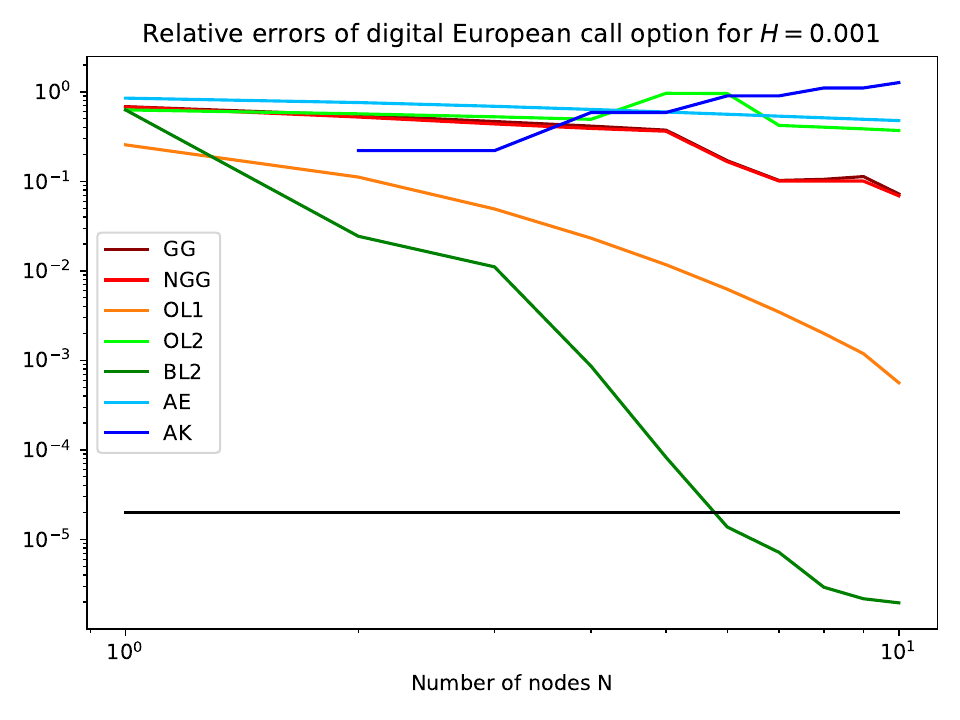}
    \end{minipage}%
    \begin{minipage}{0.5\textwidth}
        \centering
        \includegraphics[width=\linewidth]{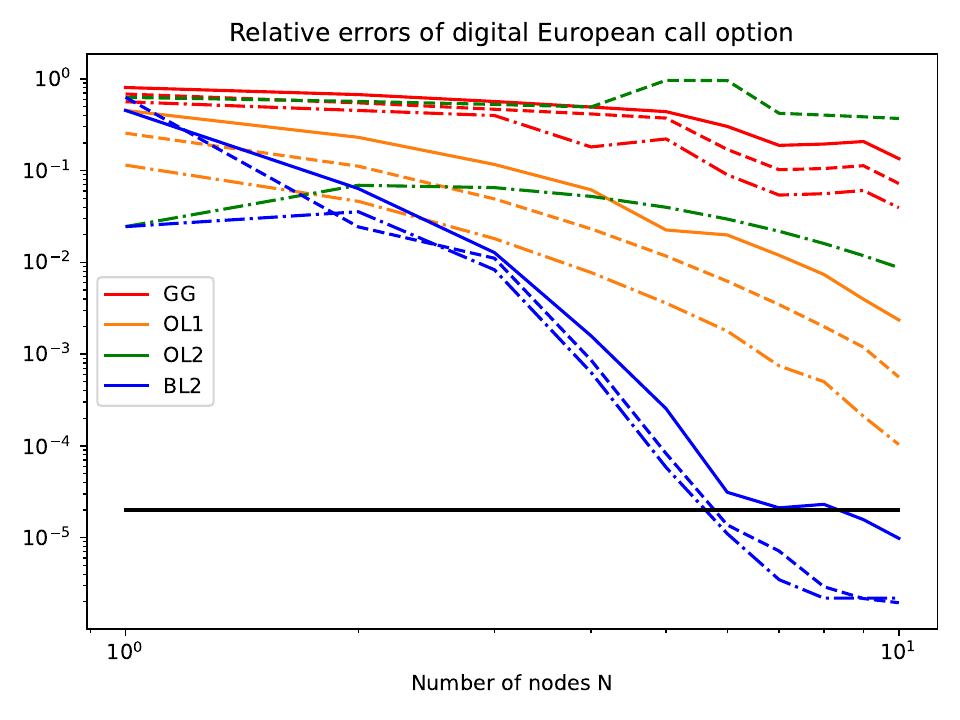}
    \end{minipage}
    \caption{Maximal relative errors of the digital European call prices for various quadrature rules. The figure on the left compares all 7 algorithms for $H=0.001$. The figure on the right compares GG, OL1, OL2, BL2 for $H=-0.1$ (solid lines), $H=0.001$ (dashed lines), and $H=0.1$ (dash-dotted lines). The black line is at $2\cdot 10^{-5}$, which is the accuracy of the computation of the relative errors.}
    \label{fig:ErrorsPricesDigital}
\end{figure}

\appendix

\section{Moment explosion of the rough Heston model}

In this section, we recall and expand on the results of \cite{gerhold2019moment} on the existence of moments of $S_T$. For all $q\in\R$, we denote $$M_q\coloneqq \E S_T^q \in (0, \infty],\qquad \widetilde M_q \coloneqq \E S_T^q + \sup_{N\in\N} \E (S_T^N)^q \in (0, \infty].$$

\begin{lemma}\label{lem:MomentsExist}
Let $q\in\R$. Then, $\widetilde M_q < \infty$ for all $T>0$ if and only if $M_q < \infty$ for all $T>0$. This is the case if and only if $$q\in[0, 1],\ \textup{or}\ \left(\rho\nu q - \lambda < 0\ \textup{and}\ (\rho\nu q - \lambda)^2 - \nu^2q(q-1)\ge 0\right).$$ In particular, the set of $q\in\R$ for which $M_q < \infty$ for all $T>0$ is an interval $\mathcal{I}$, and there is an $\eps > 0$ such that 
    \begin{align}
        \mathcal{J} \coloneqq 
        \begin{cases}
            [-\eps, 1 + \eps],\qquad &\rho\le 0\textup{ or } \lambda > \rho\nu,\\
            [-\eps, 1],\qquad &\textup{else}
        \end{cases}\label{eqn:IntervalJ}
    \end{align}
    is contained in $\mathcal{I}$. In particular, some negative moment of $S_T$ always exists for all $T$.
\end{lemma}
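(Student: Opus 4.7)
The plan is to split the problem into three ranges: the trivial range $q\in[0,1]$, the range $q\notin[0,1]$ satisfying the stated condition, and the range where it fails. For $q\in[0,1]$, both $S$ and every $S^N$ are non-negative local martingales and hence supermartingales, so Jensen's inequality gives $\E S_T^q\le (\E S_T)^q \le S_0^q$ and the analogous bound for $S^N$, proving $\widetilde M_q<\infty$ trivially. For $q\notin[0,1]$ I would use $M_q=S_0^q\varphi(q)$ with $\varphi$ from \eqref{eqn:CharacteristicFunctionSemiClosedForm}, so that $M_q<\infty$ iff $\psi(\cdot,q)$ stays bounded on $[0,T]$. The real quadratic $F(q,\cdot)$ has two non-negative real roots $0<x_-\le x_+$ precisely when $\rho\nu q-\lambda<0$ and $\Delta(q)\coloneqq(\rho\nu q-\lambda)^2-\nu^2 q(q-1)\ge 0$ (Vieta: product $q(q-1)/\nu^2>0$ forces same sign, sum $-2(\rho\nu q-\lambda)/\nu^2>0$ forces positive). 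Under this condition $F(q,\cdot)\ge 0$ on $[0,x_-]$ with $F(q,x_-)=0$, so a standard comparison argument on the Volterra equation \eqref{eqn:PsiRiccatiEquation} traps $\psi(\cdot,q)$ in $[0,x_-]$. The converse, that failure of the condition forces $\psi$ to blow up in finite time and hence $M_q=\infty$ for $T$ large, is essentially the content of \cite{gerhold2019moment}.

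The new ingredient is the equivalence $M_q<\infty\Leftrightarrow \widetilde M_q<\infty$; since $M_q\le\widetilde M_q$, only the forward direction requires work. My plan is to repeat the fixed-point argument above for the $N$-dimensional Riccati system $\dot\psi_i^N=-x_i\psi_i^N+F(q,\psi^N)$, $\psi^N=\sum_i w_i\psi_i^N$, satisfied by the $K^N$-analogue of $\psi$. The key point is that $x_-$ depends only on $(q,\rho,\nu,\lambda)$ and not on $N$. At any first touching time $t_0$ with $\psi^N(t_0)=x_-$, one has $F(q,\psi^N(t_0))=0$ and $\dot\psi^N(t_0)=-\sum_i w_i x_i \psi_i^N(t_0)\le 0$ because each $\psi_i^N(t_0)\ge 0$ (an easy induction from $F(q,\cdot)\ge 0$ on $[0,x_-]$ together with $\psi^N(s)\in[0,x_-]$ for $s<t_0$); hence $\psi^N$ never crosses $x_-$. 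Combined with the uniform bound $g^N(t)\le V_0+C\theta\, t^{H+1/2}/\Gamma(H+3/2)$ obtained from Assumption \ref{ass:TheAssumption}, this yields a uniform-in-$N$ bound on $\varphi^N(q)$, and thus $\widetilde M_q<\infty$.

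The interval property of $\mathcal I$ follows immediately from H\"older's inequality: if $M_{q_1},M_{q_2}<\infty$ and $q=\alpha q_1+(1-\alpha)q_2$ with $\alpha\in[0,1]$, then $M_q\le M_{q_1}^\alpha M_{q_2}^{1-\alpha}<\infty$. For the explicit interval $\mathcal J\subseteq\mathcal I$ of \eqref{eqn:IntervalJ}, I simply check the condition at the endpoints of $[0,1]$: at $q=0$ it reduces to $-\lambda<0$ and $\Delta(0)=\lambda^2>0$, both strict; at $q=1$ it becomes $\rho\nu-\lambda<0$ and $\Delta(1)=(\rho\nu-\lambda)^2\ge 0$, which are strict precisely when $\lambda>\rho\nu$ (automatically true when $\rho\le 0$, since then $\rho\nu\le 0<\lambda$). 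By continuity in $q$, the condition then extends to one-sided neighborhoods $[-\varepsilon,0]$ and, when applicable, $[1,1+\varepsilon]$, yielding $\mathcal J$.

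The main obstacle is verifying that the fixed point $x_-$ serves as a genuinely uniform-in-$N$ barrier for $\psi^N$: the underlying Riccati system depends on $N$ through the nodes and weights, but the structural properties needed for the comparison (non-negativity of each $\psi_i^N$ and the sign of $F(q,\cdot)$ on $[0,x_-]$) are preserved regardless of $N$, so no new machinery beyond the completely monotone structure of $K^N$ and Assumption \ref{ass:TheAssumption} is needed.
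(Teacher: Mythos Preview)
Your proposal is correct and the core idea coincides with the paper's: the barrier $x_-$ depends only on $(q,\rho,\nu,\lambda)$, not on the kernel, so the same argument that bounds $\psi(\cdot,q)$ for $K$ bounds $\psi^N(\cdot,q)$ for every $K^N$ with a kernel-independent constant. The paper, however, does not re-derive this; it simply observes that the proofs of \cite[Propositions~3.6 and~3.7]{gerhold2019moment} produce a bound on $\psi$ that depends only on the function $G$ there (essentially $F(q,\cdot)$), and that $K^N$ satisfies the standing hypotheses of \cite[Proposition~3.1]{gerhold2019moment}, so the identical bound applies to $\psi^N$. Your route is more self-contained in three respects: for $q\in[0,1]$ you use the supermartingale/Jensen bound $\E S_T^q\le S_0^q$ instead of citing \cite[Proposition~3.7]{gerhold2019moment}; for $q\notin[0,1]$ you run the barrier argument explicitly on the $N$-dimensional ODE system $\dot\psi_i^N=-x_i\psi_i^N+F(q,\psi^N)$ rather than on the Volterra equation; and for the interval property of $\mathcal I$ you invoke H\"older's inequality, whereas the paper argues directly that the quadratic $q\mapsto(\rho\nu q-\lambda)^2-\nu^2 q(q-1)$ has non-positive leading coefficient and the linear condition $\rho\nu q-\lambda<0$ cuts out a half-line. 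Your H\"older argument is cleaner; the paper's direct analysis, on the other hand, also identifies the endpoints of $\mathcal I$ rather than merely its convexity. One small caution: in your ODE barrier step, $\dot\psi^N(t_0)\le 0$ at the first touching time does not by itself exclude tangential crossing; to make this airtight you should either note that $\dot\psi^N(t_0)<0$ strictly whenever some node with $x_i>0$ has $\psi_i^N(t_0)>0$ (which is forced since $\sum_i w_i\psi_i^N(t_0)=x_->0$), or else run the comparison against $x_-+\varepsilon$ and let $\varepsilon\downarrow 0$.
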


\begin{proof}
    By \cite[Theorem 2.4]{gerhold2019moment}, $M_q < \infty$ for all $T>0$ if and only if $q\in[0, 1]$ or
    \begin{equation}\label{eqn:TheseBloodyConditions}
       \rho\nu q - \lambda < 0\ \textup{and}\ (\rho\nu q - \lambda)^2 - \nu^2q(q-1)\ge 0. 
    \end{equation}
    The first case ($q\in[0, 1]$) is proved in \cite[Proposition 3.7]{gerhold2019moment}, while the second case is proved in \cite[Proposition 3.6]{gerhold2019moment}. In both cases, the function $f$ in these proofs can be bounded by a constant $a$ that depends on the function $G$ in these proofs. The function $G$ does not depend on the kernel, making the bound independent of $N$. Also, the kernels $K^N$ satisfy all the assumptions of \cite[Proposition 3.1]{gerhold2019moment}, so that the arguments in the proof remain unchanged for $K^N$. Hence, $\widetilde M_q < \infty$ in both cases. On the other hand, if $\widetilde M_q < \infty$, then trivially $M_q < \infty$, proving the first claim of the lemma.

    Next, we want to prove that the set $q\in\R$ with $M_q < \infty$ is an interval $\mathcal{I}$, and that the interval $\mathcal{J}$ in \eqref{eqn:IntervalJ} is contained in $\mathcal{I}$. Note that conditional on the first condition in \eqref{eqn:TheseBloodyConditions} being true, we have $(\rho\nu q - \lambda)^2 > 0$, and the second condition in \eqref{eqn:TheseBloodyConditions} is necessarily true on an interval $q\in [-\delta_1, 1 + \delta_2]$ with $\delta_1, \delta_2 \in (0, \infty]$. Since the second condition is a quadratic polynomial in $q$ (for $\nu > 0$, with negative leading sign), the second condition is in fact satisfied on an interval of this form.
    
    If $\rho < 0$, the condition $\rho\nu q - \lambda < 0$ amounts to the interval $(-\frac{\lambda}{\abs{\rho}\nu},\infty),$ proving the claim for $\rho < 0$.

    If $\rho > 0$, the condition $\rho\nu q - \lambda < 0$ amounts to the interval $(-\infty, \frac{\lambda}{\rho\nu})$, proving the statement for $\rho > 0$. 

    Finally, if $\rho = 0$, the condition $\rho\nu q - \lambda < 0$ is always satisfied, yielding the interval $\R$, proving the statement for $\rho = 0$.
\end{proof}

\section{Localizing functions}

In this section we prove a simple lemma on approximating smooth functions on $\R$ by compactly supported smooth functions. While the following results are undoubtedly well-known, we were not able to find precise references.

\begin{lemma}\label{lem:MakeFunctionCompact}
    Let $n\ge 1$, and let $f:\R\to\R$ be an $n$ times weakly differentiable function. Let $-\infty < a < b < \infty$. Then there exists a function $g:\R\to\R$ which is $n$ times weakly differentiable such that
    \begin{align*}
    g(x) &= \begin{cases}
        f(x),\qquad &x\le a,\\
        0,\qquad &x \ge b,
    \end{cases}\\
    \abs{g^{(k)}(x)} &\le \sum_{i=0}^k \binom{k}{i} c_{k-i}(b-a)^{-(k-i)}\abs{f^{(i)}(x)},\qquad k=0,\dots,n,\ x\in[a, b],
    \end{align*}
    where $c_k$, $k=0,\dots,n,$ are absolute constants independent of everything but $k$.
\end{lemma}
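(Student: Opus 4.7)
The plan is to construct $g$ as the product of $f$ with a smooth cutoff function, rescaled to live on the interval $[a,b]$. Specifically, I would first fix once and for all a function $\phi\in C^\infty(\R;[0,1])$ satisfying $\phi(x)=1$ for $x\le 0$ and $\phi(x)=0$ for $x\ge 1$; the existence of such a $\phi$ is standard (e.g.\ built from $\exp(-1/x)$-type bump functions). Set $c_k \coloneqq \sup_{x\in\R}\abs{\phi^{(k)}(x)}$ for $k=0,\dots,n$, which are finite absolute constants depending only on the fixed choice of $\phi$.

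Next I would define
\[
g(x) \coloneqq f(x)\,\phi\!\left(\frac{x-a}{b-a}\right), \qquad x\in\R.
\]
By construction $g(x)=f(x)$ for $x\le a$ and $g(x)=0$ for $x\ge b$, so the matching conditions at the endpoints hold immediately. Since $\phi$ (and hence $\phi((\cdot-a)/(b-a))$) is smooth and $f$ is $n$ times weakly differentiable, the product $g$ is $n$ times weakly differentiable, and the Leibniz rule for weak derivatives gives
\[
g^{(k)}(x) = \sum_{i=0}^k \binom{k}{i} f^{(i)}(x)\,(b-a)^{-(k-i)}\,\phi^{(k-i)}\!\left(\frac{x-a}{b-a}\right),
\]
for $k=0,\dots,n$ and almost every $x\in\R$.

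Taking absolute values and using the uniform bound $\abs{\phi^{(k-i)}}\le c_{k-i}$ yields the claimed inequality on $[a,b]$; outside $[a,b]$ the argument of $\phi$ lies in $(-\infty,0]\cup[1,\infty)$, where $\phi$ is locally constant, so the derivative bound is in fact trivial there. There is no serious obstacle here: the only mild technical point is invoking the Leibniz rule in the weak-derivative sense, which is justified because the cutoff factor is $C^\infty$ with bounded derivatives of all orders.
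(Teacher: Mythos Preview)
Your proposal is correct and follows essentially the same approach as the paper: multiply $f$ by a rescaled smooth cutoff and apply the Leibniz rule. The only cosmetic difference is that the paper builds the cutoff by integrating a fixed bump function $\sigma$ supported on $[0,1]$ (and defines the constants $c_k$ as $\sup|\sigma^{(k)}|$), whereas you directly posit the cutoff $\phi$; the resulting derivative bounds and the product argument are identical.
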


\begin{proof}
    Let $\sigma\in C^\infty(\R, \R)$ be non-negative and supported on $[0, 1]$ with $$\int_{\R} \sigma(x) \sdd x = 1,\qquad c_k \coloneqq \sup_{x\in\R}\abs{\sigma^{(k)}(x)},\qquad k\in\N_0.$$ Define the function $$\mu:\R\to\R,\qquad \mu(x) = \frac{1}{b-a}\int_x^\infty \sigma\left(\frac{b-y}{b-a}\right)\sdd y = \frac{1}{b-a}\int_0^\infty \sigma\left(\frac{b-y-x}{b-a}\right)\sdd y.$$ We immediately get that $\mu$ is infinitely differentiable and non-increasing, that $\mu(x) = 1$ for $x\le a$, that $\mu(x) = 0$ for $x\ge b$, and that
    \begin{align*}
    \abs{\mu^{(k)}(x)} &= \abs{\frac{\dd^k}{\dd x^k}\frac{1}{b-a}\int_0^\infty \sigma\left(\frac{b-y-x}{b-a}\right)\sdd y}\\
    &= (b-a)^{-k-1}\abs{\int_0^\infty \sigma^{(k)}\left(\frac{b-y-x}{b-a}\right)\sdd y} \le c_k(b-a)^{-k}.
    \end{align*}

    Now, define $g(x) \coloneqq f(x) \mu(x)$. Obviously, $g(x) = f(x)$ for $x\le a$ and $g(x) = 0$ for $x\ge b$. Moreover,
    \begin{align*}
    \abs{g^{(k)}(x)} &= \abs{\sum_{i=0}^k \binom{k}{i} f^{(i)}(x) \mu^{(k-i)}(x)} \le \sum_{i=0}^k \binom{k}{i} c_{k-i}(b-a)^{-(k-i)}\abs{f^{(i)}(x)}.\qedhere
    \end{align*}
\end{proof}

\begin{corollary}\label{cor:MakeFunctionCompact}
    Let $n\ge 1$, and let $f:\R\to\R$ be an $n$ times weakly differentiable function. Let $-\infty < a_1 < b_1 < a_2 < b_2 < \infty$. Then there exists a function $g:\R\to\R$ which is $n$ times weakly differentiable such that
    \begin{align*}
    g(x) &= \begin{cases}
        f(x),\qquad &x\in[b_1,a_2],\\
        0,\qquad &x \in (-\infty,a_1] \cup [b_2,\infty),
    \end{cases}\\
    \abs{g^{(k)}(x)} &\le \sum_{i=0}^k \binom{k}{i} c_{k-i}(b_j-a_j)^{-(k-i)}\abs{f^{(i)}(x)},\quad k=0,\dots,n,\ x\in[a_j, b_j],\ j=1,2,
    \end{align*}
    where $c_k$, $k=0,\dots,n,$ are absolute constants independent of everything but $k$.
\end{corollary}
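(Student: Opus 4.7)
The plan is to reduce Corollary \ref{cor:MakeFunctionCompact} to two successive applications of Lemma \ref{lem:MakeFunctionCompact}, exploiting the fact that the two cutoff intervals $[a_1,b_1]$ and $[a_2,b_2]$ are disjoint. First, I would apply Lemma \ref{lem:MakeFunctionCompact} to $f$ with the cutoff interval $(a_2,b_2)$ to obtain an $n$ times weakly differentiable function $g_1\colon\R\to\R$ satisfying $g_1=f$ on $(-\infty,a_2]$, $g_1=0$ on $[b_2,\infty)$, and the desired derivative bound on $[a_2,b_2]$.

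Next, I would apply the left-sided analogue of Lemma \ref{lem:MakeFunctionCompact} (obtained from Lemma \ref{lem:MakeFunctionCompact} by reflection $x\mapsto -x$, so the constants $c_k$ are unchanged) to $g_1$ with the cutoff interval $(a_1,b_1)$ to obtain the desired $g$ satisfying $g=g_1$ on $[b_1,\infty)$, $g=0$ on $(-\infty,a_1]$, and the stated derivative bound on $[a_1,b_1]$, but in terms of the derivatives of $g_1$ rather than $f$. The key observation making this step work is the ordering $a_1<b_1<a_2<b_2$: on the interval $[a_1,b_1]$ one has $g_1\equiv f$, so $g_1^{(i)}\equiv f^{(i)}$ there, and hence the derivative bound on $[a_1,b_1]$ immediately reads in terms of $f^{(i)}$ as required. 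Similarly, on $[a_2,b_2]$ one has $g\equiv g_1$ (since $a_2>b_1$ so $g_1$ is already preserved by the second cutoff), so the bound from the first application carries over to $g$.

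The only step requiring a tiny bit of care is the left-sided analogue of Lemma \ref{lem:MakeFunctionCompact}: one simply replaces $\mu$ in its proof by $\widetilde\mu(x)\coloneqq \frac{1}{b-a}\int_{-\infty}^{x}\sigma\left(\frac{y-a}{b-a}\right)\sdd y$, which is monotone non-decreasing, equals $0$ on $(-\infty,a]$, equals $1$ on $[b,\infty)$, and whose $k$-th derivative is bounded by $c_k(b-a)^{-k}$ by exactly the same computation. The Leibniz rule then yields the bound on $\widetilde\mu\cdot f$ with the same constants $c_k$. I do not anticipate any obstacles in this argument; the only mild subtlety is checking the interval ordering $a_1<b_1<a_2<b_2$ ensures the two applications of the (one-sided) lemma do not interfere, so that the derivative estimates on the two transition intervals can be treated independently.
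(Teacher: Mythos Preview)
Your proposal is correct and is exactly the approach the paper takes: the paper's proof is the single sentence ``Just apply Lemma~\ref{lem:MakeFunctionCompact} twice,'' and you have spelled out precisely how those two applications (one on the right, one via reflection on the left) fit together, including the key point that the ordering $a_1<b_1<a_2<b_2$ keeps the two cutoffs from interfering.
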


\begin{proof}
    Just apply Lemma \ref{lem:MakeFunctionCompact} twice.
\end{proof}

\section{Proof of Lemma \ref{lem:ImprovedGaussianErrorForSpecificPowerFunction}}\label{sec:ProofOfImprovedGaussianErrorForSpecificPowerFunction}

The aim of this section is to prove Lemma \ref{lem:ImprovedGaussianErrorForSpecificPowerFunction}. First, some technical lemmas are needed. The interesting case in the following lemma is $\mu \to \infty$. The purpose of this lemma is that it allows us to remove a quantity $\delta$ in the rate of convergence at the cost of the factor $e$.

\begin{lemma}\label{lem:RemoveThatDeltaFromTheRate}
Let $M > 1$ and $\mu \ge \frac{1}{M-1}\lor\frac{3M}{2\sqrt{M^2-1}}.$ Then, 
$$\inf_{\delta\in(0, M-1)} \delta^{-1} \left( M - \delta + \sqrt{\left(M-\delta\right)^2 - 1}\right)^{-\mu} \le e(\delta^*)^{-1} (M + \sqrt{M^2-1})^{-\mu},$$ where $$\delta^* \coloneqq \frac{\sqrt{\mu^2(M^2-1) + 1} - M}{\mu^2 - 1}$$ is where the infimum is attained.
\end{lemma}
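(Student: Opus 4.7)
The plan is to evaluate the function $f(\delta) \coloneqq \delta^{-1}\bigl(M-\delta + \sqrt{(M-\delta)^2-1}\bigr)^{-\mu}$ at the stated candidate $\delta^*$ and estimate the result from above; the factor $e$ will come from the usual inequality $(1 + 1/\mu)^\mu \le e$.

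First I would verify that $\delta^*$ is a critical point by computing
$$\frac{\dd}{\dd\delta}\log f(\delta) = -\frac{1}{\delta} + \frac{\mu}{\sqrt{(M-\delta)^2-1}},$$
so that critical points satisfy $\mu\delta = \sqrt{(M-\delta)^2-1}$. Squaring yields $(\mu^2-1)\delta^2 + 2M\delta - (M^2-1) = 0$, whose unique positive root is $\delta^*$. One checks $\delta^* \in (0, M-1)$ under the assumptions on $\mu$ (which force $\mu > 1$), and since $f(\delta) \to \infty$ as $\delta \to 0^+$, the infimum over $(0, M-1)$ is no larger than $f(\delta^*)$ (it suffices for the proof to have this one-sided bound).

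Next, using the critical point relation $\mu\delta^* = \sqrt{(M-\delta^*)^2 - 1}$, I simplify
$$M - \delta^* + \sqrt{(M-\delta^*)^2-1} = M + (\mu-1)\delta^*,$$
and substituting the closed form of $\delta^*$ gives
$$M + (\mu-1)\delta^* = \frac{\mu M + \sqrt{\mu^2(M^2-1)+1}}{\mu+1}.$$
Writing $s \coloneqq \sqrt{M^2-1}$ and using the trivial bound $\sqrt{\mu^2 s^2 + 1} \ge \mu s$, I obtain
$$M + (\mu-1)\delta^* \ge \frac{\mu(M+s)}{\mu+1},$$
so that
$$\bigl(M + (\mu-1)\delta^*\bigr)^{-\mu} \le \Bigl(1 + \tfrac{1}{\mu}\Bigr)^{\mu}(M+s)^{-\mu} \le e\,(M+s)^{-\mu}.$$

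Combining these steps,
$$\inf_{\delta \in (0, M-1)} f(\delta) \le f(\delta^*) = (\delta^*)^{-1}\bigl(M+(\mu-1)\delta^*\bigr)^{-\mu} \le e\,(\delta^*)^{-1}\bigl(M+\sqrt{M^2-1}\bigr)^{-\mu},$$
which is the claim. There is no real obstacle here: the only technical points are checking that $\delta^* \in (0, M-1)$ (which is where the hypothesis $\mu \ge \tfrac{1}{M-1} \lor \tfrac{3M}{2\sqrt{M^2-1}}$ guarantees $\mu > 1$) and recognising the suggestive form $\mu M + \sqrt{\mu^2 s^2 + 1} \ge \mu(M+s)$ that produces the clean constant $e$.
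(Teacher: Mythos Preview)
Your proof is correct and follows the same overall strategy as the paper—locate the critical point $\delta^*$ and bound $f(\delta^*)$ from above—but your execution is considerably cleaner. The paper, after identifying $\delta^*$, substitutes the closed form directly into $M-\delta^*+\sqrt{(M-\delta^*)^2-1}$ and reaches the target lower bound $(M+\sqrt{M^2-1})\,\tfrac{\mu}{\mu+1}$ only after a lengthy chain of square-root estimates (concavity of $\sqrt{c^2+1}$, then $\sqrt{1+x}\ge 1+\tfrac{x}{2}-\tfrac{x^2}{8}$, followed by a separate verification that certain residual terms are nonnegative, which is precisely where the hypothesis $\mu\ge\tfrac{3M}{2\sqrt{M^2-1}}$ is used). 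You shortcut all of this by first invoking the critical-point relation $\mu\delta^*=\sqrt{(M-\delta^*)^2-1}$ to collapse the expression to $M+(\mu-1)\delta^*=\tfrac{\mu M+\sqrt{\mu^2(M^2-1)+1}}{\mu+1}$, after which the trivial bound $\sqrt{\mu^2 s^2+1}\ge\mu s$ delivers the same inequality in one line. Your route in fact shows that the full strength of $\mu\ge\tfrac{3M}{2\sqrt{M^2-1}}$ is not needed for the estimate itself; only $\mu>1$ (together with $\mu\ge\tfrac{1}{M-1}$ to place $\delta^*$ in $(0,M-1)$) is actually used.
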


\begin{proof}
Taking the derivative with respect to $\delta$ and setting it $0$ conveniently leads to a quadratic equation with the (unique positive) solution $\delta = \delta^*,$ where the infimum is attained. For simplicity, we write $\delta$ instead of $\delta^*$.

Define $c\coloneqq \sqrt{M^2-1}\mu$ for brevity. Then,
\begin{align*}
M-\delta + \sqrt{(M-\delta)^2-1} &= \frac{1}{\mu^2-1}\Bigg(M\mu^2 - \sqrt{c^2 + 1}\\
&\quad + \sqrt{c^2\mu^2 - 2M\mu^2\sqrt{c^2+1} + (M^2 + 1)\mu^2}\Bigg).
\end{align*}

Our goal is now to show the lower bound \eqref{eqn:ThisLowerBoundIsTheGoal} below. Since $x\mapsto \sqrt{x}$ is concave, we have $\sqrt{c^2 + 1} \le c + \frac{1}{2c}.$ Hence,
\begin{align*}
M-\delta + \sqrt{(M-\delta)^2-1} &\ge \frac{1}{\mu^2-1}\Bigg(M\mu^2 - c - \frac{1}{2c} + \mu\sqrt{c^2 - 2Mc - \frac{M}{c} + M^2 + 1}\Bigg).
\end{align*}
Furthermore,
\begin{align*}
\sqrt{c^2 - 2Mc - \frac{M}{c} + M^2 + 1} &= \sqrt{\left(c - M\right)^2 + 1 - \frac{M}{c}} = (c - M)\sqrt{1 + \frac{1}{c(c - M)}}.
\end{align*}

Since $c>M$ by the lower bound on $\mu$ in the statement of the lemma, we can apply the bound $$\sqrt{1 + x} \ge 1 + \frac{x}{2} - \frac{x^2}{8},$$ valid for all $x\ge 0$, to get
\begin{align*}
    (c-M)\sqrt{1 + \frac{1}{c(c - M)}} &\ge c-M + \frac{1}{2c} - \frac{1}{8c^2(c - M)},
\end{align*}
yielding in total the bound 
\begin{align*}
    M-\delta + \sqrt{(M-\delta)^2-1} &\ge \frac{1}{\mu^2-1}\Bigg((M\mu + c)(\mu - 1) + \frac{\mu - 1}{2c} - \frac{\mu}{8c^2(c - M)}\Bigg)
\end{align*}

We now want to show that the last two summands are non-negative. Indeed, since $\mu \ge \frac{3}{2}\frac{M}{\sqrt{M^2 - 1}},$ we have $c = \sqrt{M^2 - 1} \mu \ge \frac{3}{2}M,$ and hence,
\begin{align*}
    \frac{\mu - 1}{2c} - \frac{\mu}{8c^2(c - M)} &= \frac{1}{2c}\left(\mu - 1 - \frac{\mu}{4c(c - M)}\right) \ge \frac{1}{2c}\left(\mu - 1 - \frac{\mu}{3M^2}\right)\\
    &> \frac{1}{2c}\left(\frac{2}{3}\mu - 1\right) \ge \frac{1}{2c}\left(\frac{M}{\sqrt{M^2 - 1}} - 1\right) > 0,
\end{align*}
where we additionally used $M > 1.$ Therefore, we have the bound
\begin{align}
    M-\delta + \sqrt{(M-\delta)^2-1} &\ge \frac{\mu - 1}{\mu^2-1}(M\mu + c) = \left(M + \sqrt{M^2 - 1}\right)\frac{\mu}{\mu + 1}.\label{eqn:ThisLowerBoundIsTheGoal}
\end{align}

We remark that all the computations were carried out in such a way that the error in this lower bound is of order $O(\mu^{-2})$ as $\mu\to\infty$, for fixed $M>1$. Finally,
\begin{align*}
\inf_{\delta\in(0, M-1)} \delta^{-1} \left( M - \delta + \sqrt{\left(M-\delta\right)^2 - 1}\right)^{-\mu} &\le \delta^{-1} \left(\left(M + \sqrt{M^2-1}\right)\left(\frac{\mu}{\mu+1}\right)\right)^{-\mu}\\
&\le e\delta^{-1} \left(M + \sqrt{M^2-1}\right)^{-\mu}.
\end{align*}

It merely remains to verify that $\delta < M-1$. We have
\begin{align*}
\delta &= \frac{\sqrt{\mu^2(M^2-1) + 1} - M}{\mu^2-1} \le \frac{M\mu - M}{\mu^2-1} = \frac{M}{\mu + 1}.
\end{align*}
Then, $\delta < M-1$ if $\mu \ge \frac{1}{M-1}.$
\end{proof}

In the proof of Lemma \ref{lem:ImprovedGaussianErrorForSpecificPowerFunction}, we will represent $f$ using its Chebyshev series. Consequently, we need bounds for the coefficients of that series. The following Lemma gives such a bound. We remark that a bound of the form $\eps^{-\gamma}r^{-k}$ below would be trivial by bounding $f$ by its maximum. The improvement to $\eps^{1-\gamma}r^{-k}$ is achieved by noting that the integral only spends $\eps$ amount of time close to the singularity.

\begin{lemma}\label{lem:BoundOfChebyshevCoefficient}
Let $f:\C\backslash\{-c\}\to\C$, $f(x) = (x + c)^{-\gamma},$ where $c > 1$ and $\gamma > 1$. Let $r>1$ and $\eps > 0$ be such that $c = \frac{r + r^{-1}}{2} + \eps$, and let $k\in\R$. Then,
\begin{align*}
\abs{\frac{1}{\pi i}\int_{\abs{z} = 1}z^{-1-k} f\left(\frac{z + z^{-1}}{2}\right) \sdd z} &\le \frac{4/\pi}{r - r^{-1}}\Bigg(1 + \frac{(\pi/2)^\gamma}{\gamma - 1}\Bigg)\eps^{1-\gamma}r^{-k}.
\end{align*}
\end{lemma}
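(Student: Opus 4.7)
The plan is a contour-deformation argument: push the integration contour from $|z| = 1$ outward to $|z| = r$, extracting the decay factor $r^{-k}$, and then estimate the resulting circular integral pointwise. Denote by $I$ the integral on the left-hand side. First I would verify admissibility. The integrand has singularities only at $z = 0$ and at the two preimages $z = -s^{\pm 1}$ of $-c$ under the Joukowski map $z\mapsto (z+z^{-1})/2$, where $s \coloneqq c + \sqrt{c^2 - 1}$. The hypothesis $c = (r + r^{-1})/2 + \eps$ with $\eps > 0$ forces $s > r$, so the annulus $1 < |z| < r$ is singularity-free and the deformation is legitimate. Parametrizing $z = r e^{i\theta}$ then gives
\begin{equation*}
|I| \;\le\; \frac{r^{-k}}{\pi}\int_0^{2\pi} D(\theta)^{-\gamma}\sdd\theta,\qquad D(\theta) \coloneqq |a\cos\theta + c + i b\sin\theta|,
\end{equation*}
with $a \coloneqq (r + r^{-1})/2$ and $b \coloneqq (r - r^{-1})/2$.

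The next step is a clean lower bound for $D$. Using $c = a + \eps$ together with $a^2 - b^2 = 1$, direct expansion yields $D(\theta)^2 = (1 + \cos\theta)(1 + 2b^2 + \cos\theta + 2a\eps) + \eps^2$. Since the bracket is at least $2b^2$ (because $\cos\theta \ge -1$ and $2a\eps \ge 0$), writing $1 + \cos\theta = 2\cos^2(\theta/2)$ collapses the expression to the clean bound $D(\theta)^2 \ge \eps^2 + 4 b^2 \cos^2(\theta/2)$. Geometrically, $\eps$ is the distance from the Bernstein ellipse $\{a\cos\theta + ib\sin\theta\}$ to the pole $-c$, and $b$ is its semi-minor axis.

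Finally I would bound the resulting integral. After the substitution $v = (\pi - \theta)/2$, the target becomes $4\int_0^{\pi/2}(\eps^2 + 4 b^2 \sin^2 v)^{-\gamma/2}\sdd v$. In the main regime $\eps \le \pi b$, I would split at $v_0 \coloneqq \eps/(2b) \in (0, \pi/2]$: on $[0, v_0]$ bound the integrand by $\eps^{-\gamma}$, producing a contribution $\eps^{1-\gamma}/(2b)$; on $[v_0, \pi/2]$ drop $\eps^2$ and invoke the convexity estimate $\sin v \ge 2v/\pi$ to replace the integrand by $(\pi/(4bv))^\gamma$, whose primitive at $v_0$ yields $(\pi/2)^\gamma \eps^{1-\gamma}/(2b(\gamma - 1))$. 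Summing, multiplying by $4/\pi$, and restoring $r^{-k}$ lands exactly on $\tfrac{4/\pi}{r - r^{-1}}\bigl(1 + \tfrac{(\pi/2)^\gamma}{\gamma - 1}\bigr)\eps^{1-\gamma} r^{-k}$. The residual regime $\eps > \pi b$ is handled trivially via $D(\theta) \ge \eps$, giving $|I| \le 2\eps^{-\gamma} r^{-k}$, which is dominated by the first summand of the claimed bound since $\eps^{-1} \le 1/(\pi b)$.

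The main obstacle is purely bookkeeping: landing on the exact constants $4/\pi$ and $(\pi/2)^\gamma$ requires the specific splitting point $v_0 = \eps/(2b)$, rather than the more tempting $v_0 = \arcsin(\eps/(2b))$ or the one obtained by naive optimization of the sum — both of which overshoot by a factor of $\pi/2$ somewhere. If a first attempt does not land on the right constants, one can either re-optimize $v_0$ or tighten the lower bound on $D(\theta)^2$ by retaining the discarded $2a\eps$ term.
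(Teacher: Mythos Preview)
Your proof is correct and follows essentially the same route as the paper's: deform the contour to $|z|=r$, then split the resulting angular integral at distance $\eps/(2b)$ (in your $v$-variable) from the critical angle, bounding the near piece by $\eps^{-\gamma}$ and the far piece via $\sin v \ge 2v/\pi$. Your intermediate lower bound $D(\theta)^2 \ge \eps^2 + 4b^2\cos^2(\theta/2)$ via the half-angle identity is a slightly cleaner packaging than the paper's (which works with $\delta = 2\eps/(r+r^{-1})$ and bounds the full quadratic form by $\tfrac{4}{\pi^2}\rho^2(\theta-\pi)^2$), but the splitting point and both contributions agree exactly.
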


\begin{proof}
Define $\delta \coloneqq \frac{2}{r + r^{-1}}\eps,$ so that $c = (1 + \delta)\frac{r + r^{-1}}{2}.$ We have
\begin{align*}
\bigg|\frac{1}{\pi i} \int_{\abs{z} = 1} &z^{-1-k} f\left(\frac{z + z^{-1}}{2}\right) \sdd z\bigg|\\
&= \abs{\frac{1}{\pi i} \int_{\abs{z} = r}z^{-1-k} f\left(\frac{z + z^{-1}}{2}\right) \sdd z}\\
&\le \frac{r^{-k}}{\pi} \int_0^{2\pi} \abs{\frac{r e^{i\theta} + r^{-1}e^{-i\theta}}{2} + c}^{-\gamma}\sdd\theta\\
&= \frac{r^{-k}}{\pi} \int_0^{2\pi} \abs{\frac{r + r^{-1}}{2}\left(1 + \delta + \cos\theta\right) + i\frac{r - r^{-1}}{2}\sin\theta}^{-\gamma}\sdd\theta\\
&= \frac{r^{-k}}{\pi}\left(\frac{r + r^{-1}}{2}\right)^{-\gamma} \int_0^{2\pi} \left(\left(1 + \delta + \cos\theta\right)^2 + \left(\frac{r - r^{-1}}{r + r^{-1}}\right)^2\left(\sin\theta\right)^2\right)^{-\gamma/2}\sdd\theta.
\end{align*}

The integrand only comes close to the singularity for $\theta = \pi$. Hence, we split the integral into a part with $\theta\approx\pi$, and a part where $\theta$ is far from $\pi$. First, let us consider the part where $\theta\approx \pi$. There,
\begin{align*}
\int_{\pi - \frac{r+r^{-1}}{r-r^{-1}}\delta}^{\pi + \frac{r+r^{-1}}{r-r^{-1}}\delta} \left(\left(1 + \delta + \cos\theta\right)^2 + \left(\frac{r - r^{-1}}{r + r^{-1}}\right)^2\left(\sin\theta\right)^2\right)^{-\gamma/2}\sdd\theta \le 2\frac{r+r^{-1}}{r-r^{-1}} \delta^{1-\gamma}.
\end{align*}

For the remainder of the integral, we note that
\begin{align*}
\Bigg(\int_0^{\pi - \frac{r+r^{-1}}{r-r^{-1}}\delta} &+ \int_{\pi + \frac{r+r^{-1}}{r-r^{-1}}\delta}^{2\pi}\Bigg) \left(\left(1 + \delta + \cos\theta\right)^2 + \left(\frac{r - r^{-1}}{r + r^{-1}}\right)^2\left(\sin\theta\right)^2\right)^{-\gamma/2}\sdd\theta\\
&=2 \int_0^{\pi - \frac{r+r^{-1}}{r-r^{-1}}\delta} \left(\left(1 + \delta + \cos\theta\right)^2 + \left(\frac{r - r^{-1}}{r + r^{-1}}\right)^2\left(\sin\theta\right)^2\right)^{-\gamma/2}\sdd\theta.
\end{align*}

Define $\xi \coloneqq \theta-\pi$. Since $\abs{\xi} \le \pi$, we have
\begin{align*}
\left(1 + \delta + \cos\theta\right)^2 + \left(\frac{r - r^{-1}}{r + r^{-1}}\right)^2\left(\sin\theta\right)^2 &\ge \frac{4}{\pi^2}\left(\frac{r - r^{-1}}{r + r^{-1}}\right)^2\xi^2.
\end{align*}
This implies for the integral that
\begin{align*}
\int_0^{\pi - \frac{r+r^{-1}}{r-r^{-1}}\delta} &\left(\left(1 + \delta + \cos\theta\right)^2 + \left(\frac{r - r^{-1}}{r + r^{-1}}\right)^2\left(\sin\theta\right)^2\right)^{-\gamma/2}\sdd\theta\\
&\le \int_{\frac{r+r^{-1}}{r-r^{-1}}\delta}^{\pi} \frac{\pi^\gamma}{2^\gamma}\left(\frac{r - r^{-1}}{r + r^{-1}}\right)^{-\gamma}\xi^{-\gamma}\sdd\theta\\
&= \frac{\pi^\gamma}{2^{\gamma}} \left(\frac{r - r^{-1}}{r + r^{-1}}\right)^{-\gamma}\frac{1}{\gamma - 1}\left(\left(\frac{r+r^{-1}}{r-r^{-1}}\delta\right)^{1-\gamma} - \pi^{1-\gamma}\right)\\
&\le \frac{\pi^\gamma}{2^{\gamma}} \frac{1}{\gamma - 1}\left(\frac{r+r^{-1}}{r-r^{-1}}\right)\delta^{1-\gamma}.
\end{align*}

Putting these estimates together, we obtain
\begin{align*}
\abs{\frac{1}{\pi i} \int_{\abs{z} = 1}z^{-1-k} f\left(\frac{z + z^{-1}}{2}\right) \sdd z} &\le \frac{2r^{-k}}{\pi}\left(\frac{r + r^{-1}}{2}\right)^{-\gamma}\frac{r+r^{-1}}{r-r^{-1}} \delta^{1-\gamma}\\
&\qquad + \frac{2r^{-k}\pi^{\gamma-1}}{2^{\gamma}} \left(\frac{r + r^{-1}}{2}\right)^{-\gamma} \frac{1}{\gamma - 1}\left(\frac{r+r^{-1}}{r-r^{-1}}\right)\delta^{1-\gamma}\\
&\le \frac{2}{\pi}\left(\frac{r + r^{-1}}{2}\right)^{-\gamma}\frac{r+r^{-1}}{r-r^{-1}}\Bigg(1 + \frac{(\pi/2)^\gamma}{\gamma - 1}\Bigg)\delta^{1-\gamma}r^{-k}.
\end{align*}
Transforming $\delta$ back to $\eps$ yields the result.
\end{proof}

We can now proceed with the proof of Lemma \ref{lem:ImprovedGaussianErrorForSpecificPowerFunction}.

\begin{proof}[Proof of Lemma \ref{lem:ImprovedGaussianErrorForSpecificPowerFunction}]
For the proof we follow \cite[Theorem 19.3]{trefethen2019approximation} very closely. In fact, the main difference is that we use a sharper bound on the Chebyshev coefficients $a_k$ below.

Let $T_k$ be the Chebyshev polynomial of degree $k$. By \cite[Theorem 3.1]{trefethen2019approximation}, $f$ has a representation as a Chebyshev series $$f(x) = \sum_{k=0}^\infty a_k T_k(x).$$ Since Gaussian quadrature of level $m$ is exact for polynomials up to degree $2m-1$, and, by symmetry, furthermore exact for odd functions, and since the Chebyshev polynomials of odd degree are odd functions, we get the error bound 
\begin{align*}
\abs{\int_{-1}^1 f(x) \sdd x - \sum_{i=1}^m w_i f(x_i)} &\le \sum_{k=0}^\infty \abs{a_{2m + 2k}}\abs{\int_{-1}^1 T_{2m+2k}(x) \sdd x - \sum_{i=1}^m w_i T_{2m+2k}(x_i)}.
\end{align*}
Since $\sum_{i=1}^m w_i = 2$ and $\abs{T_k(x)} \le 1$ for $x\in[-1, 1]$, \cite[Theorem 19.2]{trefethen2019approximation} implies that
\begin{align*}
\abs{\int_{-1}^1 f(x) \sdd x - \sum_{i=1}^m w_i f(x_i)} &\le \sum_{k=0}^\infty \abs{a_{2m + 2k}}\left(2 + \frac{2}{(2m+2k)^2-1}\right).
\end{align*}

By \cite[Theorem 3.1]{trefethen2019approximation}, and in particular equation (3.13) therein, we have $$a_k = \frac{1}{\pi i}\int_{\abs{z}=1} z^{-1-k}f\left(\frac{z + z^{-1}}{2}\right) \sdd z.$$ By Lemma \ref{lem:BoundOfChebyshevCoefficient}, we have
\begin{align*}
\abs{a_k} &\le \frac{4/\pi}{r - r^{-1}}\Bigg(1 + \frac{(\pi/2)^\gamma}{\gamma - 1}\Bigg)\eps^{1-\gamma}r^{-k}.
\end{align*}
where $r > 1$ and $\eps > 0$ are chosen such that $c = \frac{r + r^{-1}}{2} + \eps.$ Therefore,
\begin{align*}
\abs{\int_{-1}^1 f(x) \sdd x - \sum_{i=1}^m w_i f(x_i)} &\le \sum_{k=0}^\infty \frac{4/\pi}{r - r^{-1}}\Bigg(1 + \frac{(\pi/2)^\gamma}{\gamma - 1}\Bigg)\eps^{1-\gamma}r^{-2m-2k}\left(2 + \frac{2}{4m^2-1}\right)\\
&= \frac{8}{\pi}\frac{4m^2}{4m^2-1}\frac{r}{(r - r^{-1})^2}\Bigg(1 + \frac{(\pi/2)^\gamma}{\gamma - 1}\Bigg)\eps^{1-\gamma}r^{-2m}.
\end{align*}

Note that that $r = c-\eps + \sqrt{(c-\eps)^2-1}.$ In line with Lemma \ref{lem:RemoveThatDeltaFromTheRate}, we define $\mu\coloneqq \frac{2m}{\gamma-1}$, and assume that $\mu \ge \frac{1}{c-1}\lor \frac{3c}{2\sqrt{c^2-1}}.$ We choose $$\eps = \frac{\sqrt{\mu^2(c^2-1)+1} - c}{\mu^2-1},$$ and get
\begin{align*}
\abs{\int_{-1}^1 f(x) \sdd x - \sum_{i=1}^m w_i f(x_i)} &\le \frac{8}{\pi}\frac{4m^2}{4m^2-1}\frac{r}{(r - r^{-1})^2}\Bigg(1 + \frac{(\pi/2)^\gamma}{\gamma - 1}\Bigg)e^{\gamma-1}\\
&\qquad \times \eps^{1-\gamma}\left(c + \sqrt{c^2-1}\right)^{-2m}.\qedhere
\end{align*}
\end{proof}

\section{Gaussian Approximations of the fractional kernel}\label{sec:GaussianApproximationsFractionalKernel}

This section is devoted to some simple results on both geometric (Section \ref{sec:GeometricGaussianApproximations}) and non-geometric (Section \ref{sec:NonGeometricGaussianApproximations}) Gaussian approximations of the fractional kernel $K$. All results in this section are simple corollaries of the following general error representation formula of Gaussian quadrature.

\begin{theorem}\label{thm:GaussGeneralErrorRepresentationFormula}\cite[Theorem 3.6.24]{bulirsch2002introduction}
If $f\in C^{2n}([a, b])$, and $(x_i)_{i=1}^n$ are the nodes and $(w_i)_{i=1}^n$ are the weights of Gaussian quadrature with respect to the weight function $w$, then $$\int_a^b f(x) w(x) \sdd x - \sum_{i=1}^n w_i f(x_i) = \frac{f^{(2n)}(\xi)}{(2n)!}\int_a^b w(x) p_n(x)^2 \sdd x,$$ where $\xi$ is some point with $\xi\in[a, b]$, and $p_n$ is a specific polynomial of degree $n$.
\end{theorem}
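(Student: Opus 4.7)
The plan is to reduce the error analysis to a classical Hermite interpolation argument. The key observation is that the Gaussian nodes $x_1,\dots,x_n$ are precisely the zeros of the monic polynomial $p_n$ of degree $n$ that is orthogonal (with respect to the inner product induced by $w$ on $[a,b]$) to all polynomials of lower degree, and this $p_n$ is the ``specific polynomial'' appearing in the statement.

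First I would build the Hermite interpolant $H$ of degree at most $2n-1$ satisfying $H(x_i)=f(x_i)$ and $H'(x_i)=f'(x_i)$ for $i=1,\dots,n$. A standard fact about Hermite interpolation at $n$ double nodes says that for each $x\in[a,b]$ there exists $\eta(x)\in[a,b]$ with
$$f(x)-H(x)=\frac{f^{(2n)}(\eta(x))}{(2n)!}\,p_n(x)^2,$$
and $\eta$ can be arranged so that $x\mapsto f^{(2n)}(\eta(x))$ is continuous on $[a,b]$. Since $\deg H\le 2n-1$, Gaussian quadrature integrates $H$ exactly, so
$$\sum_{i=1}^n w_i f(x_i)=\sum_{i=1}^n w_i H(x_i)=\int_a^b H(x)w(x)\sdd x.$$
Subtracting this identity from $\int_a^b f(x)w(x)\sdd x$ would yield
$$\int_a^b f(x)w(x)\sdd x-\sum_{i=1}^n w_i f(x_i)=\int_a^b\frac{f^{(2n)}(\eta(x))}{(2n)!}\,p_n(x)^2 w(x)\sdd x.$$
Since $p_n^2 w\ge 0$, the integral mean value theorem applied to the continuous factor $x\mapsto f^{(2n)}(\eta(x))$ then produces a point $\xi\in[a,b]$ at which $f^{(2n)}(\xi)/(2n)!$ can be pulled outside the integral, giving exactly the claimed formula.

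The hard part will be establishing the Hermite error representation with a \emph{continuous} intermediate point $\eta$. I would handle this by the usual Rolle-theorem trick applied to the auxiliary function $g(y)\coloneqq f(y)-H(y)-\lambda\,p_n(y)^2$, where $\lambda=\lambda(x)$ is chosen to make $g(x)=0$ at a fixed non-node point $x$. Then $g$ vanishes at each $x_i$ to multiplicity at least $2$ and also at $x$, giving at least $2n+1$ zeros in $[a,b]$ counted with multiplicity, so iterated Rolle yields some $\eta(x)\in[a,b]$ with $g^{(2n)}(\eta(x))=0$. Using $H^{(2n)}\equiv 0$ and $(p_n^2)^{(2n)}=(2n)!$, this reads $\lambda(x)=f^{(2n)}(\eta(x))/(2n)!$; the continuity of $\eta$ is routine but must be checked. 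All remaining ingredients -- the orthogonality characterization of the Gaussian nodes, exactness of Gauss quadrature up to degree $2n-1$, and non-negativity of $w\,p_n^2$ -- are essentially immediate.
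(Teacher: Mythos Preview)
The paper does not prove this theorem; it simply quotes it from Stoer--Bulirsch \cite[Theorem 3.6.24]{bulirsch2002introduction} and uses it as a black box (only the sign consequence in Proposition~\ref{prop:GaussianQuadratureUnderestimatesCompletelyMonotoneFunctions} is ever needed). So there is no ``paper's own proof'' to compare against.

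Your argument is the standard textbook proof and is correct. Two small remarks. First, you do not actually need $\eta$ itself to be continuous; what you need (and what you in fact stated earlier) is that the quotient $x\mapsto (f(x)-H(x))/p_n(x)^2$ extends to a continuous function on $[a,b]$, which is immediate from the Hermite remainder and $f\in C^{2n}$. The integral mean value theorem then applies directly to this continuous factor against the nonnegative weight $p_n^2 w$. Second, it is worth saying explicitly that the ``specific polynomial'' $p_n$ in the statement is the degree-$n$ orthogonal polynomial (any nonzero scalar multiple works, since both sides scale the same way), which you identified correctly.
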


The following corollary will not be explicitly used in the proof of the convergence rates, but it is perhaps an interesting observation.

\begin{corollary}\label{cor:GaussianApproximationLowerBiased}
Let $K^N$ be an approximation of $K$ stemming from a geometric or non-geometric Gaussian rule. Then, 
\begin{align*}
K(t) \ge K^N(t).
\end{align*}
\end{corollary}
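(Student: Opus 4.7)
The plan is to show that the error of each piece of the Gaussian approximation is non-negative by applying Theorem~\ref{thm:GaussGeneralErrorRepresentationFormula} to completely monotone integrands. Recall that $K$ has the representation
\begin{equation*}
K(t) = c_H \int_0^\infty e^{-xt} x^{-H-1/2}\sdd x,
\end{equation*}
and that both the geometric and non-geometric Gaussian rules decompose $[0,\infty) = [0,\xi_1]\cup\bigcup_{i=1}^{n-1}[\xi_i,\xi_{i+1}]\cup[\xi_n,\infty)$, apply Gaussian quadrature to each bounded sub-interval, and discard the tail $[\xi_n,\infty)$.

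First I would use Proposition~\ref{prop:GaussianQuadratureUnderestimatesCompletelyMonotoneFunctions} (whose proof is really what I am reusing) to deal with the interval $[0,\xi_1]$: the weight $w(x)=c_H x^{-H-1/2}$ is non-negative, and the integrand $x\mapsto e^{-xt}$ is completely monotone in $x$, so all even-order derivatives are non-negative. By Theorem~\ref{thm:GaussGeneralErrorRepresentationFormula} the quadrature error equals
\begin{equation*}
\frac{f^{(2m)}(\xi)}{(2m)!}\int_0^{\xi_1} w(x)p_m(x)^2\sdd x \ge 0,
\end{equation*}
so the quadrature on $[0,\xi_1]$ under-estimates the corresponding piece of the integral defining $K(t)$.

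Next I would treat each interval $[\xi_i,\xi_{i+1}]$ for $i=1,\dots,n-1$. Here the weight is $w\equiv 1$ and the function to be integrated is $x\mapsto c_H e^{-xt}x^{-H-1/2}$. The key observation is that this is the product of two completely monotone functions in $x$ (for fixed $t\ge 0$), and hence is itself completely monotone. Therefore Theorem~\ref{thm:GaussGeneralErrorRepresentationFormula} again yields a non-negative quadrature error on each such interval.

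Finally, the tail contribution $c_H\int_{\xi_n}^\infty e^{-xt}x^{-H-1/2}\sdd x \ge 0$ is simply discarded in the definition of $K^N$. Summing the non-negative contributions from $[0,\xi_1]$, from each $[\xi_i,\xi_{i+1}]$, and from the tail, I obtain $K(t)-K^N(t)\ge 0$, which is the claim. There is no real obstacle here; the only subtle point is verifying that $x\mapsto e^{-xt}x^{-H-1/2}$ is completely monotone, which follows from the fact that the pointwise product of completely monotone functions is completely monotone (straightforward by the Leibniz rule and induction on the order of the derivative).
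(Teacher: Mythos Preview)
Your proof is correct and follows essentially the same approach as the paper: decompose the integral representation of $K(t)$ along the partition $\{0=\xi_0<\xi_1<\dots<\xi_n<\infty\}$, apply Proposition~\ref{prop:GaussianQuadratureUnderestimatesCompletelyMonotoneFunctions} (equivalently, Theorem~\ref{thm:GaussGeneralErrorRepresentationFormula}) using that $x\mapsto e^{-xt}$ is completely monotone on $[0,\xi_1]$ and $x\mapsto e^{-xt}x^{-H-1/2}$ is completely monotone on $[\xi_i,\xi_{i+1}]$, and note that the discarded tail is non-negative. The paper's proof is simply a one-line condensation of your argument.
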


\begin{proof}
This follows immediately from Proposition \ref{prop:GaussianQuadratureUnderestimatesCompletelyMonotoneFunctions} once we note that both $x\mapsto e^{-xt}$ and $x\mapsto e^{-xt} x^{-H-1/2}$ are completely monotone.
\end{proof}

The next corollary gives us an error representation formula for Gaussian approximations. Note that this formula is not valid for general approximations. Also, we will not use it in the proof of the convergence rates. However, we use this corollary to compute the actual errors of Gaussian approximations in the numerical part.

\begin{corollary}\label{cor:L1GaussianApproximationErrorRepresentation}
Let $(x_i)_{i=1}^N$ be the nodes and $(w_i)_{i=1}^N$ be the weights of a geometric or non-geometric Gaussian rule, and let $K^N$ be the corresponding approximation. Then, 
\begin{align*}
\int_0^T \abs{K(t) - K^N(t)} \sdd t &= \frac{T^{H+1/2}}{\Gamma(H+3/2)} - \sum_{i=1}^N \frac{w_i}{x_i}\left(1 - e^{-x_iT}\right).
\end{align*}
\end{corollary}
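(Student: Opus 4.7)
The plan is very short because the result is essentially a direct computation once the sign of $K(t)-K^N(t)$ is known. The crucial input is Corollary \ref{cor:GaussianApproximationLowerBiased}, which states that for any geometric or non-geometric Gaussian rule, the approximation $K^N$ is lower-biased, i.e.\ $K^N(t) \le K(t)$ for all $t>0$. This lets me drop the absolute value in the $L^1$-error: writing
\begin{equation*}
\int_0^T \abs{K(t) - K^N(t)} \sdd t = \int_0^T \bigl(K(t) - K^N(t)\bigr) \sdd t,
\end{equation*}
the problem reduces to evaluating two elementary integrals.

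For the first integral, I would just use the definition \eqref{eqn:DefinitionOfK} of $K$ and compute
\begin{equation*}
\int_0^T K(t) \sdd t = \int_0^T \frac{t^{H-1/2}}{\Gamma(H+1/2)} \sdd t = \frac{T^{H+1/2}}{(H+1/2)\Gamma(H+1/2)} = \frac{T^{H+1/2}}{\Gamma(H+3/2)}.
\end{equation*}
For the second integral, I would use the form \eqref{eqn:KNForm} of $K^N$ and compute
\begin{equation*}
\int_0^T K^N(t) \sdd t = \sum_{i=1}^N w_i \int_0^T e^{-x_i t} \sdd t = \sum_{i=1}^N \frac{w_i}{x_i}\bigl(1 - e^{-x_i T}\bigr),
\end{equation*}
where the nonnegativity of the nodes $x_i > 0$ (as encoded in the definitions of the geometric and non-geometric Gaussian rules) makes this well-defined; the node $x_0=0$ appearing in some earlier definitions in \cite{bayer2023markovian} is absent from the rules considered here. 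Subtracting the two yields the claimed identity.

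There is no real obstacle: the only non-routine ingredient is the lower-biasedness from Corollary \ref{cor:GaussianApproximationLowerBiased}, which in turn rests on Proposition \ref{prop:GaussianQuadratureUnderestimatesCompletelyMonotoneFunctions} applied to the completely monotone functions $x\mapsto e^{-xt}$ and $x\mapsto e^{-xt}x^{-H-1/2}$. Everything else is a one-line integration, and Fubini (or just the finiteness of all terms involved) legitimises swapping the sum and integral for $K^N$.
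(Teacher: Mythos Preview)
Your proof is correct and follows exactly the same approach as the paper: invoke Corollary~\ref{cor:GaussianApproximationLowerBiased} to drop the absolute value, then integrate $K$ and $K^N$ separately and subtract. There is nothing to add.
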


\begin{proof}
By Corollary \ref{cor:GaussianApproximationLowerBiased}, we have
\begin{align*}
\int_0^T \abs{K(t) - K^N(t)} \sdd t &= \int_0^T\left(K(t) - K^N(t)\right) \sdd t\\
&= \int_0^T \frac{t^{H-1/2}}{\Gamma(H+1/2)}\sdd t - \sum_{i=1}^N w_i \int_0^T e^{-x_it} \sdd t\\
&= \frac{T^{H+1/2}}{\Gamma(H+3/2)} - \sum_{i=1}^N \frac{w_i}{x_i}\left(1 - e^{-x_iT}\right).\qedhere
\end{align*}
\end{proof}

\section{Computing $L^1$-errors}\label{sec:L1NormComputation}

Especially for the algorithm OL1 it is necessary to be able to compute the $L^1$ error between $K^N$ and $K$ quickly and with high accuracy. We remark that Corollary \ref{cor:L1GaussianApproximationErrorRepresentation} is not applicable since we are dealing with arbitrary approximations $K^N$ that do not stem from Gaussian rules. Simple quadrature rules like the trapezoidal rule prove unsuitable due to the singularity of the kernel $K(t)$ at $t=0$, as we will see below. Also, simple changes of variables to remove that singularity do not sufficiently solve this problem. Hence, we describe here an algorithm that allows us to compute the $L^1$-error within a few milliseconds for moderate values of $N$. More precisely, we give below three different algorithms in increasing sophistication and efficiency.
\begin{enumerate}
    \item \textbf{Trapezoidal:} This is just the trapezoidal rule, where we use the midpoint rule on the first interval due to the singularity in $K$.
    \item \textbf{Trapezoidal exact singularity:} We determine the first root $t_1 = t>0$ of $K(t) - K^N(t) = 0$. On $[0, t_1]$ we integrate exactly, on $[t_1, T]$ we use the trapezoidal rule.
    \item \textbf{Intersections:} We determine all roots $(t_i)_{i=0}^k$ when $K(t) - K^N(t) = 0$ (where $t_0 = 0$ and $t_k = T$), and compute the error exactly using that $$\int_0^T \abs{K(t) - K^N(t)} \sdd t = \sum_{i=0}^{k-1} \abs{\int_{t_i}^{t_{i+1}} \left(K(t) - K^N(t)\right) \sdd t},$$ and the fact that we can integrate $K$ and $K^N$ in closed form.
\end{enumerate}

Let us now describe how we find the roots of $K(t) - K^N(t)$, where we remark that there are at most $O(N)$ such roots. Assume that we are given a relative error tolerance $\tol$, and that we only have to compute the error up to this relative error tolerance.

To find the roots, we exploit that both $K$ and $K^N$ are completely monotone. In particular, this implies the inequalities $$K(t_0) + K'(t_0)(t-t_0) \le K(t) \le K(t_0),$$
$$K(t) \le K(t_0) + K'(t_0)(t-t_0) + \frac{1}{2}K''(t_0)(t-t_0)^2$$
for $t \ge t_0$, with similar inequalities for $K^N$.

We then find the crossings inductively, by moving from $t=0$ to $t=T$. First, obviously, $K(0) > K^N(0)$. Next, we solve $$K(t) = K^N(0).$$ This has an explicit solution $\widehat{t}>0$, and thanks to complete monotonicity, there was no crossing on $[0, \widehat{t})$.

Suppose now that we are currently at time $s\in(0,T)$, and we want to take the next step. Assume without loss of generality that $K(s) \ge K^N(s).$ In the other case, we just swap the roles of $K$ and $K^N$. We now differentiate between two cases.

\begin{enumerate}
    \item $\frac{K(s) - K^N(s)}{K(s)} > \tol.$ In this case, we are rather far away from the next crossing. We then take the next point $t$ as the solution $t>s$ of the equation $$K(s) + K'(s)(t-s) = K^N(s) + (K^N)'(s)(t-s) + \frac{1}{2}(K^N)''(s)(t-s)^2.$$ This quadratic equation has only one explicit solution $t>s$, and thanks to complete monotonicity, the kernels did not cross on $[s,t)$. 
    \item $\frac{K(s) - K^N(s)}{K(s)} \le \tol.$ In this case, we might be very close to a crossing. 
    If we proceeded exactly as in Case 1, the steps would become arbitrarily small if a crossing is ahead. To ensure that the algorithm terminates at some point, we always need to take a step that is lower bounded by some constant. We achieve this by choosing $t>s$ such that $$\sup_{\tau\in[s, t]}\abs{\frac{K(s) - K^N(s)}{K(s)} - \frac{K(\tau) - K^N(\tau)}{K(\tau)}} \le \textup{TOL}.$$ This in particular ensures that the relative error on $[s,t]$ is bounded by $2\textup{TOL}$, so we again do not overshoot. The above condition can be phrased as two inequalities without the absolute value sign. We then approximate $K(\tau)$ and $K^N(\tau)$ by their zeroth or first order Taylor polynomial (depending on the direction of the inequality). This yields two linear inequalities, and we choose $t$ to be the largest value satisfying both inequalities. Therefore, in the resulting interval $[s,t]$, the relative error between the kernels is bounded by $2\textup{TOL}$.
\end{enumerate}

We have thus given an algorithm on how to travel along the interval $[0, T]$.
It is not difficult to show that the algorithm reaches the final point $T$ in finitely many steps.
Next, we describe how we determine where the crossings of $K(t) - K^N(t)$ are.

Assume again that we are currently at some point $s\in[0, T)$, that $K(s) \ge K^N(s)$, and that the next point we step to is $t\in(s, T]$. We differentiate between two cases.

\begin{enumerate}
    \item If $K(t) > K^N(t)$, we say that there were no crossings on the interval $(s, t]$.
    \item If $K(t) \le K^N(t)$, we say that there is a crossing at $(s+t)/2$, an no other crossing in $(s, t]$.
\end{enumerate}

In Tables \ref{tab:CompareL1ComputationMethods2} we compare the three methods for computing the $L^1$-errors. We see that the ``Trapezoidal'' algorithm is basically useless, while ``Intersections'' clearly outperforms ``Trapezoidal exact singularity''.

\begin{table}[!htbp]
\centering
\resizebox{\textwidth}{!}{\begin{tabular}{c|c|c|c|c|c|c|c|c|c}
 & \multicolumn{3}{c|}{Trapezoidal} & \multicolumn{3}{c|}{Trapezoidal exact singularity} & \multicolumn{3}{c}{Intersections}\\ \hline
$\tol$ & Rel. error & $n$ & Time (ms) & Rel. error & $n$ & Time (ms) & Rel. error & $n$ & Time (ms)\\ \hline
1 & $8.99\cdot 10^{-1}$ & 100 & 0.35 & $3.59\cdot 10^{-1}$ & 100 & 1.30 & $8.52\cdot 10^{-2}$ & 25 & 2.81\\
$10^{-1}$ & $7.16\cdot 10^{-1}$ & 800 & 0.87 & $1.66\cdot 10^{-2}$ & $1\cdot 10^5$ & 34.3 & $5.74\cdot 10^{-3}$ & 73 & 6.38\\
$10^{-2}$ & $2.52\cdot 10^{-2}$ & $5\cdot 10^7$ & 14270 & $1.14\cdot 10^{-3}$ & $8\cdot 10^5$ & 220 & $1.17\cdot 10^{-4}$ & 346 & 20.6\\
$10^{-3}$ &&&& $7.24\cdot 10^{-5}$ & $3\cdot 10^6$ & 915 & $6.48\cdot 10^{-7}$ & 1990 & 119\\
$10^{-4}$ &&&& $2.27\cdot 10^{-5}$ & $7\cdot 10^6$ & 1774 & $9.42\cdot 10^{-9}$ & 7120 & 416\\
$10^{-5}$ &&&& $1.39\cdot 10^{-6}$ & $3\cdot 10^7$ & 6992 & $1.40\cdot 10^{-10}$ & 5358 & 321\\
$10^{-6}$ &&&& $8.73\cdot 10^{-8}$ & $1\cdot 10^8$ & 35983 & $1.65\cdot 10^{-12}$ & 5195 & 321\\
$10^{-7}$ &&&& $2.22\cdot 10^{-8}$ & $2\cdot 10^8$ & 122490 & $4.63\cdot 10^{-13}$ & 4264 & 421\\
$10^{-8}$ &&&&&&& $1.16\cdot 10^{-13}$ & 4124 & 260\\
$10^{-9}$ &&&&&&& $1.20\cdot 10^{-13}$ & 3704 & 230
\end{tabular}}
\caption{Relative errors, number of kernel evaluations $n$ and computational time in ms for computing the $L^1$-error for given relative error tolerances $\tol$. The Markovian approximation $K^N$ stems from the BL2 algorithm with $N=10$, $H=0.05$, and $T=1$. The reference $L^1$-error was computed using the intersections algorithm with $\tol = 10^{-10}$ and was about $0.26\%$.}
\label{tab:CompareL1ComputationMethods2}
\end{table}

\section{The algorithm BL2}\label{sec:BL2}

Here, we give a rough description on the algorithm BL2 which usually achieved the best results in Section \ref{sec:Numerics}. We remark that the actual implementation additionally contains many minor tweaks to ensure better numerical stability. The implementation can be found in \url{https://github.com/SimonBreneis/approximations_to_fractional_stochastic_volterra_equations} in the function \url{european_rule} in \url{RoughKernel.py}.

Comparing the algorithms OL1 and OL2, we note the following differences: OL1 has better convergence rates and smaller nodes, while OL2 is faster to compute (due to the explicit error formula in \cite[Proposition 2.11]{bayer2023markovian}) and may outperform OL1 for large $H$ (e.g. $H=0.1$) and small $N$, see e.g. Table \ref{tab:ErrorsIVSmiles}. The idea of BL2 is to combine the advantages of OL1 and OL2 into one algorithm.

The reason for the bad asymptotic performance and huge nodes of OL2 is the overemphasis of the singularity of $K(t)$ at $t=0$ due to the square (which is less relevant for large $H$ and small $N$, explaining the good results of OL2 in this setting). We may eliminate this problem by optimizing the $L^2$-norm under the conditions that the nodes $x$ stay bounded by some constant $L$, i.e. $x\le L$. This is also where the name BL2 (Bounded $L^2$) comes from. We denote by \url{opt}$(H, T, N, L)$ the algorithm that optimizes the $L^2$-norm of the kernel $K$ with Hurst parameter $H$ on the interval $[0, T]$, using $N$ nodes of size at most $L$. This algorithm \url{opt} returns the minimized $L^2$-approximation error and the corresponding quadrature rule.

\begin{algorithm}
\caption{BL2}
\begin{algorithmic}
    \Require $H\in (0, 1/2),\ T > 0,\ N\in \N$
    \Ensure A quadrature rule with $N$ nodes
    \If{$N=1$}
    \State $\textup{err}, \textup{rule} \gets \textup{opt}(H, T, N, \infty)$
    \State \Return rule
    \EndIf
    \State $L \gets 1$
    \State $\textup{err}_1,\textup{rule}_1 \gets \textup{opt}(H, T, N-1, L)$
    \State $\textup{err}_2,\textup{rule}_2 \gets \textup{opt}(H, T, N, L)$
    \While{$\textup{err}_2 > (1-\eps) \textup{err}_1$}
    \State $L \gets Lq$
    \State $\textup{err}_1,\textup{rule}_1 \gets \textup{opt}(H, T, N-1, L)$
    \State $\textup{err}_2,\textup{rule}_2 \gets \textup{opt}(H, T, N, L)$
    \EndWhile
    \State \Return $\textup{rule}_2$
\end{algorithmic}
\end{algorithm}

It remains to find a good bound $L$. The algorithm BL2 given above does this by choosing some small initial value of $L$ (say $L=1$), and iteratively comparing the errors when using $N$ mean-reversions in $[0, L]$, with using $N-1$ mean-reversions. If the improvement is too small (a relative improvement of less than $\eps$), we increase $L$ by a factor $q>1$. The first time the improvement is significantly large (a relative improvement bigger than $\eps$), we stop and return the current optimal quadrature rule.

It remains to choose the hyperparameters $q$ and $\eps$. After lots of trial and error, the choice $q\in[1.05, 1.15]$ and (surprisingly) $\eps = 0$ was numerically found to be optimal, though we remark that especially the choice of $\eps$ may depend on the optimization algorithm used in \url{opt}.

\printbibliography

\end{document}